	\def\version{arxiv}
\def\draftmode{false}
\newcommand{\ifarxiv}[2]{\ifthenelse{\equal{\version}{arxiv}}{#1}{#2}}
\newcommand{\ifdraft}[2]{\ifthenelse{\equal{\draftmode}{true}}{#1}{#2}}
\newcommand{\ifsiam}[2]{\ifthenelse{\equal{\version}{proceedings-siam}}{#1}{#2}}
\newcommand{\ifsubmission}[2]{\ifthenelse{\equal{\version}{submission}}{#1}{#2}}
\tikzset{
	external/system call={%
		lualatex \tikzexternalcheckshellescape -halt-on-error %
			-interaction=batchmode -jobname "\image" "\texsource"%
	},
}
\tikzset{external/export=false} % Only export manually marked pictures, enable for costly pictures
\pgfplotsset{compat=1.12}
\tikzset{%
	every picture/.style={
%		semithick,
		>=stealth,
		font={\sffamily\small},
	},
}
\def\expandafter\tikz@node@finish\expandafter{\expandafter\endgroup\expandafter\endpgfonlayer\tikz@node@finish}%
\newcommand\currentcoordinate{\the\tikz@lastxsaved,\the\tikz@lastysaved}
\newcommand\currentx{\the\tikz@lastxsaved}
\newcommand\currenty{\the\tikz@lastysaved}
\tikzset{%
	highlightbox/.style={
		draw=softhighlight,
		very thick,
		rounded corners=3pt,
	},
}
\pgfplotsset{compat=1.12}
\newcommand*\EndIf{\End\li\kw{end if} }
\newcommand*\EndWhile{\End\li\kw{end while} }
\newcommand*\New{\kw{new}\ }
\newtheoremstyle{proofstyle}%
  {\item[\theorem@headerfont\hskip\labelsep ##1\theorem@separator]}%
  {\item[\theorem@headerfont\hskip\labelsep ##1 of ##3\theorem@separator]}
\theoremstyle{break}
	\newtheorem{theorem}{Theorem}[section]
\theoremstyle{plain}
	\newtheorem{lemma}[theorem]{Lemma}
\theoremstyle{plain}
\newtheorem{remark}{Remark}[section]
\theoremstyle{proofstyle}
	\newtheorem{proof}{Proof}
\newenvironment{thmenumerate}[2][]{%
	\begin{enumerate}[
		label={\makebox[\widthof{(a)}][c]{\textup{(\alph*)}}},
		ref={\ref{#2}\kern.1em--\kern.1em(\alph*)},
		itemsep=0pt,
		#1
	]%
}{%
	\end{enumerate}%
}
\pgfplotsset{
	legend cell align=left,
	every axis legend/.append style={%
		fill=backgroundshading!50!white,
		rounded corners=1pt,
	},
	xlabel near ticks,
	ylabel near ticks,
}
\pgfplotsset{
	every y tick scale label/.style={
		at={(-.03,1)},yshift=-1ex,anchor=south east,inner sep=0pt
	}
}
\def\mathematicaPlotAddPlot{% [options] x y label
	\@ifstar\@mathematicaPlotAddPlot\@@mathematicaPlotAddPlot%
}
\newcommand\@@mathematicaPlotAddPlot[4][]{% [options] x y label
	\addplot+[#1] table[x=#2,y=#3] {\tablefile} ;
	\ifthenelse{\equal{#4}{}}{}{\addlegendentry{#4}}
}
\newcommand\@mathematicaPlotAddPlot[4][]{% [options] x y label
	\addplot[#1] table[x=#2,y=#3] {\tablefile} ;
	\ifthenelse{\equal{#4}{}}{}{\addlegendentry{#4}}
}
\newdimen\makeboxdimen
\newcommand\makeboxlike[3][l]{%
\setbox0=\hbox{#2}%
\global\makeboxdimen=\wd0%
\setbox1=\hbox{\makebox[\makeboxdimen][#1]{%
\makebox[0pt][#1]{#3}%
}}%
\ht1=\ht0%
\dp1=\dp0%
\box1%	
}
\newcommand\plaincenter[1]{%
	\mbox{}\hfill#1\hfill\mbox{}%
}
\newcounter{inlineenum}
\newenvironment{inlineenumerate}{%
	\setcounter{inlineenum}{0}%
	\def\item{%
		\stepcounter{inlineenum}%
		(\arabic{inlineenum})
	}%
}{%
}
\newcommand*\ie{\mbox{i.\hspace{.2ex}e.}}
\newcommand*\eg{\mbox{e.\hspace{.2ex}g.}}
\newcommand*\wrt{\mbox{w.\hspace{.2ex}r.\hspace{.2ex}t.}\xspace}
\newcommand\parenthesisclause[1]{%
	\unskip%
	\nobreak\hspace{.2ex}\mbox{---}\hspace{.2ex}%
	#1%
	\nobreak\hspace{.2ex}\mbox{---}\hspace{.2ex}%
	\ignorespaces%
}
\newcommand*\parenthesissign{%
	\unskip%
		\hspace{.2ex}---\hspace{.2ex}%
	\ignorespaces%
}
\newcommand\literalquote[3]{%
	{%
		\def\ellipsis{[\,\dots]\xspace}%
		\def\comment##1{[##1]}%
		\def\cite##1{##1}%
		\edef\citekey{#2}%
		{\itshape``\ignorespaces#1\unskip''}
		(%
			\ifthenelse{\equal{#3}{}}{%
			\citet{\citekey}%
			}{%
			\citet{\citekey}%
			, p.\,#3%
			}%
		)%
	}%
	\xspace%
}
\newcommand\literalquotegerman[3]{%
	{%
		\def\ellipsis{[\,\dots]\xspace}%
		\def\comment##1{[##1]}%
		\def\cite##1{##1}%
		\edef\citekey{#2}%
		{\selectlanguage{ngerman}%
		\itshape,,\ignorespaces#1\unskip``}
		(%
			\ifthenelse{\equal{#3}{}}{%
			\citet{\citekey}%
			}{%
			\citet{\citekey}%
			, p.\,#3%
			}%
		)%
	}%
	\xspace%
}
\newcommand\literalquotep[3]{%
	{%
		\def\ellipsis{[\,\dots]\xspace}%
		\def\comment##1{[##1]}%
		\def\cite##1{##1}%
		\edef\citekey{#2}%
		{\itshape``\ignorespaces#1\unskip''}
		(%
			\ifthenelse{\equal{#3}{}}{%
			\citep{\citekey}%
			}{%
			\citep{\citekey}%
			, p.\,#3%
			}%
		)%
	}%
	\xspace%
}
\newcommand{\ESymbol}{\mathbb{E}}
\newcommand\R{\mathbb R}
\newcommand\N{\mathbb N}
\newcommand\Z{\mathbb Z}
\newcommand\Q{\mathbb Q}
\newcommand\Oh{O}
\def\.{\mskip1mu}
\DeclareMathOperator\ld{ld}
\newcommand{\ProbSymbol}{\ensuremath{\mathbb{P}}}
\providecommand{\given}{}
\DeclarePairedDelimiterXPP\Prob[1]{\ProbSymbol}[]{}{%
	\renewcommand\given{\nonscript\:\delimsize\vert\nonscript\:\mathopen{}}%
	#1%
}
\DeclarePairedDelimiterXPP\E[1]{\ESymbol}[]{}{%
	\renewcommand\given{\nonscript\:\delimsize\vert\nonscript\:\mathopen{}}%
	#1%
}
\DeclarePairedDelimiterXPP\Eover[2]{\ESymbol_{#1}}[]{}{%
	\renewcommand\given{\nonscript\:\delimsize\vert\nonscript\:\mathopen{}}%
	#2%
}
\providecommand{\Prob}{} % hack for syntax highlighting ...
\providecommand{\E}{} % hack for syntax highlighting ...
\providecommand{\Eover}{} % hack for syntax highlighting ...
\newcommand\ui[2]{#1^{\smash{(}#2\smash{)}}}
\newcommand\eqdist{	
	\mathchoice{%display
		\mathrel{\overset{\raisebox{0ex}{$\scriptstyle \cal D$}}=}%
	}{%text
		\mathrel{\like{=}{%
			\overset{\raisebox{-1ex}{$\scriptscriptstyle \cal D$}}=%
		}}%
	}{%script
		\mathrel{\overset{\cal D}=}%
	}{%scriptscript
		\mathrel{\overset{\cal D}=}%
	}%
}
\newcommand\like[3][c]{%
	\mathchoice{%dis
		\makeboxlike[#1]{%
			\ensuremath{\displaystyle\relax#2}%
		}{%
			\ensuremath{\displaystyle\relax#3}%
		}%
	}{%txt
		\makeboxlike[#1]{%
			\ensuremath{\textstyle\relax#2}%
		}{%
			\ensuremath{\textstyle\relax#3}%
		}%
	}{%sc
		\makeboxlike[#1]{%
			\ensuremath{\scriptstyle\relax#2}%
		}{%
			\ensuremath{\scriptstyle\relax#3}%
		}%
	}{%scsc
		\makeboxlike[#1]{%
			\ensuremath{\scriptscriptstyle\relax#2}%
		}{%
			\ensuremath{\scriptscriptstyle\relax#3}%
		}%
	}
}
\newcommand\uniform{\mathcal U}
\newcommand\bernoulli{\mathrm B}
\newcommand\multinomial{\mathrm{Mult}}
\newcommand\binomial{\mathrm{Bin}}
\newcommand\betadist{\mathrm{Beta}}
\newcommand\betaBinomial{\mathrm{BetaBin}}
\newcommand\dirichletMultinomial{\mathrm{DirMult}}
\DeclarePairedDelimiterXPP\distFromWeights[1]{\mathcal D}(){}{#1}
\providecommand\distFromWeights{} %hack for syntax highlighting
\newcommand\BetaFun{\mathrm B}
\renewcommand\given{\mathbin{\mid}}
\newcommand\harm[1]{\ensuremath{H_{#1}}}
\newcommand\ce{\colonequals}
\newcommand{\surroundedmath}[3]{% #1=mathrel/mathbin/etc #2=spacing #3=symbol
	\mathchoice{%display
		#1{#2{#3}#2}%
	}{%text
		#1{#3}%
	}{%script
		#1{#3}%
	}{%scriptsript
		#1{#3}%
	}%
}
\newcommand\rel[1]{\surroundedmath{\mathrel}{\:}{#1}}
\newcommand\wrel[1]{\surroundedmath{\mathrel}{\;}{#1}}
\newcommand\wwrel[1]{\surroundedmath{\mathrel}{\;\;}{#1}}
\newcommand\bin[1]{\surroundedmath{\mathbin}{\:}{#1}}
\newcommand\wbin[1]{\surroundedmath{\mathbin}{\;}{#1}}
\newcommand{\eqwithref}[2][c]{%
	\relwithref[#1]{#2}{=}%
}
\newcommand{\relwithref}[3][c]{%
	\mathrel{\underset{\mathclap{\makebox[\widthof{$=$}][#1]{\scriptsize\wref{#2}}}}{#3}}%
}
\newcommand{\relwithtext}[3][c]{%
	\mathrel{\underset{\mathclap{\makebox[\widthof{$=$}][#1]{\scriptsize#2}}}{#3}}%
}
\newcommand\indicator[1]{\indicatornobraces{\{#1\}}}
\newcommand\indicatornobraces[1]{\mathds 1_{#1}}
\newcommand\characteristicvector[2][]{%
	\ifthenelse{\equal{#1}{}}{%
		\mathds1_{#2}%
	}{%
		\ui{\mathds1_{#2}}{#1}%
	}
}
\newcommand\entropy[1][]{%
	\ensuremath{%
		\mathcal H_{#1}%
	}\xspace%
}
\newcommand\discreteEntropy[1][]{%
	\ensuremath{
		\entropy[] \ifthenelse{\equal{#1}{}}{ }{ (#1) }
	}\xspace%
}
\newcommand\contentropy[1][]{%
	\ensuremath{%
		\entropy[\ln]%
		\ifthenelse{\equal{#1}{}}{ }{ (#1) }%
	}\xspace%
}
\newcommand\binaryEntropy[1][]{%
	\ensuremath{
		\entropy[\ld] \ifthenelse{\equal{#1}{}}{ }{ (#1) }%
	}\xspace%
}
\newcommand\convAS{	
	\mathchoice{%display
		\mathrel{\overset{\raisebox{0ex}{\text{a.s.}}}\longrightarrow}%
	}{%text
		\mathrel{\like{\longrightarrow}{%
			\overset{\raisebox{-1ex}{\kern-1pt\scriptsize\text{a.s.}}}\longrightarrow%
		}}%
	}{%script
		\mathrel{\overset{\text{a.s.}}\longrightarrow}%
	}{%scriptscript
		\mathrel{\overset{\text{a.s.}}\longrightarrow}%
	}%
}
\newcommand\convD{	
	\mathchoice{%display
		\mathrel{\overset{\raisebox{0ex}{$\scriptstyle \cal D$}}\longrightarrow}%
	}{%text
		\mathrel{\like{\longrightarrow}{%
			\overset{\raisebox{-1ex}{$\scriptscriptstyle \cal D$}}\longrightarrow%
		}}%
	}{%script
		\mathrel{\overset{\cal D}\longrightarrow}%
	}{%scriptscript
		\mathrel{\overset{\cal D}\longrightarrow}%
	}%
}
\newcommand\ins{\mathrm{ins}}
\newcommand\del{\mathrm{del}}
\DeclarePairedDelimiterXPP\dirichletExpectation[2]{\ESymbol_{D(#2)}}[]{}{%
	\renewcommand\given{\nonscript\:\delimsize\vert\nonscript\:\mathopen{}}%
	#1%
}
\providecommand\dirichletExpectation{} %hack for syntax highlighting
\let\oldalign\align
\let\endoldalign\endalign
\newcommand*{\numberthis}{\stepcounter{equation}\tag{\theequation}}
\renewenvironment{align}{%
	\begingroup%
	\let\oldhalign\halign
	\def\halign{%
		\let\oldbreak\\%
		\def\nonnumberbreak{\nonumber\oldbreak*}%
		\def\\{%
			\@ifstar{\nonnumberbreak}{\oldbreak}%
		}%
		\oldhalign%
	}
	\oldalign%
}{%
	\endoldalign%
	\endgroup%
}
\newcommand{\Li}{\ensuremath{\mathcal{J}}\xspace}
\newlength\paragraphhspace
\let\oldparagraph\paragraph
\renewcommand\paragraph{%
    \@ifstar{\myparagraphStar}{\myparagraphNoStar}%
}
\newcommand\myparagraphStar[1]{%
	\oldparagraph*{#1.\hspace{\paragraphhspace}}%
}
\newcommand\myparagraphNoStar[2][]{%
	\ifthenelse{\equal{#1}{}}{%
		\oldparagraph[#2]{#2.\hspace{\paragraphhspace}}%
	}{%
		\oldparagraph[#1]{#2.\hspace{\paragraphhspace}}%
	}%
}
	\newlength\subsectionhspace
	\let\oldsubsection\subsection
	\renewcommand\subsection{%
	    \@ifstar{\mysubsectionStar}{\mysubsectionNoStar}%
	}
	\newcommand\mysubsectionStar[1]{%
		\oldsubsection*{#1.\hspace{\subsectionhspace}}%
	}
	\newcommand\mysubsectionNoStar[2][]{%
		\ifthenelse{\equal{#1}{}}{%
			\oldsubsection[#2]{#2.\hspace{\subsectionhspace}}%
		}{%
			\oldsubsection[#1]{#2.\hspace{\subsectionhspace}}%
		}%
	}
\colorlet{symmetriccolor}{black!10}
\colorlet{grayedout}{black!50}
\colorlet{nodecolor}{blue!70!black}
\colorlet{backbonecolor}{gray}
\colorlet{jumpcolor}{red!50!black}
\colorlet{endpointercolor}{green!50!black}
\tikzset{
	inner node/.style={
		circle,draw,inner sep=1pt,
		minimum size=1.7em,
		fill=backgroundshading!50,
	},
	leaf node/.style={
		draw,rectangle,minimum size=1.25em,inner sep=1pt,
		fill=backgroundshading!75,
	},
	inner node concise/.style={
		circle,draw,fill,minimum size=.8ex,inner sep=0pt,
	},
	leaf node concise/.style={
		draw,fill,rectangle,minimum size=.5ex,inner sep=0pt,
	},
	leaf node empty/.style={
		draw=none,fill=none,rectangle,minimum height=.5ex,minimum width=.15ex,
	},
	partitioning node/.style={
		draw,inner sep=1pt,minimum height=1.5em,minimum width=1.5em,
		rectangle,rounded corners=2pt,
		fill=backgroundshading!50,
	},
	graphs/comparison tree detailed/.style={
		extended binary tree layout,
		nodes={inner node},
		leaf/.style={leaf node},
		math nodes,
		level distance=0ex, sibling distance=0ex,
		sibling sep=.6em, level sep=.4em,
	},
	graphs/comparison tree detailed compact/.style={
		extended binary tree layout,
		nodes={inner node,minimum size=1.05em,inner sep=0pt},
		leaf/.style={leaf node,minimum size=.6em,font=\scriptsize},
		math nodes,
		level distance=0ex, sibling distance=0ex,
		sibling sep=.6em, level sep=.4em,
	},
	graphs/recursion tree detailed/.style={
		tree layout,
		significant sep=.5em,
		missing nodes get space,
		nodes={partitioning node},
		leaf/.style={partitioning node},
		math nodes,
		level distance=0ex, sibling distance=0ex,
		sibling sep=.8em, level sep=.8em,
	},
	graphs/ternary comparison tree detailed/.style={
		tree layout,
		significant sep=.5em,
		missing nodes get space,
		nodes={inner node,minimum size=1.8em,},
		leaf/.style={leaf node},
		math nodes,
		edge quotes={
			pos=.33,
			draw,circle,fill=backgroundshading,
			inner sep=0pt,
			minimum size=8pt,
			font=\tiny,
			anchor=center,
		},
		level distance=0ex, sibling distance=0ex,
		sibling sep=1.1em, level sep=1.3em,
	},
	graphs/comparison tree concise/.style={
		binary tree layout,
		significant sep=0ex,
		nodes={inner node concise},
		empty nodes,
		leaf/.style={leaf node empty},
		level distance=0ex, sibling distance=0ex,
		sibling sep=.8ex, level sep=.4ex,
	},
}
\tikzset{
	sn/.style={draw,nodecolor,thick, minimum size=5mm,inner sep=1pt},
	ino/.style={draw=none,inner sep=2pt},
	new/.style={sn,fill=red,minimum width=3mm},
	backbone/.style={thick,->,backbonecolor},
	jumppointer/.style={very thick,->,jumpcolor},
	jumppointer bend/.style={jumppointer,bend left},
	endpointer/.style={semithick,dotted,->,endpointercolor},
	minBST internal/.style={sn,circle},
	minBST min/.style={
		sn,circle,
		minimum size=3mm,inner sep=1.5pt,
		green!50!black,fill=green!15,
		font=\footnotesize,
	},
	minBST leaf/.style={
		chamfered rectangle,
		chamfered rectangle sep=.7pt,
		sn,minimum size=4mm,minimum width=5.5mm,backbonecolor!80!black,
	},
	minBST leaf empty/.style={minBST leaf},
	minBST min edge/.style={green!50!black},
	minBST left edge/.style={->,thick,black},
	minBST right edge/.style={->,thick,jumpcolor},
}
\def\mydots{\xleaders\hbox to.5em{\hfill.\hfill}\hfill}
\newlength\tmpLenNotations
\newenvironment{notations}[1][10em]{%
	\small
	\newcommand\notationentry[1]{%
		\settowidth\tmpLenNotations{##1}%
		\ifthenelse{\lengthtest{\tmpLenNotations > \labelwidth}}{%
			\parbox[b]{\labelwidth}{%
				\makebox[0pt][l]{##1}\\%
			}%
		}{%
			\mbox{##1}%
		}%
		\mydots\relax%
	}%
	\begin{list}{}{%
		\setlength\labelsep{0em}%
		\setlength\labelwidth{#1}%
		\setlength\leftmargin{\labelwidth+\labelsep+1em}%
	}
	\ifdraft{%
		\newcommand\notation[1]{%
			\item[{##1}] \marginpar{%
				\adjustbox{scale={.55}{1},outer}{%
					\color{brown!80}%
					\scriptsize$\triangleright$\;\tiny\texttt{\detokenize{##1}}%
				}%
			}%
		}%
	}{%
		\newcommand\notation[1]{\item[{##1}]}%
	}%
	\raggedright
}{%
	\end{list}
}
\let\oldthebibliography\thebibliography
\renewcommand\thebibliography[1]{%
	\oldthebibliography{#1}%
	\pdfbookmark[1]{References}{}%
	\markright{References}
	\let\path\nolinkurl%
}
\newsavebox\tabletmp%
\newlength\tmpheight%
\newlength\tmpdepth%
\let\epsilon\varepsilon
\def\DMTvarEq{=}
	\definecolor{refkey}{gray}{.5}
	\colorlet{labelkey}{green!50!black!50}
\newsavebox\citehrefbox
\newsavebox\tmpbox
\newcommand\extendedversion{extended online version\xspace}
\def\mytitle{Median-of-$k$ Jumplists and Dangling-Min BSTs}
\def\myfunding{%
	The last author is supported by the 
	Natural Sciences and Engineering Research Council of Canada 
	and the Canada Research Chairs Programme.
}
	\title{\mytitle\thanks{\strut\myfunding}}
	\title{\boldmath\mytitle\thanks{\myfunding}}
	\author{%
		Markus E.\ Nebel%
		\thanks{%
			Technische Fakultät, Universität Bielefeld, Germany.\protect\\
			\texttt{nebel\,@\,techfak.uni-bielefeld.de}%
		}
	\and
		Elisabeth Neumann%
		\thanks{%
			Carl-Friedrich-Gauß-Fakultät, 
			\protect\\
			Technische Universität Braunschweig, Germany.
			\protect\\
			\texttt{e.neumann\,@\,tu-braunschweig.de}%
		}
	\and 
		Sebastian Wild%
		\thanks{%
			David R.\ Cheriton School of Computer Science,
			University of Waterloo, Canada.
			Email: \texttt{wild\,@\,uwaterloo.ca}%
		}
	}
\date{\small\today}
\begin{document}

\maketitle

\begin{abstract}
\ifsubmission{%
	~\\\noindent\textbf{\textsf{Abstract:}\;}\\
}{}
We extend randomized jumplists introduced by Brönnimann, Cazals, and Durand~\cite{BronnimannCazalsDurand2003}
to choose jump-pointer targets as median of a small sample for better search costs,
and present randomized algorithms with expected $\Oh(\log n)$ time complexity 
that maintain the probability distribution of jump pointers 
upon insertions and deletions.
We analyze the expected costs to search, insert and delete a random element,
and we show that omitting jump pointers in small sublists hardly affects search costs,
but significantly reduces the memory consumption.

We use a bijection between jumplists and ``dangling-min BSTs'', 
a variant of (fringe-balanced) binary search trees for the analysis.
Despite their similarities, some standard analysis techniques 
for search trees fail for dangling-min trees (and hence for jumplists).
\end{abstract}

\section{Introduction}

Jumplists were introduced by Brönnimann, Cazals, and Durand~\cite{BronnimannCazalsDurand2003}
as a simple randomized comparison-based dictionary implementation.
They allow iteration over the stored elements in \emph{sorted} order
and supports queries and updates in expected logarithmic time.
The core is a sorted (\mbox{singly-})\,linked listed augmented with 
\emph{jump pointers}, \ie, shortcuts that speed up searches.
Jump-pointers are required to be well-nested, \ie, they may not cross.
This allows binary-search-like navigation.
\wref{fig:typical-jumplist-n30-k1-w2} shows an exemplary jumplist;
a detailed definition is deferred to~\wref{sec:jumplist-definitions}.

\begin{figure}
	\plaincenter{\includegraphics[width=.9\linewidth]{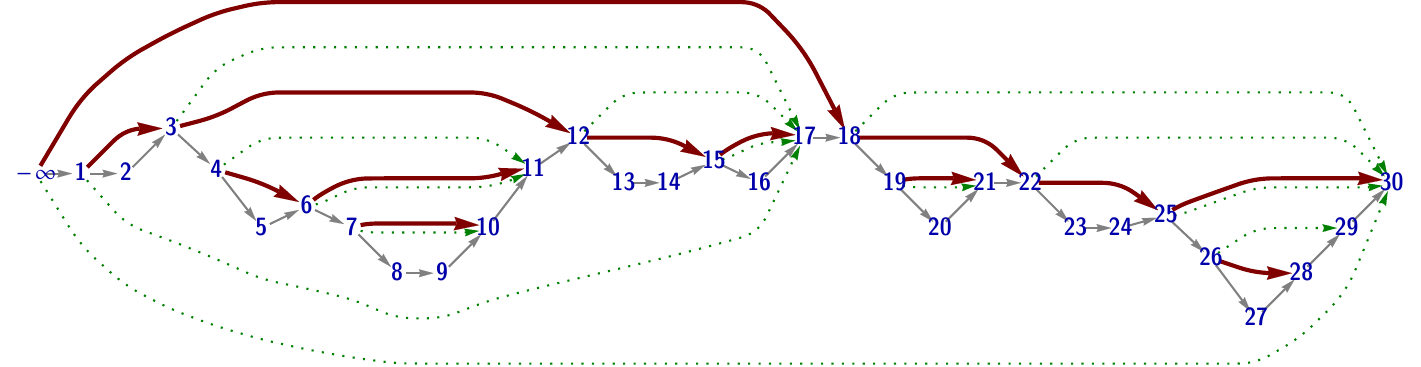}}
	\caption{%
		A jumplist on $n=30$ keys (with $k=1$ and $w=2$).
		Gray arrows are backbone links, thick red arrows are jump pointers.
		Dotted green arrows delimit a node's conceptual sublist; (they are
		not stored).%
	}
	\vspace{-\smallskipamount}
	\label{fig:typical-jumplist-n30-k1-w2}
\end{figure}

If all jump pointers point to the middle of their sublist,
we obtain perfect binary search,
but we need a rule that is also efficiently maintainable upon insertions and deletions.
Brönnimann, Cazals, and Durand~\cite{BronnimannCazalsDurand2003} proposed a \emph{randomized} solution: 
jump pointers invariably have a uniform distribution over their sublist,
\ie, the first jump pointer equally likely points to any element
and thereby divides the list in two parts, the next- and jump-sublists.
Both follow the same rule recursively;
since pointers may not cross, they can do so independently.

In this article, we generalize jumplists to use a more balanced distribution: 
each jump pointer points to the \emph{median of a small sample of $k$ elements} 
of its sublist.
(The original jumplists correspond to $k=1$.)
Building on the algorithms from~\cite{BronnimannCazalsDurand2003} 
we present $\Oh(\log n)$ expected-time insertion and deletion algorithms 
for median-of-$k$ jumplists that maintain this more balanced distribution.
Here $n$ counts the number of keys currently stored.
A larger $k$ balances the structure more rigidly which improves searches,
but makes the cleanup after updates more expensive.
Our main contribution is an analysis of median-of-$k$ jumplists that 
precisely quantifies the influence of $k$ on searches, insertions and deletions. 

We also introduce a novel search strategy (named \emph{spine search}) 
that reduces the number of needed key comparisons significantly,
and we suggest a further modification of jumplists: 
for sublists smaller than a threshold $w$, we omit the jump pointers altogether.
This allows to trade space for time: 
elements in these small sublists do not have to store a jump pointer, 
but the corresponding subfile can only be searched sequentially.
We show that this saves a constant fraction of the pointers
while affecting expected search costs only by an additive constant.

\paragraph{Outline of the paper}
In the remainder of the introduction we
summarize related work.
\wref{sec:preliminaries} contains common notation and preliminaries used later.
In \wref{sec:jumplist-definitions}, we define jumplists. 
We present our spine search strategy in \wref{sec:spine-search}.
\wref{sec:median-of-k-jumplists-def} introduces the median-of-$k$ extension,
and \wref{sec:insert-delete} describes the insertion and deletion algorithms.
Our analysis is given in \wref{sec:analysis}, 
and we conclude the paper with a discussion of the results (\wref{sec:conclusion}).
\ifsiam{%
	There is an \extendedversion of this paper available as arxiv preprint
	that contains detailed descriptions of all algorithms
	and some missing proofs.%
}{%
	The appendix contains a list of used notations,
	as well as details on the operations and omitted parts of the analysis.%
}

\subsection{Related Work}
\label{sec:related-work}
(Unbalanced) binary search trees (BSTs) perform close to optimal 
on average and with high probability 
when keys are inserted in random order~\cite{Knuth1998,Mahmoud1992evolution}.
A standard approach is to enforce the average behavior through randomization.
The most direct application of this paradigm is given by
Martínez and Roura~\cite{MartinezRoura1998} who devised efficient 
randomized insert and delete operations 
that maintain the shape distribution of random insertions.
The idea also works when duplicate keys are allowed~\cite{Pasanen2010}.

Randomized BSTs store subtree sizes for maintaining the distribution.
The \emph{treaps} of Seidel and Aragon~\cite{SeidelAragon1996} instead store a 
random priority with each node. 
Treaps remain in random shape by enforcing a heap order \wrt the random priorities.
Their performance characteristics are very similar to randomized BSTs.

Unless further memory is used, BSTs do not offer $O(1)$ time successor queries.
Like jumplists, Pugh's skip lists~\cite{Pugh1990} are augmented, sorted linked lists, 
so successors are found by following one pointer.
Skip lists extend the list elements by towers of pointers of different heights, where
each tower cell points to the successor among all element of at least this height.
With geometrically distributed heights, operations run in $\Oh(\log n)$ expected time 
with $\Oh(n)$ extra pointers in expectation.
The varying tower heights can be inconvenient;
this originally motivated the introduction of jumplists.
For skip lists, there is a direct and transparent bijection to BSTs~\cite{DeanJones2007}; 
this becomes more complicated for jumplists
(see \wref{sec:jumplist-definitions}).

The classic alternative to randomization are deterministically balanced BSTs~\cite{AnderssonFagerbergLarsen2005}.
Munro, Papadakis, and Sedgewick~\cite{MunroPapadakisSedgewick1992} transfer 
the height-balance rule of 2-3 trees to skip lists, and
Elmasry~\cite{Elmasry2005} applied the weight-balancing criterion of 
$\mathit{BB}[\alpha]$ trees~\cite{NievergeltReingold1973} to jumplists.
Note that the latter achieves logarithmic update time only in an amortized sense.

A constant-factor speedup over BSTs is achieved with \emph{fringe-balanced} BSTs.
The name originates from \emph{fringe analysis}, 
a technique used in their analysis~\cite{PobleteMunro1985}.%
\footnote{
	The concept
	appears under a handful of other names in the (earlier) literature:
	\emph{locally balanced search trees}~\cite{Walker1976},
	\emph{diminished trees}~\cite{Greene1983}, and
	\emph{iR / SR trees}~\cite{HuangWong1983,HuangWong1984}.
}
In a fringe-balanced search tree, 
leaves \emph{collect} keys in a buffer.
Once a leaf holds $k$ keys, it is \emph{split:}
the median of the $k$ elements is used as 
the key of a new node;
two new leaves holding the other elements form its subtrees.
Many parameters like expected path length, height and profiles of fringe-balanced 
trees have been studied~\cite{Drmota2009}.

\section{Notation and Preliminaries}
\label{sec:preliminaries}

\ifsiam{%
	We first introduce some notation.%
}{%
	We introduce some important notation here;
	\wref{app:notations} gives a comprehensive list.%
}
We use Iverson's bracket $[\mathit{stmt}]$ to mean $1$ if $\mathit{stmt}$ is true and $0$ otherwise.
Falling resp.\ rising factorial powers are denoted by $x^{\underline n}$ and $x^{\overline n}$;
for negative $n$ holds 
$x^{\underline n} = 1 / (x+1)^{\overline n}$ resp.\
$x^{\overline n} = 1 / (x+1)^{\underline n}$.
$\Prob{E}$ denotes the probability of event $E$ and $\E{X}$ the expectation
of random variable $X$. We write $X\eqdist Y$ to denote equality in distribution.

For a self-contained presentation, 
we list here a few mathematical preliminaries used in the analysis later.

\paragraph{Beta  distribution}
The \emph{beta distribution} has two parameters $\alpha,\beta\in\R_{>0}$ 
and is written as $\betadist(\alpha,\beta)$.
If  $X\eqdist\betadist(\alpha,\beta)$, we have $X \in (0,1)$ and $X$ has the density 
\begin{align*}
		f(x)
	&\wrel=
		\frac{x^{\alpha-1}(1-x)^{\beta-1} }{\BetaFun(\alpha,\beta)}
		,\qquad x\in(0,1),
\end{align*}
where $\BetaFun(\alpha,\beta) = \Gamma(\alpha)\Gamma(\beta) / \Gamma(\alpha+\beta)$
is the beta function.

The following lemma is helpful for 
computing expectations involving such beta distributed variables;
it is a special case of
\cite[\href{https://www.wild-inter.net/publications/html/wild-2016.pdf.html\#pf60}{Lemma 2.30}]{Wild2016}.
\begin{lemma}[``Powers-to-Parameters'']
\label{lem:powers-to-parameters}
	Let $X_1$ be a \(\betadist(\alpha_1,\alpha_2)\)
	distributed random variable and write $X_2=1-X_1$.
	Let further \(m_1,m_2 \in \Z^d\)
	with \(m_1,m_2 > -\alpha\) be given and abbreviate
	\(A \ce \alpha_1 + \alpha_2\) and \(M \ce m_1+m_2\). 
	Then for an arbitrary (real-valued, measurable) function~\(f\) 
	holds
	\begin{align*}
			\E[\big]{ X_1^{m_1} X_2^{m_2} \cdot f(X_1) }
		&\wwrel=
			\frac{\alpha_1^{\overline{m_1}} \alpha_2^{\overline{m_2}}}
				{A^{\overline M}} 
			\cdot \E[\big]{ f(\tilde X_1) }
			\,,
	\end{align*}
	where \(\tilde X_1\) is
	\(\betadist(\alpha_1+m_1,\alpha_2+m_2)\) distributed.
\qed\end{lemma}

\paragraph{Beta-Binomial Distribution}
The \emph{beta-bi\-no\-mi\-al distribution} is a discrete distribution 
with parameters $n\in\N_0$ and $\alpha,\beta \in \R_{>0}$.
It is written as $\betaBinomial(n,\alpha,\beta)$.
If $I\eqdist\betaBinomial(n,\alpha,\beta)$, we have $I\in[0..n]$ and 
\begin{align*}
		\Prob{I = i}
	&\wrel=
		\binom{n}{i} \frac{\BetaFun(\alpha+i,\beta+(n-i))}{\BetaFun(\alpha,\beta)}
		,\quad i \in \Z \;.
\end{align*}
(Recall that $\binom ni$ is zero unless $i\in[0..n]$.)
An alternative representation of the weights for $\alpha = t_1+1,\beta = t_2+1 \in\N$ with $k=t_1+t_2+1$ is
\begin{align*}
		\binom{n}{i} \frac{\BetaFun(\alpha+i,\beta+(n-i))}{\BetaFun(\alpha,\beta)}
	&\wrel=
		\frac{\binom{i+t_1}{t_1}\binom{n-i+t_2}{t_2}}{\binom{n+k}{k}},
\label{eq:dirichlet-multinomial-weights-binomial}
\end{align*}
which yields a combinatorial interpretation.
There is a second way to 
obtain beta-binomial distributed random variables: 
we first draw a random probability $D \eqdist \betadist(\alpha,\beta)$ according
to a beta distribution, and then use this as the success probability of a binomial distribution,
\ie, $I \eqdist \binomial(n; d)$ \emph{conditional} on $D = d$.
The beta-binomial distribution is thus also called a \emph{mixed} binomial distribution,
using a beta-distributed \emph{mixer} $D$;
this explains its name.

Since the binomial distribution is sharply concentrated, one can
use Chernoff bounds on beta binomial variables after conditioning on the beta distributed
success probability.
That already implies that $\betaBinomial(n,\alpha,\beta)/n$ converges to $\betadist(\alpha,\beta)$
(in a specific sense).
We can obtain the stronger error bounds given in the following lemma 
by directly comparing the probability density functions.

\savebox\citehrefbox{%
\bfseries\cite[\href{https://www.wild-inter.net/publications/html/wild-2016.pdf.html\#pf66}{Lem.\,2.38}]{Wild2016}%
}

\begin{lemma}[{Local limit law~%
	\usebox\citehrefbox%
}]
\label{lem:limit-law-beta-binomial}
	Let $(\ui In)_{n\in\N}$ be a sequence of random variables where
	$\ui In$ is distributed like $\betaBinomial(n,\alpha,\beta)$ for $\alpha,\beta\in\N_{\ge1}$.
	Then for $n\to\infty$ we have uniformly for $z\in(0,1)$ that 
	\begin{align}
			n \Prob[\big]{\ui Jn /n \in (z-\tfrac1n,z] }
		&\wwrel=
			f_B(z) \bin\pm \Oh(n^{-1}),
	\end{align}
	where $f_B(z) = z^{\alpha-1}(1-z)^{\beta-1} / \BetaFun(\alpha,\beta)$ is the density
	function of the beta distribution with parameters $\alpha$ and~$\beta$.
\qed\end{lemma}
	Since $f_B$ is a polynomial in $z$, it is in particular bounded and Lipschitz continuous
	in the closed domain $z \in [0,1]$.
	Hence, the local limit law also holds for the random variables $\ui{J}n = \ui{I}{n-d} + c$ for constants $c$ and $d$.
Further properties of the beta-binomial distribution are collected
in~\cite[\href{https://www.wild-inter.net/publications/html/wild-2016.pdf.html\#pf64}{\S\,2.4.7}]{Wild2016}.

\ifsiam{%
	We list the following expectations here for reference.
	The proofs are simple computations found in the \extendedversion.%
}{%
	The following expectations are listed here for reference;
	proofs are given in \wref{app:expectations}.%
}
\begin{lemma}
	\label{lem:binomial-negative-factorial-moments}
	Let $X \eqdist \binomial(n,p)$ for $n\in\N_0$ and $p\in(0,1]$. 
	Then we have with $q=1-p$ that
	\begin{align*}
			\E*{X^{\underline{-1}}} 
		&\wrel=
			n^{\underline{-1}} \cdot p^{-1} (1-q^{n+1})\,,
	\\
			\E*{X^{\underline{-2}}} 
		&\wrel\le
			n^{\underline{-2}} \cdot p^{-2}\;.
	\end{align*}
\end{lemma}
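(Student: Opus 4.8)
The plan is to evaluate both expectations directly from the binomial mass function and reduce each resulting sum to a single application of the binomial theorem after a standard index shift. The only inputs needed are elementary identities for binomial coefficients, so no machinery beyond the definitions is required.

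For the first moment I would start from $\E*{\frac1{X+1}} = \sum_{i=0}^n \frac1{i+1}\binom ni p^i q^{n-i}$ and apply the identity $\frac1{i+1}\binom ni = \frac1{n+1}\binom{n+1}{i+1}$, which follows immediately by writing both sides with factorials. Substituting $j = i+1$ turns the sum into $\frac1{(n+1)p}\sum_{j=1}^{n+1}\binom{n+1}{j}p^j q^{(n+1)-j}$. This is the full expansion of $(p+q)^{n+1}=1$ with only the $j=0$ term $q^{n+1}$ removed, so the bracket equals $1-q^{n+1}$, and we recover exactly $n^{\underline{-1}}p^{-1}(1-q^{n+1})$.

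The second moment is handled identically using $\frac1{(i+1)(i+2)}\binom ni = \frac1{(n+1)(n+2)}\binom{n+2}{i+2}$. After the shift $j = i+2$ the sum becomes $\frac1{(n+1)(n+2)p^2}\sum_{j=2}^{n+2}\binom{n+2}{j}p^j q^{(n+2)-j}$, i.e.\ $(p+q)^{n+2}=1$ with the $j=0$ and $j=1$ terms excised. The exact value is therefore $\frac{1 - q^{n+2} - (n+2)pq^{n+1}}{(n+1)(n+2)p^2}$. Since both $q^{n+2}$ and $(n+2)pq^{n+1}$ are nonnegative (as $p,q\ge0$), dropping them can only increase the fraction, giving the stated upper bound $n^{\underline{-2}}p^{-2}$.

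The calculations are routine; there is no genuine obstacle. The only points that warrant care are verifying the coefficient identity $\frac{1}{(i+1)\cdots(i+k)}\binom{n}{i} = \frac{1}{(n+1)\cdots(n+k)}\binom{n+k}{i+k}$, checking that $p \in (0,1]$ makes the powers of $p$ in the denominators legitimate, and bookkeeping the low-order boundary terms that are excised from the full binomial sum after each index shift — in particular noting that these excised terms have a definite (nonnegative) sign, so that the inequality in the second part points in the claimed direction.
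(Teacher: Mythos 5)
Your proposal is correct and follows essentially the same route as the paper: both rewrite $\frac{1}{(i+1)\cdots(i+m)}\binom{n}{i}$ as $\frac{1}{(n+1)\cdots(n+m)}\binom{n+m}{i+m}$, shift the summation index, apply the binomial theorem, and observe that the excised low-order terms are nonnegative (which yields the exact value for $m=1$ and the upper bound for $m=2$). The only cosmetic difference is that the paper treats $m=1$ and $m=2$ in a single unified computation, while you handle them separately.
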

\begin{lemma}
\label{lem:E-ln-D}
For $D\eqdist \betadist(t+1,t+1)$ we have (with $k=2t+1$)
\begin{align*}
		\E{\ln D}
	&\wwrel=
		\harm{t} - \harm{k}
		,
\\
		\E{D \ln D}
	&\wwrel=
		\frac12\bigl(\harm{t+1} - \harm{k+1}\bigr)
		.
\end{align*}
\end{lemma}
\paragraph{Hölder continuity}
A function $f:I\to \R$ defined on a bounded interval $I$ is 
Hölder continuous with exponent $h\in(0,1]$
when 
\[
	\exists C\;
	\forall x,y\in I\wrel:
		\bigl| f(x) - f(y) \bigr|
		\wrel\le 
		C |x-y|^h.
\]
Hölder continuity is a notion of smoothness 
that is stricter than (uniform) continuity, but slightly more liberal
than Lipschitz continuity (which corresponds to $h=1$).
$f:[0,1]\to\R$ with $f(z) = z \ln(1/z)$ is a stereotypical function
that is Hölder continuous (for any $h\in(0,1)$), but not Lipschitz.

For functions defined on a bounded domain,
Lipschitz continuity implies Hölder continuity
and Hölder continuity with exponent $h$ implies
Hölder continuity with exponent $h' < h$.
Recall that a real-valued function is Lipschitz 
if its derivative is bounded.

\subsection{The Distributional Master Theorem}
\label{app:DMT}

To solve the recurrences in \wref{sec:analysis},
we use the ``distributional master theorem'' (DMT)%
~\cite[\href{https://www.wild-inter.net/publications/html/wild-2016.pdf.html\#pf8a}{Thm.~2.76}]{Wild2016}, 
reproduced below for convenience.
It is based on Roura's continuous master theorem~\cite{Roura2001}, 
but reformulated in terms of distributional recurrences in an attempt to give 
the technical conditions and occurring constants in Roura's original formulation 
a more intuitive, stochastic interpretation.
We start with a bit of motivation for the latter.

The DMT is targeted at divide-and-conquer recurrences where the recursive parts
have a \emph{random} size. 
The average-case analyses of Quicksort and binary search trees are typical examples 
that lead to such recurrences.
Because of the random subproblem sizes, a traditional recurrence for expected costs
has to sum over all possible subproblem sizes, weighted appropriately.
That way, the direct correspondence between the recurrence and the algorithmic process 
is lost, in particular the number of recursive applications is no longer 
directly visible.

An alternative that avoids this is a \emph{distributional recurrence}
that describes the full distribution of costs.
The distribution for larger problem sizes is described by a 
``toll term'' (for the divide and/or combine step) 
plus the contributions of recursive applications.
Such a distributional formulation requires 
the toll costs and subproblem sizes to be stochastically 
independent of the recursive costs when conditioned on the subproblem sizes.
In typical applications, this is fulfilled when the studied algorithm 
guarantees that the subproblems on which it calls itself recursively
are of the same nature as the original problem.
Such a form of randomness preservation is also required for the analysis
using traditional recurrences.
We can thus use the distributional language to 
describe costs directly mimicking the structure of our algorithms in this paper.

The DMT allows us to compute an asymptotic approximation of the expected costs
directly from the distributional recurrence.
Intuitively speaking, it is applicable whenever the \emph{relative} subproblem sizes
of recursive applications converge to a (non-degenerate) limit distribution as $n\to\infty$
(in a suitable sense; see \weqref{eq:DMTwc-condition} below).
The local limit law provided by \wref{lem:limit-law-beta-binomial}
gives exactly such a limit distribution.

\savebox\citehrefbox{%
\bfseries\cite[\href{https://www.wild-inter.net/publications/html/wild-2016.pdf.html\#pf8a}{Thm.~2.76}]{Wild2016}%
}
\begin{theorem}[DMT \usebox\citehrefbox]
\label{thm:DMTwc}
	Let \((C_n)_{n\in\N_0}\) be a family of random variables that
	satisfies the distributional recurrence
	\begin{align}
	\label{eq:DMTwc-distributional-recurrence}
			C_n
		\wwrel\eqdist
			T_n \bin+ \sum_{r=1}^s \ui{A_r}n \cdot C_{\ui{J_r}n}^{(r)},
			\qquad (n \ge n_0),
	\end{align}
	where the families \((\ui{C_n}1)_{n\in\N},\ldots,(\ui{C_n}s)_{n\in\N}\) are independent copies of
	\((C_n)_{n\in\N}\), which are also independent of 
	\((\ui{J_1}n,\ldots,\ui{J_s}n)\in\{0,\ldots,n-1\}^s\),
	\((\ui{A_1}n,\ldots,\ui{A_s}n) \in \R_{\ge0}^s\)
	and \(T_n\).
	Define \(\ui{Z_r}n = \ui{J_r}n / n\), $=1,\ldots,s$, and assume that they 
	fulfill uniformly for \(z\in(0,1)\)
	\begin{align}
	\label{eq:DMTwc-condition}
			n \cdot \Prob[\big]{ \ui{Z_r}n \in (z-\tfrac1n,z] }
		&\wwrel= 
			f_{Z_r^*}(z) \bin\pm \Oh(n^{-\delta}),
	\end{align}
	as $n\to\infty$ for a constant \(\delta>0\) and a 
	Hölder-continuous function \(f_{Z_r^*} : [0,1] \to \R\).
	Then \(f_{Z_r^*}\) is the density of a random variable \(Z_r^*\) and
	\(\ui{Z_r}n \convD Z_r^*\).

	Let further 
	\begin{align}
	\label{eq:DMTwc-condition-coeffs}
			\E[\big]{ \ui{A_r}n \given \ui{Z_r}n \in (z-\tfrac1n,z] }
		&\wwrel=
			a_r(z) \wbin\pm \Oh(n^{-\delta}),
	\end{align}
	as $n\to\infty$ for a function \(a_r : [0,1] \to \R\) and
	require that \(f_{Z_r^*}(z)\cdot a_r(z)\) is also Hölder continuous on~\([0,1]\).
	Moreover, assume \(\E{T_n} \sim K n^\alpha \log^\beta(n)\), as \(n\to\infty\), 
	for constants \(K\ne 0\), \(\alpha\ge0\) and \(\beta>-1\).
	Then, with \(H = 1 - \sum_{r=1}^s\E{(Z_r^*)^\alpha a_r(Z_r^*)}\), we have the following cases.
	\begin{enumerate}[itemsep=0ex]
	\item If \(H > 0\), then \(\displaystyle \E{C_n} \sim \frac{\E{T_n}}{H}\).
			\(\vphantom{\displaystyle\sum_{i}^{i}}\)
	\item \label{case:DMTwc-H0} 
		If \(H = 0\), then 
		$\displaystyle
		\E{C_n} \sim \frac{\E{T_n} \ln n}{\tilde H}$ with 
		$\displaystyle \tilde H = -(\beta+1)\sum_{r=1}^s 
			\E{(Z_r^*)^\alpha a_r(Z_r^*) \ln(Z_r^*)}$.
	\item \label{case:DMTwc-theta-nc}
		If \(H < 0\), then \(\E{C_n} = \Oh(n^c)\) for the
		\(c\in\R\) with \(\displaystyle\sum_{r=1}^s\E{(Z_r^*)^c a_r(Z_r^*)} = 1\).
	\end{enumerate}
\qed\end{theorem}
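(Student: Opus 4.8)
The plan is to reduce the distributional recurrence \weqref{eq:DMTwc-distributional-recurrence} to a deterministic full-history linear recurrence for the expectations $e_n = \E{C_n}$ and then to invoke the continuous master theorem (CMT) of \citet{Roura2001}. Taking expectations and using that the copies $(\ui{C_n}{r})_n$ are independent of the pair $(\ui{\vect J}{n},\ui{\vect A}{n})$, I would condition on the subproblem sizes to obtain, with $t_n = \E{T_n}$,
\begin{align}
	e_n &\wwrel= t_n \bin+ \sum_{r=1}^{s}\sum_{j=0}^{n-1} \omega_{r,n,j}\, e_j, \qquad (n \ge n_0),
\end{align}
where $\omega_{r,n,j} = \Prob{\ui{J_r}{n} = j}\cdot\E{\ui{A_r}{n}\given \ui{J_r}{n} = j}$. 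Collecting the $s$ contributions into a single weight $\omega_{n,j} = \sum_{r=1}^{s}\omega_{r,n,j}$ puts this in exactly the form the CMT handles: a deterministic toll $t_n \sim K n^\alpha \log^\beta n$ driving a linear recurrence with deterministic weights. The finitely many initial values for $n < n_0$ affect only lower-order terms and may be ignored.

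The second step is to read off the shape function of this recurrence and to check that the weights are faithfully approximated by it. I claim the shape function is $\phi(z) = \sum_{r=1}^{s} f_{Z_r^*}(z)\, a_r(z)$. Indeed, hypotheses \weqref{eq:DMTwc-condition} and \weqref{eq:DMTwc-condition-coeffs} assert precisely that, writing $z = j/n$, the rescaled local mass $n\,\Prob{\ui{Z_r}{n} \in (z-\tfrac1n, z]}$ equals $f_{Z_r^*}(z) \pm \Oh(n^{-\delta})$ and that the conditional coefficient equals $a_r(z) \pm \Oh(n^{-\delta})$, both uniformly in $z$. Multiplying the two factors gives $n\,\omega_{r,n,j} = f_{Z_r^*}(z)\, a_r(z) \pm \Oh(n^{-\delta})$, so after summing over $r$ the discrete weights agree with $\tfrac1n\phi(z)$ up to $\Oh(n^{-1-\delta})$ at each point $z = j/n$.

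The crux --- and the step I expect to be hardest --- is to upgrade this pointwise agreement to the summable approximation that Roura's theorem demands, namely a bound of the shape
\begin{align}
	\sum_{j=0}^{n-1} \Bigl| \omega_{n,j} - \int_{j/n}^{(j+1)/n}\phi(z)\,dz \Bigr| &\wwrel= \Oh(n^{-\delta'})
\end{align}
for some $\delta' > 0$. The Hölder continuity assumed for each $f_{Z_r^*}$ and for the products $f_{Z_r^*}\,a_r$ is exactly what is needed here: it bounds the pure discretization error $\bigl|\tfrac1n\phi(j/n) - \int_{j/n}^{(j+1)/n}\phi(z)\,dz\bigr|$ by $\Oh(n^{-1-\gamma})$ with $\gamma$ the Hölder exponent, and combining this with the per-point $\Oh(n^{-1-\delta})$ estimate and summing the $n$ terms keeps the total error at $\Oh(n^{-\delta'})$. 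The bookkeeping near the endpoints $z\to 0$ and $z\to 1$, where the uniform error estimates and the integrability of $z^\alpha\phi(z)$ must be controlled simultaneously, is the delicate part.

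With the approximation established, the conclusion is a direct appeal to the CMT, whose three cases are governed by the sign of $H = 1 - \int_0^1 z^\alpha \phi(z)\,dz$. Since $Z_r^*$ has density $f_{Z_r^*}$, substituting $\phi$ and recognizing $\int_0^1 z^\alpha a_r(z) f_{Z_r^*}(z)\,dz = \E{(Z_r^*)^\alpha a_r(Z_r^*)}$ turns this into $H = 1 - \sum_{r=1}^{s}\E{(Z_r^*)^\alpha a_r(Z_r^*)}$, matching the statement. The same substitution converts Roura's expressions into the stated $\tilde H = -(\beta+1)\sum_{r=1}^{s}\E{(Z_r^*)^\alpha a_r(Z_r^*)\ln(Z_r^*)}$ in the critical case $H = 0$, and into the equation $\sum_{r=1}^{s}\E{(Z_r^*)^c a_r(Z_r^*)} = 1$ defining the exponent $c$ when $H < 0$, completing all three cases.
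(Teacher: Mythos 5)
Your proposal is correct and takes essentially the same route as the paper, which does not prove this theorem itself but imports it as Theorem~2.76 of \citep{Wild2016}, describing it explicitly as ``essentially a shortcut to using the more general continuous master theorem of \citet{Roura2001}'' --- precisely the reduction you carry out. Your steps (taking expectations to get a deterministic full-history recurrence, identifying the shape function $\sum_{r=1}^s f_{Z_r^*}(z)\,a_r(z)$, using the Hölder continuity together with the uniform $\Oh(n^{-\delta})$ estimates to obtain the summable weight-approximation condition that Roura's theorem demands, and translating Roura's constants into the stated $H$, $\tilde H$, and $c$) are exactly the content of that cited reduction.
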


\section{Jumplists}
\label{sec:jumplist-definitions}

We now present our (consolidated) definition of jumplists;
\ifsiam{%
	some details differ from the original~\cite{BronnimannCazalsDurand2003};
	we discuss those in the \extendedversion.
}{%
	it deviates in some details from the original version of \cite{BronnimannCazalsDurand2003};
	see \wref{app:differences-definitions}.
}

Jumplists consist of \emph{nodes}, where each node $v$ stores a successor pointer ($v.\id{next}$)
and a key ($v.\id{key}$).
The nodes are connected using the next pointers to form a singly-linked list, 
the \emph{backbone} of the jumplist, so that the key fields are sorted ascendingly.%
\footnote{%
	We assume the keys stored in a jumplist are distinct.
	The insert procedures will prevent duplicate insertions.
}
It is convenient to add a ``dummy'' header node $v_0$ whose key field is ignored;
($v_0.\id{key} = -\infty$).
If $x_1 < \cdots < x_n$ are the keys stored in the jumplist,
we have the $n+1$ nodes $v_0,v_1,\ldots,v_n$ with
$v_i.\id{key} = x_i$ and $v_{i-1}.\id{next} = v_i$ for $i=1,\ldots,n$.
A jumplist on $n$ keys will always have $m=n+1$ nodes;
we use $n$ and $m$ in this meaning throughout the paper.

\paragraph{Jump Pointers}
Jump pointers always point forward in the list, 
and we require the following two conditions.
\begin{inlineenumerate}%
\item
	\emph{Non-degeneracy:} Any node may be the target of at most one jump pointer,
		and jump pointers never point to the direct successor.
\item 
	\emph{Well-nestedness:} 
	Let $v\ne u$ be nodes with $v.\mathit{key} < u.\mathit{key}$,
	and let $v^*$ resp.\ $u^*$ be the nodes their jump pointers point to.
	(Note that $v^*\ne u^*$ by the first property).
	Then these nodes must appear in one of the following orders in the backbone:
		$u \dots v\dots v^* \dots u^*$
		or 
		$v \dots v^* \dots u \dots u^*$:
	\\[.5\baselineskip]
	\plaincenter{\adjustbox{max width=.9\linewidth}{%
	\begin{tikzpicture}[
		]
		\begin{scope}
			\foreach \x/\l/\p in {u/u/1,v/v/2,{v^{\mkern-1mu*}\!}/v2/3.5,{u^{\mkern-1mu*}\!}/u2/4.5} {
				\node[sn,minimum size=4.25mm] (\l) at (\p,0) {$\x{}^{\vphantom *}$} ;
			}
			
			\draw[jumppointer] (u) to[out=60,looseness=.7] (u2) ;
			\draw[jumppointer] (v) to[out=60,looseness=.7] (v2) ;
		\end{scope}
		\node[scale=1.5] at (5.75,0) {\normalfont or} ;
		\begin{scope}[shift={(6,0)}]
			\foreach \x/\l/\p in {u/u/1,v/v/3.25,{v^{\mkern-1mu*}\!}/v2/4.5,{u^{\mkern-1mu*}\!}/u2/2.25} {
				\node[sn,minimum size=4.25mm] (\l) at (\p,0) {$\x{}^{\vphantom *}$} ;
			}
			
			\draw[jumppointer] (u) to[out=60,looseness=1.2] (u2) ;
			\draw[jumppointer] (v) to[out=60,looseness=1.2] (v2) ;
		\end{scope}
	\end{tikzpicture}%
	}}\\[.5\baselineskip]
	The second case allows $v^* = u$.
	Visually speaking, jump pointers may not cross.
\end{inlineenumerate}

\paragraph{Sublists}

The \emph{sublist of node $v$} starts at $v$ (inclusive) and 
ends just before the first node 
targeted by a jump pointer originating before $v$~-- 
or extends to the end of the list if no overarching pointer exists.
As for the overall jumplist,
$v$ acts as dummy header to its sublist:
$v.\id{key}$ is \emph{not} considered as part of $v$'s sublist.
We write $m(v)$ for the number of nodes in $v$'s sublist.
The next- and jump-sublists of $v$,
denoted by $\mathcal J_1 = \mathcal J_1(v)$ resp.\ $\mathcal J_2 = \mathcal J_2(v)$,
are the sublists of $v.\id{next}$ resp.\ $v.\id{jump}$.
We use $J_r=J_r(v)$ for the number of nodes in $\mathcal J_r(v)$, $r\in\{1,2\}$.
\wref{fig:def-sublists} exemplifies the definitions.
We include an imaginary ``end pointer'' in the figures, drawn as dotted green line,
that connects a jump node with the last node in that node's sublist.

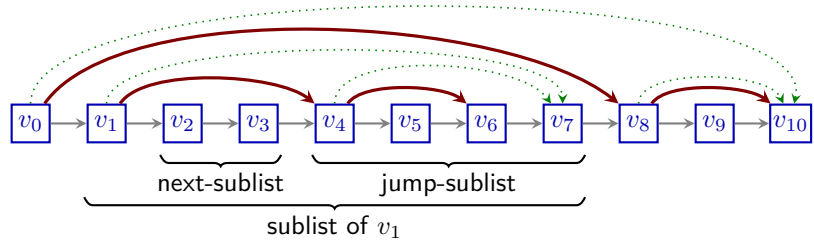
\begin{figure}
	\plaincenter{
	\adjustbox{max width=\linewidth}{%
		\begin{tikzpicture}[
			yscale=.8,
			node distance=1cm,
		]
			\node[sn] (0) {$v_0$};
			\foreach \i/\l in {1/0,2/1,3/2,4/3,5/4,6/5,7/6,8/7,9/8,10/9} {
			    \node[sn] (\i) [right of=\l]  {$v_{\i}$};
			}
			
			\foreach \i/\n in {0/1,1/2,2/3,3/4,4/5,5/6,6/7,7/8,8/9,9/10} {
				\draw[backbone] (\i) -- (\n) ;
			}
			\foreach \i/\n in {0/8,1/4,4/6,8/10} {
				\draw[jumppointer] (\i) to[out=60,looseness=.7] (\n) ;
			}
			\draw[endpointer] (0) to[out=90,in=80,looseness=.55] (10) ;
			\draw[endpointer] (1) to[out=90,in=90,looseness=.5] (7) ;
			\draw[endpointer] (4) to[out=90,in=120,looseness=.8] (7) ;
			\draw[endpointer] (8) to[out=90,in=110,looseness=.8] (10) ;
			
			\begin{scope}[thick, decoration={ brace, mirror}, shorten >=-2pt,shorten <=-2pt]
				\draw[decorate] ($(2.180) + (0,-.6)$) -- node[below=.2ex]{next-sublist} ++($(3.360)-(2.180)$);
				\draw[decorate] ($(4.180) + (0,-.6)$) -- node[below=.2ex]{jump-sublist} ++($(7.360)-(4.180)$);
				\draw[decorate] ($(1.180) + (0,-1.3)$) -- node[below=.2ex]{sublist of $v_1$} 
					++($(8.360)-(2.180)$);
			\end{scope}
		\end{tikzpicture}%
		}%
	}
	\caption{%
		Illustration of the sublist definitions.
		The sublist of node $v_1$ contains $m(v_1)=7$ nodes and
		stores the $6$ keys $v_2.\mathit{key},\ldots,v_7.\mathit{key}$.
		The sizes of the next- and jump-sublist 
		are $J_1(v_1) = 2$ and $J_2(v_1)=4$, respectively.%
	}
	\label{fig:def-sublists}
\end{figure}

\paragraph{Node Types}
Nodes in our jumplists come in two flavors:
\emph{plain nodes} only have next and key fields;
\emph{jump nodes} additionally store a 
\emph{jump pointer}, $v.\id{jump}$, and their next-sublist size, $v.\id{nsize}=J_1$.
The node types are determined by the following rule,
where $w\ge2$, the \emph{leaf size,} is a parameter:
If $m(v)\le w$, then $v$ (and all nodes in its sublist) are plain nodes.
Otherwise $v$ is a jump node, and we apply the rule recursively
to $\mathcal J_1(v)$ and $\mathcal J_2(v)$.
\wref{fig:typical-jumplist-n30-k1-w2} shows a larger example.

\paragraph{Randomized Jumplists}

The following probability distribution over all (legal) jump-pointer configurations
invariantly holds in randomized jumplists.
It is defined recursively: $v_0.\mathit{jump}$ is drawn \emph{uniformly}
from all $m-2$ feasible targets;
($v_0$ and $v_1$ are not allowed).
Conditional on the choice of $v_0.\mathit{jump}$, 
the same property is required independently for $\mathcal J_1(v_0)$ and $\mathcal J_2(v_0)$.
The probability $p(\mathcal J)$ of a particular (legal) pointer configuration $\mathcal J$ is
\begin{align*}
\label{eq:prob-jumplist-k1}
		p(\mathcal J)
	&\wwrel=
		\begin{dcases*}
			1, & $m \le w$; \\
			\frac1{m-2} \cdot p(\mathcal J_1) \,p(\mathcal J_2), & 
				$m > w$,
		\end{dcases*}
\end{align*}
which is reminiscent of the probability of a given shape for a random BST,
except for the offset $-2$
(see \cite[ex.\,6.2.2--5]{Knuth1998} or \cite[Eq.\,(5.1)]{CasasDiazMartinez1991}).

\subsection{Dangling-Min BSTs}
\label{sec:dangling-min-bsts}

There is an intimate relation between jumplists and search trees, 
but the slight offset above complicates the matter.%
\footnote{%
	The complication is inherent to the feature of jumplists that
	every key has at most \emph{one} jump pointer.
	Skip lists, for example, can be transformed into BSTs directly~\cite{DeanJones2007}.
}
Indeed, (random) jumplists are isomorphic to a rather peculiar variant of (random) BSTs
(where random means ``generated by insertions in random order''):
the \emph{dangling-min BSTs} (with leaf size $w\ge 2$).
Such a tree is defined for a sequence of (distinct) keys $x_1,\ldots,x_n$ as follows.
	If $n\le w-1$, it is a leaf 
	with the keys in sorted order.
	Otherwise, its \emph{root} node contains \emph{two} keys:
	the smallest key, $\min\{x_1,\ldots,x_n\}$, as its \emph{dangling min,}
	and the first key of the sequence after the min has been removed as \emph{root key}
	(\ie, the root key is $x_1$, unless $x_1$ is the min; then it is~$x_2$).
	The left resp.\ right subtrees of the root are the dangling-min BSTs
	for the keys smaller resp.\ larger than the root key
	in the remaining sequence (without root key and min, and preserving relative order).
Dangling-min BSTs make the recursive decomposition in jumplists explicit,
which helps for both designing algorithms and analyzing their performance.

\begin{figure}
	\plaincenter{\adjustbox{max width=\linewidth}{%
		\begin{tikzpicture}[
			scale=.9,
			node distance=.7cm,
		]
			\node[minBST internal] (11) at (0,0) {$11$} ;
			\node[minBST internal] (5) at ($(11)+(-1.1,-1)$) {$5$} ;
			\node[minBST leaf] (12) at ($(11)+(1.1,-1)$) {$12$} ;
			\node[minBST internal] (4) at ($(5)+(-1.25,-1)$) {$4$} ;
			\node[minBST internal] (10) at ($(5)+(1.25,-1)$) {$10$} ;
			\node[minBST internal] (8) at ($(10)+(-.6,-1)$) {$8$} ;
			\node[minBST leaf empty] (35) at ($(4)+(-.5,-1)$) {} ;
			\node[minBST leaf empty] (45) at ($(4)+(.5,-1)$) {} ;
			\node[minBST leaf empty] (105) at ($(10)+(.6,-1)$) {} ;
			\node[minBST leaf] (9) at ($(8)+(.5,-1)$) {$9$} ;
			\node[minBST leaf empty] (75) at ($(8)+(-.5,-1)$) {} ;
			\foreach \f/\s in {11/5,5/4,10/8,8/75,4/35} {
				\draw[minBST left edge] (\f) -- (\s) ;
			}
			\foreach \f/\s in {11/12,5/10,10/105,8/9,4/45} {
				\draw[minBST right edge] (\f) -- (\s) ;
			}
			\foreach \r/\m in {11/1,5/2,4/3,10/6,8/7} {
				\node[minBST min] (m\r) at ($(\r)+(195:1.75em)$) {$\m$};
				\draw[minBST min edge] (\r) -- (m\r) ;
			}
			
			\begin{scope}[
				shift={(2.15,-3.5)},every node/.append style={font=\scriptsize}
			]
				\node[sn,minimum size=4mm] (0) {};
				\foreach \i/\p in {1/0,2/1,3/2,4/3,5/4,6/5,7/6,8/7,9/8,10/9,11/10,12/11} {
					\node[sn,right of=\p,minimum size=4mm] (\i) {$\i$} ;
				}
				
				\foreach \i/\n in {0/11,1/5,5/10,6/8,2/4} {
					\draw[jumppointer] (\i) to[out=60,looseness=.7] (\n) ;
				}
				\path[backbone]
					(0)  edge (1)
					(1)  edge (2)
					(2)  edge (3)
					(3)  edge (4)
					(4)  edge (5)
					(5)  edge (6)
					(6)  edge (7)
					(7)  edge (8)
					(8)  edge (9)
					(9)  edge (10)
					(10) edge  (11)
					(11) edge  (12)
				;
			\end{scope}
		\end{tikzpicture}%
		}%
	}
	\caption{%
		The dangling-min BST with $w=2$ for the sequence $11,2,5,3,1,4,10,8,7,9,6,12$,
		and the jumplist it corresponds to.
	}
	\label{fig:example-min-BST}
\end{figure}
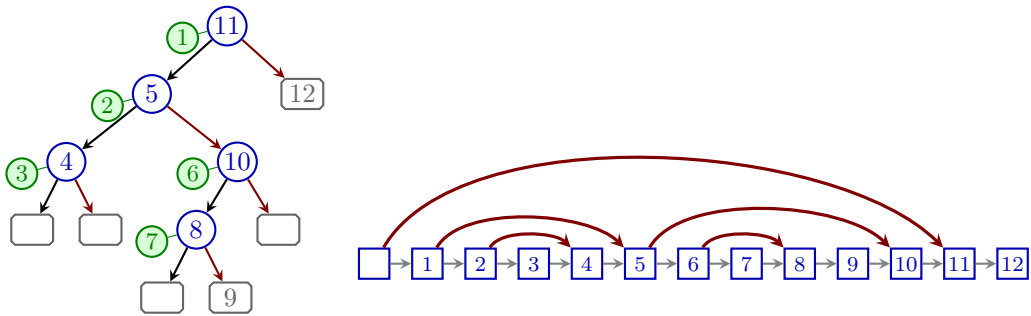

We can transform a jumplist to a dangling-min BST (and vice versa):
If $m\le w$, $v_0$ is a plain node and the dangling-min BST
is a leaf containing all $m-1\le w-1$ keys;
(recall that a jumplist with $m$ nodes stores $n=m-1$ keys).
Otherwise, $v_0$ is a jump node;
with $x_1$ the key in $v_0.\id{next}$ and $x_j$ the key in $v_0.\id{jump}$,
the root of the dangling-min BST has root key $x_j$ and dangling min $x_1$.
Next- resp.\ jump-sublist are recursively transformed into left and right subtree.
\wref{fig:example-min-BST} shows the jumplist corresponding to the given tree;
\wref{fig:typical-jumplist-tree-n30-k1-w2} gives a larger example.

It is easy to see inductively 
that the dangling-min BST built from a randomized jumplist
has the same distribution as 
if directly constructed for a random permutation of $\{1,\ldots,n\}$.
We can therefore focus on analyzing the latter.

\section{Spine Search}
\label{sec:spine-search}

Searching a key $x$ in a jumplist is straightforward: 
We start at the header. We stop when the key in the current node $v$ is larger or equal to $x$.
Otherwise we follow either the jump pointer~-- if the key in $v.\id{jump}$ is not larger than $x$~--
or the next-pointer.
We call this strategy the classic search in the sequel.%
\footnote{%
	Brönnimann, Cazals, and Durand~\cite{BronnimannCazalsDurand2003} also studied
	the symmetric alternative
	\parenthesisclause{compare first to $v.\id{next}$ and then with $v.\id{jump}$ (if needed)}
	and found that it needs more comparisons on average.
}

However, there is an alternative search strategy not considered 
in~\cite{BronnimannCazalsDurand2003} and~\cite{Elmasry2005},
which performs better!
Consider searching key $8$ in the jumplist from
\wref{fig:typical-jumplist-n30-k1-w2}.
A classic search in this list inspects keys $18,1,3,12,4,6,11,7,10,8$ in the given order;
a total of $10$ key comparisons. 
Every step in the search that follows the next-pointer needs two comparisons.

Now do the search for $8$ in the dangling-min BST from \wref{fig:typical-jumplist-tree-n30-k1-w2},
as if it was a regular BST 
(ignoring the subtree minima and stopping at the leaves).
While doing so, we compare with keys $18,3,12,6,11,10$. 
All these steps need only one key comparison even though mostly the same keys are 
visited as above.
However, our search is not yet finished; the reached leaf contains only $9$, 
and we would (erroneously!) announce that $8$ is not in the dictionary.
Instead we have to return to the \emph{last node we entered through a right-child pointer} and 
inspect all the dangling mins along the ``left spine'' of the corresponding subtree.
In our example, we return to $11$ and make comparisons with $7$ and $8$, 
terminating successfully.
We call this search strategy \emph{spine search.} 
In our example, it needed $2$ comparisons less than the classic search.

\begin{figure}
	\plaincenter{\adjustbox{max width=.66\linewidth}{%
		\includegraphics[scale=.6]{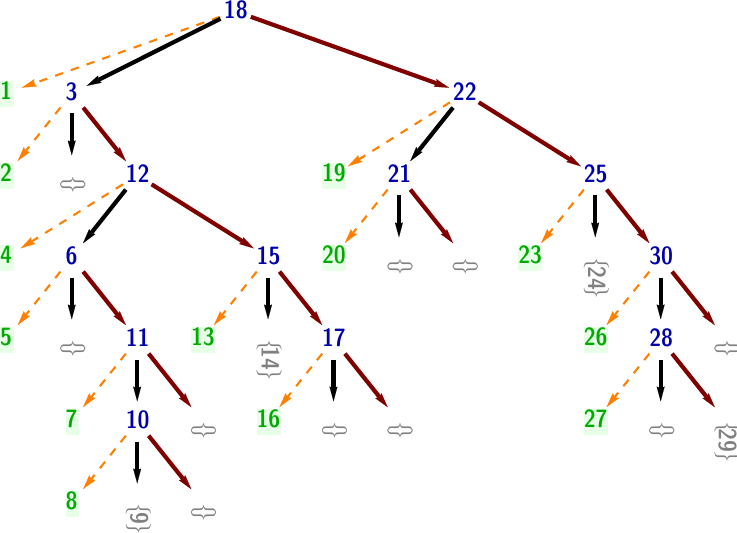}%
	}}
	\caption{%
		The dangling-min BST for the jumplist from \wref{fig:typical-jumplist-n30-k1-w2}.
		Black arrows are left child pointers, red arrows are right child pointers, 
		and dotted yellow arrows indicate the dangling min.
		Gray nodes are leaves that contain between $0$ and $w-1=1$ keys.
	}
	\label{fig:typical-jumplist-tree-n30-k1-w2}
\end{figure}

Spine search only compares $x$ with the dangling-mins for nodes on the 
\emph{left spine above the leaf,}
whereas the classic strategy does so for \emph{every} node we leave through the left-child edge.
Our modification is correct because when going to the right child 
we know that all keys left to $v$ are smaller than $x$ and thus $x$ cannot 
be any of the dangling minima we skipped.
\ifsiam{%
	The \extendedversion%
}{%
	\wref{app:algorithms}%
}
gives detailed pseudocode.

The left spine is always a subset of the nodes where we took a left child edge,
so spine search never needs more comparisons than the classic strategy.
It seems reasonable that spine search should need roughly as many key comparisons as the search 
in a BST since most left spines are short.
Indeed, we prove in \wref{sec:analysis} that the linear search along the left spine 
is only a lower order term when averaging over all possible unsuccessful searches \parenthesissign
spine search needs $\sim 2\ln(n)$ comparisons,
compared to $\sim3\ln(n)$ for the classic search strategy.

\section{Median-of-k Jumplists}
\label{sec:median-of-k-jumplists-def}

The search costs in BSTs can be improved
by using medians of a small sample as subtree roots;
the idea is called fringe-balancing in that context (\wref{sec:related-work})
and corresponds to the median-of-$k$ rule for Quicksort~\cite{Hennequin1991,Drmota2009,Wild2018}.
Applied to our trees, we obtain 
\emph{$k$-fringe-balanced dangling-min BSTs:} if $n\ge w$, we choose the root key
as the median of the first $k$ keys in the sequence after removing the min
(and otherwise proceed as before).
Here $k = 2t+1$ is a fixed odd integer and we require $w\ge k+1$.

Similarly, we define a \emph{randomized median-of-$k$ jumplist} 
by choosing the jump target as the median of $k$ elements.
The situation is illustrated below for $k=3$ and $m=10$;
to have $x_6$ as the median of $3$ elements from the sample range,
we must select $t=1$ further elements from $\{x_2,\ldots,x_5\}$ and 
$t=1$ further elements from $\{x_7,\ldots,x_9\}$.

\medskip

\plaincenter{
\adjustbox{max width=\linewidth}{%
	\begin{tikzpicture}[
		scale=.75
	]
		\fill[rounded corners=3pt,blue!15] (1.5,.5) rectangle (9.5,-.5)  
			node[fill,text=blue, inner sep=1.5pt,align=center,anchor=south] at (7.5,.4) 
				{\smaller sampling range};
		
		\foreach \i in {0,...,9} {
			\node[sn,semithick,scale=.75] (\i) at (\i,0) {$x_{\i}$};
		}
		\foreach \i in {6,4,9} {
			\draw[ultra thick,<-,red] (\i) -- +(0,-.8) ;
		}
		\node[red] at (10,-.7) {sample};
		\draw[jumppointer] (0) to[out=45,in=120,looseness=.4] (6) ;
		
		\begin{scope}[decoration={brace,mirror},thick]
			\draw[decorate] ($(1.west) + (0,-1)$) -- 
				node[below=.5ex]{$\mathcal J_1$} ++($(5.east)-(1.west)$) ;
			\draw[decorate] ($(6.west) + (0,-1)$) -- 
				node[below=.5ex]{$\mathcal J_2$} ++($(9.east)-(6.west)$) ;
		\end{scope}
		
	\end{tikzpicture}%
}
}

\noindent
The number of such samples is $\binom {J_1-1}t \binom {J_2-1}t$, which we have to divide by
the total number of possible samples, $\binom {m-2}k$.
The probability of a (legal) jump pointer configuration $\mathcal J$ thus is
\begin{align*}
\label{eq:prob-jumplist-general-k}
		p(\mathcal J)
	&\wrel=
		\begin{dcases*}
			1 & $m \le w$; \\
			\frac{\binom{J_1-1}{t} \binom{J_2-1}{t}}{\binom{m-2}k} 
				\cdot p(\mathcal J_1) \,p(\mathcal J_2), & 
				$m > w$.
		\end{dcases*}
\end{align*}
This puts more probability weight on balanced configurations, 
and hence improves the expected search costs.
\wref{fig:typical-jumplist-n30-k3-w4} shows a typical median-of-$3$ jumplist
and its fringe-balanced dangling-min tree.%
\footnote{%
	A possible generalization could use asymmetric sampling with
	$(t_1,t_2)$ and $k=t_1+t_2+1$, where we select the $(t_1+1)$st smallest instead
	of the median. 
	Then, we have $\binom{J_1-1}{t_1}$ and $\binom{J_2-1}{t_2}$ 
	in \weqref{eq:prob-jumplist-general-k}.
	For the present work, we will however stick to the case $t_1=t_2=t$.
}

\begin{figure}
	\plaincenter{\adjustbox{max width=\linewidth}{
		\hspace*{-1em}%
		\raisebox{4ex}{\includegraphics[scale=.6]{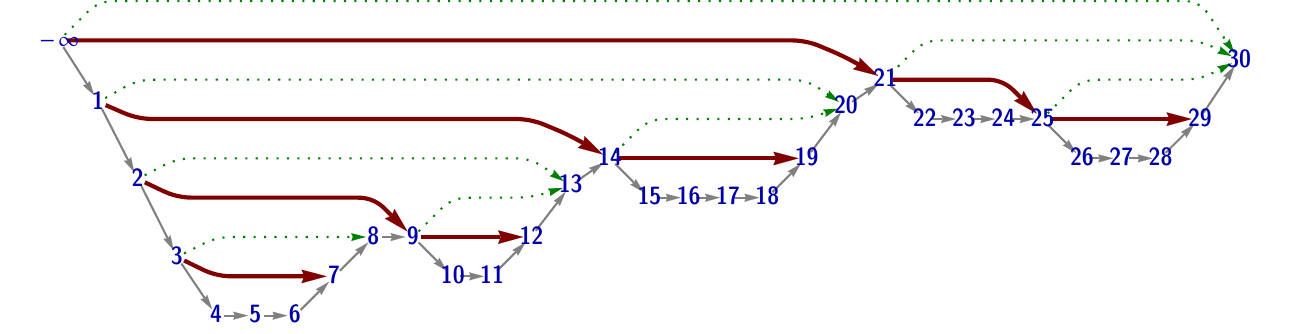}}%
		\hspace*{-1.75em}%
	}}
	\\[-6ex]
	\plaincenter{\adjustbox{max width=.6\linewidth}{%
		\includegraphics[scale=.6]{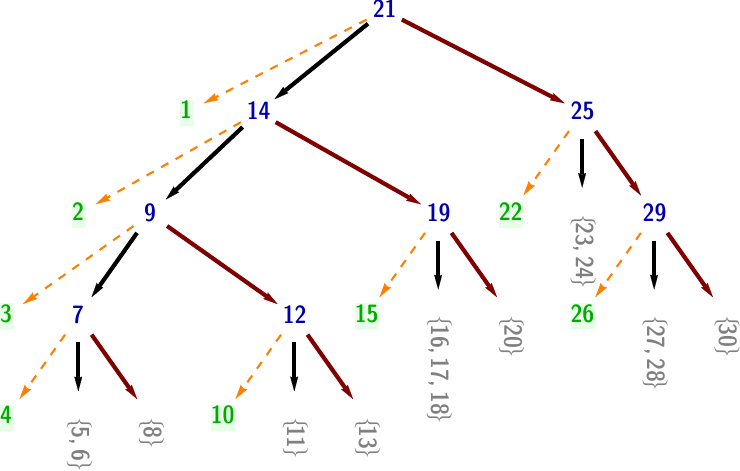}%
	}}%
	\caption{%
		A typical median-of-three ($k=3$, $w=4$) jumplist on $n=30$ keys
		and its corresponding fringe-balanced dangling-min BST.
	}
	\label{fig:typical-jumplist-n30-k3-w4}
\end{figure}

\paragraph{Distribution of subproblem sizes}
For our analysis, an alternative description of the distribution of the subproblem sizes
is more convenient.
Note that both $J_1$ and $J_2$ are always at least $t+1$: 
the sublists must contain $t$ other sampled nodes plus their header.
If we denote by $I_r = J_r-t-1$, $r\in\{1,2\}$, we find that $I_r$ 
has a beta-binomial distribution (\wref{sec:preliminaries}), 
$I_r \eqdist \betaBinomial(m-2-k,t+1,t+1)$.
This implies that with $D \eqdist \betadist(t+1,t+1)$, 
we have the mixed distribution $I_r \eqdist \binomial(m-2-k,D)$ conditional on~$D$.%
\footnote{%
	The symmetry in the sublist sizes, $J_1\eqdist J_2$,
	is a major convenience of our definition of jumplists 
	as opposed to the original one.
}

\section{Insert and Delete}
\label{sec:insert-delete}

\begin{figure*}
	\savebox\tmpbox{
	\begin{tikzpicture}[
			node distance=1cm,
			every node/.style={inner sep=1.5pt,anchor=west},
		]
		\node (head) at (-1.5,0) {$\proc{RestIns}(\Li)$};
		\path (head.east) --node[anchor=center] {$=$} ++(3em,0) 
				node (p) {$p \cdot {}$} ;
				
		\node[sn] (v0) at (p.east) {$v_0$};
		\node (3)  [right of = v0] {$\proc{Reb} \biggl($};
		\node[sn] (J1) at (3.east) {\phantom{xxxxxxx}}; 
		\node (4)  at (J1.east){$\biggl)$ };
		
		\node (5) [xshift=1em] at (4.east) {$\proc{Reb} \biggl($};
		\node[sn] (6) at (5.east){$v_{J}$};
		\node[sn,xshift=-.8pt] (J2) at (6.east) {\phantom{xxxxxxxxx}}; 
		\node (7) [color=nodecolor,anchor=east]  at (J2.east) {$v_n\,$};
		\node at (J2.east) {$\biggr)$} ;
		
		\path[backbone,on layer=background]
			(v0) edge[shorten >=-2pt] (3)
			(4) edge[shorten >=-2pt,shorten <=-4pt] (5);
		\draw[jumppointer,overlay]
			(v0) to[out=45,looseness=.4,in=90] (6);
	
		\begin{scope}[shift={(1.5,-2.75)}]
			\node (p1)  {${}+(1-p)\cdot {}$};

			\node[sn,xshift=.8em] (n2) at (p1.east) {$v_0$};
			\node (n3)  [right of = n2, xshift=.7em] {$\proc{RestIns} \biggl($};
			\node (n4) [new,xshift=1.2em] at (n3.east) {};
			\node[sn] (nJ1) at (n3.east) {\phantom{xxxxxxx}}; 
			\node (n5) at (nJ1.east){$\biggl)$ };
			\node[sn] (n6) [right of = n5, xshift=-1em]  {$v_j$};
			\node[sn,anchor=west,xshift=-.6pt] (nJ2) at (n6.east) {\phantom{xxxxxx}};

			\node[sn,yshift=8.5ex] (a2) at (n2.west) {$x$};
			\node (a3)  [right of = a2, xshift=.7em] {$\proc{RestIns} \biggl($};
			\node  (a4) [new,xshift=0em] at (a3.east) {};
			\node[sn] (aJ1) at (a3.east) {\phantom{xxxxxxx}}; 
			\node (a5) at (aJ1.east){$\biggl)$};
			\node[sn] (a6) [right of = a5, xshift=-1em]  {$v_j$};
			\node[sn,xshift=-.8pt] (aJ2) at (a6.east) {\phantom{xxxxxx}};

			\node[sn,yshift=-8.5ex] (b2) at (n2.west) {$v_0$};
			\node[sn,xshift=.7em] (bJ1) at (b2.east) {\phantom{xxxxxxx}}; 
			\node[xshift=.7em] (b3) at (bJ1.east) {$\proc{restIns} \biggl($};
			\node[sn] (b4) at (b3.east) {$v_j$};	
			\node[new,xshift=1em] (b5)  at (b4.east) {};
			\node[sn,anchor=west,xshift=-.8pt] (bJ2) at (b4.east) {\phantom{xxxxxx}}; 
			\node (b6) at (bJ2.east){$\biggl)$};
			
			\path[backbone,on layer=background]
				(n2) edge[shorten >=-2pt] (n3)
				(n5) edge[shorten <=-4pt] (n6)
				(a2) edge[shorten >=-2pt] (a3)
				(a5) edge[shorten <=-4pt] (a6)
				(b2) edge (bJ1)
				(bJ1) edge[shorten >=-2pt] (b3);
			\draw[jumppointer] (n2) to[out=45,looseness=.4,in=120] (n6);
			\draw[jumppointer] (a2) to[out=45,looseness=.4,in=120] (a6);
			\draw[jumppointer] (b2) to[out=45,looseness=.45,in=90] (b4);
		
			\draw[semithick,decorate, decoration={ brace}] 
				($(b2.south west)+(-.5em,-1ex)$) -- ($(a2.north west)+(-.5em,1ex)$);	
		\end{scope}
			
		\begin{scope}[shift={(10,-2.75)}, node distance = 0.8cm]
			\node (cm1) {if $\Li={}$};
			\node (cm2) [sn] at (cm1.east) {$v_0$};
			\node (cm3) [new,xshift=2.2em] at (cm2.east) {};
			\node[sn,xshift=1em] (cmJ1) at (cm2.east) {\phantom{xxxxxxx}};
			\node[sn,xshift=1em] (cm4) at (cmJ1.east) {$v_j$};
			\node[sn,anchor=west,xshift=-0.8pt] (cmJ2) at (cm4.east) {\phantom{xxxxxx}};

			\node[yshift=8.5ex] (ca1) at (cm1.west) {if $\Li={}$};
			\node[sn,fill=red] (ca2)  at (ca1.east) {$x$};
			\node[sn,xshift=.7em,minimum width=4mm] (ca3) at (ca2.east) {$v_0$};
			\node[sn,xshift=.7em] (caJ1) at (ca3.east) {\phantom{xxxx}};
			\node[sn,xshift=.7em] (ca4) at (caJ1.east) {$v_j$};
			\node[sn,anchor=west,xshift=-0.8pt] (caJ2) at (ca4.east) {\phantom{xxxxxx}};

			\node[yshift=-8.5ex] (cb1) at (cm1.west) {if $\Li={}$};
			\node[sn] (cb2) at (cb1.east) {$v_0$};
			\node[sn,xshift=1em] (cbJ1) at (cb2.east) {\phantom{xxxxxxx}};
			\node[sn,xshift=1em] (cb3) at (cbJ1.east) {$v_j$};
			\node[new,xshift=1em] (cb4) at (cb3.east) {};
			\node[sn,anchor=west,xshift=-0.8pt] (cbJ2) at (cb3.east) {\phantom{xxxxxx}};

			\path[backbone,on layer=background]
				(cm2) edge (cmJ1.west)
				(cmJ1.east) edge (cm4)
				(ca2) edge (ca3)
				(ca3) edge (caJ1)
				(caJ1) edge (ca4)
				(cb2) edge (cbJ1.west)
				(cbJ1.east) edge (cb3);
			\draw[jumppointer] (cm2) to[out=45,looseness=.55,in=120] (cm4);
			\draw[jumppointer] (ca3) to[out=50,looseness=.7,in=120] (ca4);	
			\draw[jumppointer] (cb2) to[out=45,looseness=.5,in=120] (cb3);	
		\end{scope}
	\end{tikzpicture}%
	}
	\def\mycaption{%
		Recursion structure of \proc{RestoreAfterInsert}.
		With probability $p$, we rebalance the entire sublist;
		otherwise, we recurse into one sublist, depending on the rank
		of the newly inserted node (shown in red).\protect\\
		The recursion structure for \proc{RestoreAfterDelete}
		is similar.%
	}
	\ifsiam{%
			\def\mywidth{17em}%
			\begin{minipage}{\mywidth}%
			\vspace*{9ex}
			\caption{\mycaption}%
			\label{fig:restore-after-insert}
			\end{minipage}\hspace*{-1em}%
			\begin{minipage}{\linewidth-\mywidth+1em}%
			\hspace*{-5em}\scalebox{.8}{\usebox\tmpbox}
			\end{minipage}%
	}{}%
	\ifsubmission{%
		\begin{captionbeside}{%
			\mycaption%
		}[l]
			\hspace*{-5em}
			{\scalebox{.8}{\usebox\tmpbox}}%
		\end{captionbeside}
		\label{fig:restore-after-insert}
	}{}%
	\ifarxiv{%
		\plaincenter{\adjustbox{max width=\linewidth}{\usebox\tmpbox}}
		\caption{\mycaption}
		\label{fig:restore-after-insert}
	}{}
\end{figure*}

We briefly sketch the update operations for randomized median-of-$k$ jumplists;
\ifsiam{%
	the \extendedversion%
}{%
	\wref{app:algorithms}%
}
describes them in more detail.
The common theme is that we first modify the jumplist blindly and afterwards 
``repair'' the distribution
by rebuilding one suitably chosen sublist randomly from scratch.
For example upon insertion, the new node has a certain chance to be the target of the
first jump pointer.
We flip a coin to decide whether this should happen; 
if so, we rebuild the entire structure and are done.
Otherwise, we recursively repair a sublist.

\paragraph{Rebalance}
As in~\cite{BronnimannCazalsDurand2003}, we use a procedure \proc{Rebalance}$(\Li)$
that (re)assigns jump pointers from scratch.
It only uses the backbone, existing jump pointers are ignored.
A careful recursive implementation of \proc{Rebalance} rebuilds a 
sublist of $m$ nodes in time $\Theta(m)$.

\paragraph{Insert}
Insertion in jumplists consists of the three phases found in many dictionaries: 
(unsuccessful) search, local insertion, and cleanup.
Unless $x$ is already present,
the search ends at the node with the largest key (strictly) smaller than $x$.
There we insert a new node with key $x$ into the backbone.

It does not have a jump pointer yet, and it is a new potential jump target
for all the nodes whose sublist contains the new node.
Procedure \proc{RestoreAfterInsert} rectifies this as follows.
Let $m$ be the total number of nodes after the insertion, \ie, including the new node.
If $m \le w$, no cleanup is necessary;
if $m=w+1$, we draw the jump pointer for $v_0$ and are done.
Otherwise, we first restore the pointer distribution of $v_0$. 
Due to the insertion of a new node, the sample range now contains an additional node $u$.
($u$ is not necessarily the newly inserted node; 
if the new key is the first or second smallest in \Li, 
$u$ is the former second node of \Li).

If we, conceptually, drew $v_0.\mathit{jump}$ anew,
there are two possibilities: 
either $u$ is part of the sample, namely with probability \smash{$p=\frac{k}{m-2}$},
or $u$ is not part of it.
In the first case, we rebalance all of \Li. 
In the second case,
conditional on the event that $u$ is \emph{not} in the sample,
the current jump pointer of $v_0$ already has the correct distribution:
the median of a random sample not containing~$u$.
We thus rebalance \Li with probability $p$, 
where we draw the jump pointer of $v_0$ conditional on $u$ being part of the sample.
Otherwise we continue recursively
in the uniquely determined sublist that contains the inserted node.
\wref{fig:restore-after-insert} summarizes \proc{RestoreAfterInsert} graphically.

\paragraph{Delete}
We now sketch the procedure \proc{RestoreAfterDelete},
which is similar to \proc{RestoreAfterInsert}.
Let $m$ be the number of nodes after deletion, and let $u$ be the deleted node.
First assume that $u\ne v_0$.
Assume $m > w$, \ie, $v_0$ is a jump node whose sublist contained $u$.
If the sample drawn to choose $v_0.\id{jump}$ did \emph{not} contain $u$,
the deletion of $u$ does not affect $v_0.\id{jump}$,
and we recursively clean up the sublist that formerly contained $u$.
If $u$ was part of the sample, we have to rebalance~\Li;
the probability for that is 
\begin{align*}
	p \wwrel= 
	\begin{cases*}
		1, 		& if $u=v_0.\id{jump}$;\\
		\frac{t}{J_1-1},	& if $u$ was in $\mathcal J_1$; \\
		\frac{t}{J_2-1},	& if $u$ was in $\mathcal J_2$.
	\end{cases*}
\end{align*}
(We define $\frac00\ce1$ in case $t=J_1-1=0$.)
When the deleted node is $u=v_0$, the new header $v_1$ can inherit $v_0$'s jump pointer
and we have the same situation as if $v_1$ had been deleted.
We have to rebalance with probability $p=\frac{t}{J_1-1}$,
otherwise we continue the cleanup in the next-sublist.

\paragraph{Cost Measure}
Insertion and deletion consist of a search and \proc{RestoreAfterInsert/-Delete}.
The latter procedures retrace (a prefix of) the search path to the element
and rebuild at most one sublist using \proc{Rebalance}.
So apart from the search costs (which we analyze separately),
the dominating cost is the number of \emph{``rebalanced elements'':}
the size of the sublists on which \proc{Rebalance} is called.
We will use this as our measure of costs.

\section{Analysis}
\label{sec:analysis}

We now turn to the analysis of the expected behavior of median-of-$k$ jumplists with leaf size $w$.
(The expectation is always over the random choices of the jump pointers.)
We summarize our results in the theorem below.
Its proof is spread over the following subsections.

\begin{theorem}
\label{thm:results}
	Consider randomized median-of-$k$ jumplists with leaf size $w$ on $n$ keys,
	where $k$ and $w$ are fixed constants.
	Abbreviate by $H(k) = \harm{k+1}-\harm{(k+1)/2}$ for $\harm n$
	the harmonic numbers. Then the following holds:
	\begin{thmenumerate}[noitemsep]{thm:results}
	\item \label{thm:results-search}
		The expected number of key comparisons in a \textbf{spine search} is asymptotic to
		\(
				1/H(k) \cdot \ln n 
		\), as $n\to\infty$,
		when each position is equally likely to be requested.
	\item \label{thm:results-insert}
		The expected number of rebalanced elements 
		in the \textbf{cleanup after insertion} is asymptotic to
		\(
				k / H(k)\cdot \ln n
		\), as $n\to\infty$,
		when each of the $n+1$ possible gaps is equally likely.
	\item \label{thm:results-delete}
		The expected number of rebalanced elements 
		in the \textbf{cleanup after deletion} is asymptotic to
		\(
				k/H(k)\cdot \ln n
		\), as \(n\to\infty\),
		when each key is equally likely to be deleted.
	\item \label{thm:results-memory}
		The expected number of additional machine words per key required to 
		store the jumplist is asymptotically at most
		\(1 + \frac2{(w+1)H(k)}\) as $n\to\infty$.
	\end{thmenumerate}
\end{theorem}

\subsection{Search Costs}

Let $P_n$ be the (random) total number of comparisons to search 
all numbers $x\in\{0.5,1.5,\ldots,n+0.5\}$ 
(searching each gap once)
in $\mathcal J_n$ 
the randomized jumplist on $\{1,\ldots,n\}$,
using \proc{SpineSearch}.
The corresponding quantity in BSTs is called external path length,
and we will use this term for $P_n$, as well.
The quotient $P_n/n$ describes the average costs of one call to \proc{SpineSearch}
when all $n+1$ gaps are equally likely to be requested.
$P_n$ is random \wrt to the locations of the jump pointers in $\mathcal J_n$.
To set up a recurrence for~$P_n$, the perspective of random dangling-min BSTs
is most convenient, since \proc{SpineSearch} follows the tree structure.
We describe recurrences here in terms of the distributions of families of random variables.
\begin{align*}
		P_n
	&\rel\eqdist\begin{cases}
		\begin{aligned}
			&(n+1) + (S_n+L_n+1) + S_{J_1} 
			\\&\qquad\bin+P_{J_1} + P'_{J_2},
		\end{aligned}
		& n\ge w,
		\\[2ex]
			\frac{(n+1)(n+2)}{2},
		& n < w,
		\end{cases}
\\[1ex]
		S_n
	&\rel\eqdist
		\begin{cases}
		1 + S_{J_1},\mkern-8mu
		 & n\ge w, \\
		0,
		& n < w,
		\end{cases}
\\[1ex]
		L_n
	&\rel\eqdist
		\begin{cases}
		L_{J_1},\mkern-8mu
		 & n\ge w, \\
		n,
		& n < w,
		\end{cases}
\end{align*}
The terms $P_{J_1}$ and $P'_{J_2}$ on the right-hand side denote members of independent 
copies of the family of random variables $(P_n)_{n\in\N_0}$,
which are also independent of $J_r = \ui{J_r}n$, $r\in\{1,2\}$.
(We omitted the superscripts above for readability.)
Here $J_r = I_r + t$, $r\in\{1,2\}$, $I_1 \eqdist \betaBinomial(n-1-k; t+1,t+1)$
and $J_2 = n-1-k - J_1$.
(We use $n$ here instead of $m$ in~\wref{sec:median-of-k-jumplists-def}; 
hence the slightly different parameters.)

The terms in the expression for $P_n$ are the comparisons with
(1) the root key, 
(2) the dangling min of the root, 
(3) the comparisons done in the left subtree while searching the leftmost gap
(which does not exist in the subtrees any more!),
and (4) the external path lengths of the subtrees.
Two additional quantities are used to express these:
$L_n$ is the number of keys in the leftmost leaf; by definition we have 
$0\le \E{L_n} \le w-1 = \Oh(1)$.
$S_n$~is the number of internal nodes on the ``left \underline spine'' of the tree,
an essential parameter for the linear-search part of \proc{SpineSearch}.
$S_n$ is also the depth of the internal node with the smallest root key 
(ignoring dangling mins).
For ordinary BSTs, $S_n$ is essentially the number of left-to-right minima,
which is a well-understood parameter;
for (fringe-balanced) dangling-min BSTs, such a simple correspondence does not seem to hold.

We point out that the distribution of $P_n$ has a subtle complication, namely
that even conditional on $(J_1,J_2)$,
the quantities
$S_n$, $S_{J_1}$ and $P_{J_1}$ are \emph{not} independent:
all consider the \emph{same} left subtree!
For example, we always have $S_{J_1}=S_n-1$ (for $n\ge w$).
We will only compute the expected value here,
so by linearity, these dependencies can be ignored.

We will derive an asymptotic approximation 
using \wref{thm:DMTwc}, the distributional master theorem (DMT).

\begin{remark}%
For ordinary BSTs, the expectation of above quantities
is known precisely, and some generalizations for fringe-balanced trees 
are possible by solving an Euler differential equation for the generating function.
Unlike there, for dangling-min BSTs the resulting differential equation is \emph{not}
an Euler equation.
The case $t=0$ could be solved since the differential equation 
has order one~\cite{BronnimannCazalsDurand2003},
but there is little hope to obtain a solution 
for the generating function for $t\ge 1$.
\end{remark}

\begin{lemma}
\label{lem:spine-length-asymptotic}
	$\E{S_n} \sim \dfrac{1}{\harm{k}-\harm{t}} \ln n $.
\end{lemma}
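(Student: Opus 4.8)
The plan is to read the defining recurrence $S_n \eqdist 1 + S_{J_1}$ as an instance of the Distributional Master Theorem (\wref{thm:DMTwc}) and to feed in only the limiting shape of the relative subproblem size. Matching notation, the recurrence is one-sided ($s=1$), has unit coefficient $\ui{A_1}n \equiv 1$, and toll $T_n \equiv 1$, so that the required form $\E{T_n} \sim K n^\alpha \log^\beta n$ holds with $K=1$, $\alpha=0$ and $\beta=0$. In expectation the dependence between the tree and its left subtree is harmless, since $\E{S_{J_1} \given J_1}$ only depends on $J_1$, which is exactly the independent-copies setup the theorem assumes. The single nontrivial input is therefore the limit law of $\ui{Z_1}n = J_1/n$.

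First I would supply that limit. Since $J_1 = I_1 + t$ with $I_1 \eqdist \betaBinomial(n-1-k, t+1, t+1)$, \wref{lem:density-convergence-J} (applied with $d=-1-k$, $c=t$, and parameters $t+1,t+1$) shows that $J_1/n$ converges in density to $\betadist(t+1,t+1)$ and, crucially, verifies the uniform local estimate \weqref{eq:DMTwc-condition} with rate $\delta=1$. Because $\ui{A_1}n\equiv1$ the coefficient function is $a_1(z)\equiv1$, and $f_{Z_1^*}(z)\,a_1(z)$ equals the beta density $z^{t}(1-z)^{t}/\BetaFun(t+1,t+1)$, which for integer $t\ge0$ is a polynomial and hence Hölder-continuous on $[0,1]$. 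All hypotheses of \wref{thm:DMTwc} are thus met.

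Next I would compute the discriminating constant $H = 1 - \E{(Z_1^*)^{0} a_1(Z_1^*)} = 1 - 1 = 0$, so we land in the logarithmic regime (Case~\ref{case:DMTwc-H0}); this already accounts for the $\ln n$ order and yields $\E{S_n} \sim (\ln n)/\tilde H$, where, with $\alpha=0$, $\beta=0$ and $a_1\equiv1$, the constant collapses to $\tilde H = -\E{\ln Z_1^*}$. The entire statement therefore reduces to a single logarithmic moment of a symmetric beta variable.

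The last step, and the only genuine calculation, is to evaluate $\E{\ln Z_1^*}$ for $Z_1^*\eqdist\betadist(t+1,t+1)$. I would differentiate the moment identity $\E{(Z_1^*)^{s}} = \BetaFun(t+1+s,t+1)/\BetaFun(t+1,t+1)$ at $s=0$, obtaining $\E{\ln Z_1^*} = \psi(t+1)-\psi(2t+2)$, and then specialize the digamma function at integers via $\psi(j+1)=\harm{j}-\gamma$ to turn this into a difference of harmonic numbers, which fixes the leading constant of the asymptotic. I do not expect a deep obstacle: the recurrence is trivial and the master theorem does all the structural work, so the only points requiring care are (i) certifying the degeneracy $H=0$, so that the logarithmic case is indeed the operative one, and (ii) the digamma-to-harmonic-number bookkeeping that produces the stated constant.
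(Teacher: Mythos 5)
Your strategy coincides with the paper's own proof: apply \wref{thm:DMTwc} to the one-sided recurrence $S_n \eqdist 1 + S_{J_1}$ with $s=1$, $T_n\equiv 1$ (so $K=1$, $\alpha=0$, $\beta=0$), $\ui{A_1}n\equiv1$, verify condition \wref{eq:DMTwc-condition} via \wref{lem:density-convergence-J} so that $Z_1^*\eqdist\betadist(t+1,t+1)$, conclude $H=0$, and read off the constant from Case~2 as $\tilde H = -\E{\ln Z_1^*}$. All of this matches the paper, and in places you are more careful than it is (the Hölder check, the explicit parameters $c=t$ and $d=-1-k$). Your moment-differentiation identity $\E{\ln Z_1^*} = \psi(t+1)-\psi(2t+2)$ is also correct.

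The genuine gap is your final sentence, where you assert that the digamma-to-harmonic-number bookkeeping ``produces the stated constant.'' It does not. Since $k=2t+1$,
\[
	-\E{\ln Z_1^*} \wwrel= \psi(2t+2)-\psi(t+1) \wwrel= \harm{2t+1}-\harm{t} \wwrel= \harm{k}-\harm{t},
\]
whereas the lemma claims the constant $2(\harm{k+1}-\harm{t+1})$. These agree only for $t=0$: already for $t=1$ (\ie, $k=3$) one gets $\harm{3}-\harm{1}=\frac{5}{6}$ versus $2(\harm{4}-\harm{2})=\frac{7}{6}$, and as $t\to\infty$ the two quantities tend to $\ln 2$ and $2\ln 2$, respectively. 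So your argument, carried to the end, proves $\E{S_n}\sim \ln(n)/(\harm{k}-\harm{t})$, which contradicts the statement you set out to prove for every $t\ge1$; this mismatch had to be noticed and resolved, not waved through. For the record, your computation also puts a finger on the corresponding step of the paper's own proof, which obtains $-\E{\ln(D_1)}=2(\harm{k+1}-\harm{t+1})$ purely by citation: $\harm{k+1}-\harm{t+1}$ is the value of $-\sum_{r}\E{D_r\ln(D_r)}$ (the quantity needed in \wref{thm:spine-search-cmps}, where $\alpha=1$), not of $-\E{\ln(D_1)}$ (here $\alpha=0$), and the claimed identity holds only in the case $t=0$ of ordinary jumplists. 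Downstream only the weaker fact $\E{S_n}=\Oh(\log n)$ is used, so the paper's main theorems are unaffected, but the lemma's constant as stated \parenthesissign and hence any purported proof of it, including yours \parenthesissign cannot be correct for $t\ge1$.
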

\begin{proof}
	We apply \wref{thm:DMTwc} 
	to the distributional recurrence $S_n \eqdist S_{J_1} + 1$.
	It has the form of \wref{eq:DMTwc-distributional-recurrence}
	with (matching the notation of \wref{thm:DMTwc})
	$C_n \DMTvarEq S_n$. We have $s\DMTvarEq 1$ recursive term with size $J_1$
	plus a ``toll term'' $T_n \DMTvarEq 1$. 
	The latter has the asymptotic form $\E{T_n} = 1 \sim 1 \cdot n^0 \lg^0 n$ as $n\to\infty$,
	\ie, $K\DMTvarEq1$, $\alpha\DMTvarEq0$, $\beta\DMTvarEq0$. 
	Moreover, there is no ``coefficient'' in from of the recursive term, so
	$A_1 \DMTvarEq 1$.
	
	We next check the conditions.
	The independence assumptions are trivially fulfilled here, 
	in particular because $T_n$ is a fixed constant.
	We next consider \wref{eq:DMTwc-condition}.
	Recall that $J_1 \eqdist \betaBinomial(n-1-k; t+1,t+1) + t$.
	By \wref{lem:limit-law-beta-binomial} and the remark below it,
	$\ui{Z_1}n = \ui{J_1}n / n$ fulfills
	\begin{align*}
			n \Prob[\big]{\ui{Z_1}n \in (z-\tfrac1n,z] }
		&\wwrel=
			f_{Z_1^*} \bin\pm \Oh(n^{-1}),
	\end{align*}
	for $f_{Z_1^*} : [0,1] \to \R $ with $f_{Z_1^*}(z) = z^{t}(1-z)^{t} / \BetaFun(t+1,t+1)$.
	This function is a polynomial in $z$, so it has bounded derivative (on the compact domain $[0,1]$)
	and is hence Lipschitz continuous (and thus Hölder continuous).
	So \wref{eq:DMTwc-condition} is satisfied with $\delta \DMTvarEq 1$.
	The limiting relative subproblem size $Z_1^*$ has a $\betadist(t+1,t+1)$ distribution.
	
	For the second condition, \wref{eq:DMTwc-condition-coeffs},
	we find that $\E[\big]{ \ui{A_r}n \given \ui{Z_r}n \in (z-\tfrac1n,z] } = 1$
	since $A_1$ is constant.
	So this condition is trivially satisfied with $a_1(z) = 1$ (which is a Hölder-continuous function).
	We have now established that we can apply the DMT to our recurrence.
	
	To obtain the asymptotic approximation for $\E{S_n}$, we consider
	$H=1-\E{(Z_1^*)^0} = 0$, so Case~2 applies: 
	$\E{S_n} \sim \tilde H^{-1}\cdot \E{T_n} \ln n = \tilde H^{-1}\cdot \ln n$
	for the constant $\tilde H = -\sum_{r=1}^s \E{\ln(Z_r^*)}$.
	(Note that this constant only involved the limiting relative subproblem size $Z_r^*$,
	not the relative subproblem size $\ui{Z_1}n$ for a fixed $n$.)
	The expectation in $\tilde H$ is exactly the first part of \wref{lem:E-ln-D}, 
	so we find $\tilde H = \harm{k}-\harm{t}$.
	Now the claim follows by inserting above.
\end{proof}

\begin{remark}[Spine lengths]
\wref{lem:spine-length-asymptotic} implies that the expected left spine 
of the root is logarithmic~-- as one might expect in a random BST;
indeed, the expected left spine lengths of the root in a random BST and a dangling-min BST
differ only in lower order terms.
Note that the former is exactly $\harm{n}$ and the proof is elementary:
The left spine length in a BST is the number of left-to-right minima in the insertion order.
For dangling-min BSTs, no such simple argument is available.
\end{remark}
With these preparations, we can prove the main statement about search costs.

\begin{proof}[\wref{thm:results-search}]
	We again use the distributional master theorem (DMT);
	this time on the recurrence
	$P_n
		\rel\eqdist (n+1) + (S_n+L_n+1) + S_{J_1} 
				+P_{J_1} + P'_{J_2}$.
	The recurrence is more involved than the one for $S_n$
	that we just solved,
	but the distribution of subproblem sizes are the same,
	and we again have no coefficient in front of the recursive terms.
	Therefore, a large part of the argument can be copied from the proof
	of \wref{lem:spine-length-asymptotic}.
	
	We here have
	$C_n \DMTvarEq P_n$, there are  $s\DMTvarEq2$ recursive terms and 
	$T_n \DMTvarEq (n+1) + (S_n+L_n+1) + S_{J_1}$. 
	By \wref{lem:spine-length-asymptotic},
	all but the first summand in $\E{T_n}$ are actually in $\Oh(\log n)$, so from
	the initially complicated toll function, only $\E{T_n} \sim n$ remains in the leading term 
	as $n\to\infty$.
	We thus have $K\DMTvarEq 1$, $\alpha\DMTvarEq1$, $\beta\DMTvarEq 0$.
	
	The coefficients $A_r \DMTvarEq 1$ for $r\in\{1,2\}$, so \wref{eq:DMTwc-condition-coeffs}
	again holds trivially with $a_r(z)=1$.
	As in the proof of \wref{lem:spine-length-asymptotic}, 
	$Z_1^* \eqdist Z_2^* \eqdist \betadist(t+1,t+1)$ holds and condition 
	\wref{eq:DMTwc-condition} holds with the same $f_{Z_1^*}$.
	We find again $H=0$ (since $Z_1^* + Z_2^* = 1$), so Case~2 applies. 
	The constant $\tilde H$ this time involves the second part of \wref{lem:E-ln-D}:
	$\tilde H = -\sum_{r=1}^s \E{D_r \ln(D_r)} = \harm{k+1}-\harm{t+1}$.
	So we have 
	\(
			\E{P_n}
		\wwrel\sim 
			\frac{1}{\harm{k+1}-\harm{t+1}} n \ln n 
	\)
	and dividing by $n+1$ yields the claim.
\end{proof}

\subsection{Insertion Costs}
\label{sec:analysis-insert}

The steps taken by \proc{RestoreAfterInsert} depend on the position of the newly inserted element;
we denote here by $R$ the rank of the gap the new element is inserted into.
When the current sublist has $m$ nodes, we have $R\in[0..m]$.
Similar as for searches, we consider the average costs of insertion when all possible gaps are
equally likely to be requested.

Unlike for searches, the distribution of $R'$ in subproblems is \emph{not} uniform
even when $R$ is:
a close inspection of \proc{RestoreAfterInsert} reveals that 
(a) recursive calls in the jump-sublist always have $R'\ge 1$, and
(b) $R=0$ and $R=1$ yield $R'=0$ in the recursive call in the next-sublist;
in fact, once $R=0$ holds, we get this rank in all later recursive calls.
We can therefore handle this by splitting the cases $R=0$ and $R\ge 1$;
Also note that for the topmost call to \proc{RestoreAfterInsert},
$R=0$ is not possible, since no insertion before the header with dummy-key $-\infty$ is
possible.
This means that initially $R\eqdist \uniform[1..m]$ holds.
Recall that a jumplist on $m$ nodes stores only $n=m-1$ keys, so that there are only $n+1=m$ possible
gaps.
We obtain the following distributional recurrence for $B_m^{\ins}$, the random
number of re\underline balanced elements during insertion into the $R$th gap 
in a randomized median-of-$k$ jumplist with $m$ nodes.
(Note that \ifsiam{}{unlike in the pseudocode,} 
$m$ is here the number of nodes in the jumplist \emph{before} the insertion.)
\begin{align*}
		B_m^{\ins}
	&\rel\eqdist
		\begin{dcases}
			\begin{aligned}
				&F \cdot (m+1)
			\\[-.75ex]&\quad
				+
				(1-F) \Bigl( 
					 \indicator{R = 1} B_{J_1}^{\ins0} 
			\\[-.75ex]&\qquad
					+\indicator{2\le R \le J_1+1} B_{J_1}^{\ins} 
			\\[-.25ex]&\qquad
					+\indicator{R \ge J_1+2} B_{J_2}^{\ins} 
				\Bigr),
			\end{aligned}
		& 	m > w,
		\\[1ex]
			[m=w]\cdot(m+1),
		&	m\le w,
		\end{dcases}
\\[1ex]
		B_m^{\ins0}
	&\rel\eqdist
		\begin{dcases}
			\begin{aligned}
				&F \cdot (m+1) \bin+ (1-F) B_{J_1}^{\ins0},
			\end{aligned}
		& 	m > w,
		\\[1ex]
			[m=w]\cdot(m+1),
		&	m\le w,
		\end{dcases}
\\[1ex]
	&\mkern-20mu\text{where}\quad
		R
	\rel\eqdist 
		\uniform[1..m],\quad\;
		F
	\rel\eqdist 
		\bernoulli\Bigl(\frac k{m-1}\Bigr),
\end{align*}
All $B_m$ terms on the right-hand side denote independent copies of the family of
random variables and $R$ and $F$ are independent of $B_m$ and $(J_1,J_2)$.
Here $J_r = I_r + t+1$, $r\in\{1,2\}$, $J_1 \eqdist \betaBinomial(m-2-k; t+1,t+1)$
and $J_2 = m-2-k - J_1$ (as in~\wref{sec:median-of-k-jumplists-def}).

\begin{lemma}
\label{lem:B-n-ins0}
	$\E{B_m^{\ins0}} \sim \dfrac{k}{\harm{k} - \harm{t}}\ln m$.
\end{lemma}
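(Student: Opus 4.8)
The plan is to solve the recurrence for $B_m^{\ins0}$ with the Distributional Master Theorem (\wref{thm:DMTwc}), in complete analogy to the proof of \wref{lem:spine-length-asymptotic}: both recurrences have the same single-subproblem shape, recursing only into the next-sublist, and differ essentially only in their toll. Accordingly I would run the DMT with the node count $m$ as its recursion variable (the $\Oh(1)$ offset between $m$ and $n=m-1$ affects neither $\ln m\sim\ln n$ nor the leading coefficient), set $C_n \equiv B_m^{\ins0}$ and $s\equiv 1$, and read off the single recursive term $B_{J_1}^{\ins0}$ with coefficient $A_1 = 1-F$ and toll $T_m = F\cdot(m+1)$.

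First I would pin down the toll. Since $F \eqdist \bernoulli\bigl(\tfrac{k}{n-1}\bigr)$ is independent of $\vect J$, we get $\E{T_m} = \tfrac{k}{n-1}(m+1) = \tfrac{k(m+1)}{m-2} \to k$ as $m\to\infty$; hence $\E{T_m}\sim k$, giving $K\equiv k$, $\alpha\equiv 0$ and $\beta\equiv 0$ in \wref{thm:DMTwc}. Next I would check the two convergence conditions. From $\vect J = \vect I + \vect t + 1$ with $\vect I \eqdist \dirichletMultinomial(n-2-k;t+1,t+1)$ we have $J_1 \eqdist \betaBinomial(n-2-k,t+1,t+1) + (t+1)$, so \wref{lem:density-convergence-J} (with fixed offsets $c=t+1$, $d=-2-k$) yields condition \weqref{eq:DMTwc-condition} with $Z_1^* \eqdist \betadist(t+1,t+1)$; write $D_1 \ce Z_1^*$. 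For the coefficient condition \weqref{eq:DMTwc-condition-coeffs}, independence of $F$ and $\vect J$ gives $\E{A_1 \given Z_1 \in (z-\tfrac1n,z]} = \E{1-F} = 1 - \tfrac{k}{n-1} \to 1$, so $a_1(z)\equiv 1$, which is trivially Hölder-continuous.

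With these parameters $H = 1 - \E{(D_1)^0 a_1(D_1)} = 1-1 = 0$, so Case~\ref{case:DMTwc-H0} applies and gives $\E{B_m^{\ins0}} \sim \E{T_m}\,\ln m / \tilde H$ with $\tilde H = -\E{\ln D_1}$. Evaluating this beta-integral as in the proof of Proposition~2.54 of \citep{Wild2016} (the same expectation that enters \wref{lem:spine-length-asymptotic}) gives $-\E{\ln D_1} = \harm{k+1}-\harm{t+1}$, and combining with $\E{T_m}\sim k$ produces exactly the claimed $\E{B_m^{\ins0}} \sim \frac{k}{\harm{k+1}-\harm{t+1}}\ln m$.

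I do not expect a serious obstacle: the recurrence is a clean single-recursion instance of the DMT, and its two nonelementary ingredients --- the density convergence of $J_1/n$ and the value of $\E{\ln D_1}$ --- are both imported from results already at hand. The point needing the most care is verifying that the $\bernoulli\bigl(\tfrac{k}{n-1}\bigr)$ weight enters only through the toll while leaving $a_1\equiv 1$; this is precisely where the stated independence of $F$ and the subproblem sizes is used, and it is what keeps $H=0$ and hence places us in the logarithmic regime of Case~\ref{case:DMTwc-H0} rather than shifting the balance.
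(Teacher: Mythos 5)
Your proposal retraces the paper's own proof essentially step for step: the same DMT instantiation with $s=1$, toll $\E{T_m}=\E{F(m+1)}\to k$ (so $K=k$, $\alpha=\beta=0$), coefficient $a_1\equiv1$ from the independence of $F$ and $\vect J$, density convergence of $J_1/m$ to $D_1\eqdist\betadist(t+1,t+1)$ via \wref{lem:density-convergence-J}, hence $H=0$ and the logarithmic case with $\tilde H=-\E{\ln D_1}$. Up to that point everything you write is sound and matches the paper, only in more detail.

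However, the one nontrivial evaluation at the very end is wrong --- and it is wrong in the paper's own proof in exactly the same way, so as a reconstruction you are faithful, but as mathematics the step fails. For $D_1\eqdist\betadist(t+1,t+1)$ one has
\begin{align}
		-\E{\ln D_1}
	&\wwrel=
		\psi(2t+2)-\psi(t+1)
	\wwrel=
		\harm{k}-\harm{t}
	\wwrel=
		\harm{k+1}-\harm{t+1}+\frac1{2(t+1)},
\end{align}
not $\harm{k+1}-\harm{t+1}$; already for $t=0$ we get $D_1\eqdist\uniform(0,1)$ and $-\E{\ln D_1}=1\ne\tfrac12=\harm{2}-\harm{1}$. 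The value $\harm{k+1}-\harm{t+1}$ is the value of the \emph{different} expectation $-\sum_{r=1}^{2}\E{D_r\ln D_r}$, which is the $\tilde H$ of the two-subproblem recurrences with linear toll (\wref{thm:spine-search-cmps}, \wref{thm:insert-rebalanced-elements}); with $\alpha=0$ and a single subproblem that expectation never enters. Your own parenthetical remark exposes the tension: you call $-\E{\ln D_1}$ ``the same expectation that enters \wref{lem:spine-length-asymptotic}'', yet that proof assigns it the value $2(\harm{k+1}-\harm{t+1})$ --- one quantity cannot have both values (in fact it has neither in general; they coincide only at $t=0$, where $\harm{k}-\harm{t}=2(\harm{k+1}-\harm{t+1})=1$). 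Carrying the correct value through Case~2 gives $\E{B_m^{\ins0}}\sim\frac{k}{\harm{k}-\harm{t}}\ln m$, \eg, $\ln m$ rather than $2\ln m$ for $k=1$, so the lemma's constant itself needs repair. The slip is harmless downstream: the proof of \wref{thm:insert-rebalanced-elements} uses only that $\frac1n\E{B_n^{\ins0}}=\Oh(n^{-1}\log n)$, \ie, the logarithmic order of growth, never the constant.
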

\begin{proof}
We use once more the distributional master theorem.
As before, $Z_1^* \eqdist \betadist(t+1,t+1)$ and the condition \wref{eq:DMTwc-condition} 
is satisfied by \wref{lem:limit-law-beta-binomial}.
We have $\E{T_n} \DMTvarEq \E{F(n+1)} \sim k = \Theta(1)$.
Unlike before, we here have a non-constant coefficient $\ui{A_1}n \DMTvarEq 1-F$ in front of 
the recursive term, but since
$\E{1-F} = 1 \pm \Oh(n^{-1})$, \wref{eq:DMTwc-condition-coeffs} is again fulfilled with $a_1(z) = 1$.
As in the proof of \wref{lem:spine-length-asymptotic},
we find $H = 0$ (Case~2) and with the claim follows from
$\tilde H = - \E{\ln D_1} = \harm{k} - \harm{t}$ (\wref{lem:E-ln-D}).
\end{proof}

\begin{proof}[\wref{thm:results-insert}]
Towards applying the DMT on $C_n\DMTvarEq B_n^{\ins}$, we compute
\begin{align*}
		\E{T_n} 
	&\wwrel\DMTvarEq
		\E[\Big]{F(n+1) + (1-F)\indicator{R=1}B_n^{\ins0}}
\\	&\wwrel=
		\frac {k(n+1)}{n-1} \bin+ \frac{n-1-k}{n-1} \cdot \frac1n \cdot \E{B_n^{\ins0}}
\\	&\wwrel{\eqwithref[r]{lem:B-n-ins0}}
		k \bin\pm \Oh(n^{-1}\log n).
\end{align*}
As usual, we have $Z_r^* \eqdist \betadist(t+1,t+1)$, $r\in\{1,2\}$,
and \wref{eq:DMTwc-condition} is fulfilled by
\wref{lem:limit-law-beta-binomial}.
For the coefficients of the recursive terms holds
\begin{align*}
		&\E*{\ui{A_1}n \given \ui{Z_1}n \in (z-\tfrac1n,n]}
\\	&\wwrel\DMTvarEq
		\Prob[\Big]{2\le R\le J_1+1 \given \ui{Z_1}n \in (z-\tfrac1n,n]}
\\	&\wwrel=
		\Prob[\Big]{\tfrac{J_1}{n} \given \ui{Z_2}n \in (z-\tfrac1n,n]}
\\	&\wwrel=
		\Prob[\big]{\ui{Z_1}n \given \ui{Z_1}n \in (z-\tfrac1n,n]}
\\	&\wwrel=
		z \bin\pm \Oh(n^{-1})
		,
\shortintertext{and similarly}
		&\E[\big]{\ui{A_2}n \given \ui{Z_2}n \in (z-\tfrac1n,n]}
\\	&\wwrel\DMTvarEq
		\Prob[\big]{R\ge J_1+2 \given \ui{Z_2}n \in (z-\tfrac1n,n]}
\\	&\wwrel=
		z \bin\pm \Oh(n^{-1})
		,
\end{align*}
so that \wref{eq:DMTwc-condition-coeffs} holds with $a_1(z) = a_2(z) = z$,
and we can apply the DMT.
Since $H=1-\sum_{r=1}^2 \E[\big]{(Z_r^*)^0 \*a_r(Z_r^*)}=1-\sum_{r=1}^2 \E{Z_r^*}=0$, 
we again have Case~2 and find
$\tilde H = -\sum_{r=1}^2 \E{D_r \ln D_r} = \harm{k+1}-\harm{t+1}$
with \wref{lem:E-ln-D}.
This proves the claim.
\end{proof}

\subsection{Deletion Costs}
\label{sec:analysis-delete}

As for insertion, we analyze the size of the sublist $B_m^{\del}$ 
that is rebuilt using \proc{Rebalance} 
when the rank of the deleted element is chosen uniformly.
Initially, we have $R\eqdist \uniform[2..m]$ since
the dummy key $-\infty$ in the header cannot be deleted.
In recursive calls, also $R=1$ is possible, 
and we remain in this case for good whenever we enter it once.
We can thus characterize the deletion costs using the two quantities $B_m^{\del}$ and
$B_m^{\del1}$.
As for insertion, $m$ is the ``old'' size of the jumplist, \ie, 
the number of nodes \emph{before} the deletion.
\begin{align*}
		B_m^{\del}
	&\rel\eqdist
		\begin{dcases}
			\begin{aligned}
				&F \cdot (m-1)
			\\[-.75ex]&\quad
				+
				(1-F) \Bigl( 
					 \indicator{R = 2} B_{J_1}^{\del1} 
			\\[-.75ex]&\qquad
					+\indicator{3\le R \le J_1+1} B_{J_1}^{\del} 
			\\[-.25ex]&\qquad
					+\indicator{R \ge J_1+3} B_{J_2}^{\del} 
				\Bigr),
			\end{aligned}
		& 	m > w,
		\\[1ex]
			[m=w]\cdot1,
		&	m\le w,
		\end{dcases}
\\[1ex]
	&\mkern-20mu\text{where}\;\;
		R
	\rel\eqdist 
		\uniform[2..m],\quad
		\text{and cond.\ on $(R,J_1,J_2)$}
\\		&\phantom{\mkern-20mu\text{where}\;\;} F
	\rel\eqdist 
		\begin{cases}
			\bernoulli\bigl(\frac t{J_1-1}\bigr), & R \le J_1 + 1;\\
			1, & R = J_1 + 2;\\
			\bernoulli\bigl(\frac t{J_2-1}\bigr), & R \ge J_1 + 3,
		\end{cases}
\\[1.5ex]
		B_m^{\del1}\!
	&\rel\eqdist
		\begin{dcases}
			F_1 \cdot (m-1) + (1-F_1) B_{J_1}^{\del1},
		& 	m > w,
		\\[1ex]
			[m=w]\cdot1,
		&	m\le w,
		\end{dcases}
\\
	&\mkern-20mu\text{where cond.\ on $J_1$}\;\;
		F_1
	\rel\eqdist 
		\bernoulli\Bigl(\frac t{J_1-1}\Bigr).
\end{align*}
As before, the $B_m$ terms on the right are independent copies of the family of
random variables and $R$ and $F$/$F_1$ are independent of $B_m$ and $(J_1,J_2)$.
We have $J_r = I_r + t+1$, $r\in\{1,2\}$, $J_1 \eqdist \betaBinomial(m-2-k; t+1,t+1)$
and $J_2 = m-2-k - J_1$.
The (asymptotic) solution of these recurrences is similar to the case of insertion,
but a few more complications arise.

\begin{lemma}
\label{lem:B-n-del1}
	For $t=0$ we have $\E{B_m^{\del1}} \le 1$.\needhspace{8em}
	If $t\ge 1$, $\E{B_m^{\del1}} \sim \dfrac{k}{\harm{k}-\harm{t}} \ln m$.
\end{lemma}
\begin{proof}
	For $t=0$, we have $F_1=0$ (almost surely) in each iteration, so the recurrence
	collapses to its initial condition, which is at most $1$.
	In the following, we now consider $t\ge1$.
	The proof will ultimately use the DMT on $C_n \DMTvarEq B_n^{\del1}$, 
	but we need a few preliminary results 
	to compute the toll function $\E{T_n} \DMTvarEq \E{F_1(n-1)}$.
	We write the $a=b\pm d$ to mean $b-d \le a \le b+d$ here and throughout.
	With that notation, we give the following elementary approximation:
	\begin{align}
	\label{eq:t-over-m-plus-t}
		\forall t \in\N_{\ge1} \:\forall n\ge0 \rel: 
		\smash{\frac{t}{n+t}} \rel= t n^{\underline{-1}} \bin\pm t(t-1) n^{\underline{-2}}
		.
	\end{align}
	Now, we compute the expectation of $F_1$ conditional on $I_1 = J_1 - t - 1$.
	\begin{align*}
			\E{F_1\given I_1}
		&\wrel=
			\frac t{J_1-1}
		\wrel=
			\frac{t}{I_1+t}
	\\	&\wrel{\eqwithref{eq:t-over-m-plus-t}}
			t \cdot I_1^{\underline{-1}} \wbin\pm t(t-1)\cdot I_1^{\underline{-2}}
			.
	\end{align*}
	Next, we use the stochastic representation of beta-binomials
	(recall \wref{sec:preliminaries});
	we take expectations over $I_1\eqdist\binomial(\eta,D_1)$ with $\eta=m-2-k$, 
	but conditional on $D_1$. We write $D_2=1-D_1$. Then it holds that
	\begin{align*}
			\quad&\mkern-25mu
			\E{F_1 \given D_1}
	\\[-.5ex]	&\wrel{\eqwithref[r]{lem:binomial-negative-factorial-moments}}
			\frac t{\eta+1} D_1^{-1} (1-D_2^{\eta+1}) \bin\pm t(t-1) D_1^{-2} \eta ^{\underline{-2}}
			.
	\end{align*}
	Finally, we also compute the expectation \wrt $D_1 \eqdist \betadist(t+1,t+1)$;
	note that for $t\ge 2$, $\E{D_1^{-2}}$ exists and has a finite value (independent of $n$);
	whereas for $t=1$, the error term is zero.
	So we find in both cases with \wref{lem:powers-to-parameters}:
	\begin{align*}
			\E{F_1}
		&\rel=
			\frac t{\eta+1} \E{D_1^{-1}} 
			-\frac t{\eta+1} \E{D_1^{-1}D_2^{\eta+1}} 
			\bin\pm \Oh(\eta^{-2})
	\\	&\rel=
			\frac t{\eta+1} \frac{k}{t}
			-\frac t{\eta+1} \frac{(t+1)^{\overline{\eta+1}}}{t(k+1)^{\overline{\eta}}}
			\bin\pm \Oh(\eta^{-2})
	\\	&\rel=
			\frac k{\eta+1}
			-\frac {(t+1)(t+2)}{(\eta+1)(\eta+2)} 
				\underbrace{ \frac{(t+3)^{\overline{\eta-1}}}{(k+2)^{\overline{\eta-1}}}}_{<1}
			\bin\pm \Oh(\eta^{-2})
	\\[-3ex]	&\rel=
			\frac k{\eta+1}
			\bin\pm \Oh(\eta^{-2})
			.
	\numberthis\label{eq:E-F-del}
	\end{align*}
	With this we finally get $\E{T_n} \DMTvarEq \E{F_1 (n-1)} = k \pm \Oh(n^{-1})$.
	$Z_1^* \eqdist \betadist(t+1,t+1)$ and fulfills \wref{eq:DMTwc-condition}.
	For \wref{eq:DMTwc-condition-coeffs}, we compute
	\begin{align*}
			&\E[\big]{\ui{A_1}n\given \ui{Z_1}n \in (z-\tfrac1n,z]} 
	\\	&\wrel= 
			\E[\big]{1-F_1\given \ui{Z_1}n \in (z-\tfrac1n,z]}
	\\	&\wrel=
			1 \pm \Oh(n^{-1}).
	\end{align*}
	So the DMT applies;
	we have $H=0$, \ie, Case~2. The claim follows with  
	$\tilde H = -\E{\ln Z_1^*} = \harm{k}-\harm{t}$.~
\end{proof}

\begin{proof}[\wref{thm:results-delete}]
We start with computing the conditional expectation of $F$, the coin flip indicator.
\begin{align*}
		\E{F\given J_1}
	&\rel=
		 \frac{J_1}{n-1}  \frac{t}{J_1-1}
		+\frac1{n-1}  1
		+\frac{J_2-1}{n-1} \frac{t}{J_2-1}
\\	&\rel=
		\frac{2t+1}{n-1} \bin+ \frac1{n-1} \cdot \frac t{J_1-1}
		.
\shortintertext{Hence}
		\E{F}
	&\rel{\eqwithref{eq:E-F-del}}
		\frac{2t+1}{n-1} \bin+ \frac1{n-1} \cdot \frac{k}{\eta+1} \pm \Oh(n^{-3})
\\	&\rel=
		\frac{k}{n-1} \bin\pm \Oh(n^{-2})
		.
\end{align*}
Towards applying the DMT on $C_n\DMTvarEq B_n^{\del}$, we compute
\begin{align*}
		\E{T_n} 
	&\wrel\DMTvarEq
		\E[\Big]{F(n-1) + (1-F)\indicator{R=2}B_n^{\del1}}
\\[-.5ex]	&\wrel{\eqwithref[r]{lem:B-n-del1}}
		k \wbin\pm \Oh(n^{-1} \log n)
		.
\end{align*}
We have $Z_r^* \eqdist\betadist(t+1,t+1)$ and 
\wref{eq:DMTwc-condition} is fulfilled.
Similarly as in \wref{sec:analysis-insert},
we find that \wref{eq:DMTwc-condition-coeffs} holds with
$a_1(z) = a_2(z) = z$.
Once more we have $H=0$ and Case~2 applies,
and the claim follows
with $\tilde H = -\sum_{r=1}^2 \E{D_r \ln D_r} = \harm{k+1}-\harm{t+1}$.
\end{proof}

\subsection{Memory Requirements}

We assume that
a pointer requires one word of storage, and so does an integer that can take values in $[0..n+1]$.
We do not count memory to store the keys since any (general-purpose) 
data structure has to store them.
This means that a plain node requires $1$ word of (additional) storage, 
and a jump node needs $3$ additional words (two pointers and one integer).
Let $A_n$ denote the (random) number of jump nodes, excluding the dummy header, 
of a random median-of-$k$ jumplist with leaf size $w$ on $n$ keys, 
then its additional memory requirement is
$3 (A_n+1) + 1(n-A_n)$.
It remains to show that $A_n$ is asymptotically at most
\( 1/ \bigl((w+1)(\harm{k+1}-\harm{t+1})\bigr) n \).

$A_n$ counts the internal nodes in a random
fringe-balanced dangling-min BST over $n$ keys;
a distributional recurrence is thus easy to set up:
\begin{align*}
		A_n
	&\wrel\eqdist
		\begin{cases}
			1 + A_{J_1} + A_{J_2}, 
		&	n > w-1,
		\\
			0, 
		&	n \le w-1.
		\end{cases}
\end{align*}
Here again $J_r = I_r + t$, $r\in\{1,2\}$, $J_1 \eqdist \betaBinomial(n-1-k; t+1,t+1)$
and $J_2 = n-1-k - J_1$.
For $A_n$, the DMT only gives us $\E{A_n} = \Oh(n)$ (Case~3).
It is easy to see that $\E{A_n}$ is also $\Omega(n)$,
but a precise leading-term seems very hard to obtain.

\begin{proof}[\wref{thm:results-memory}]
	The recurrence for $A_n$ is very similar to that for the number of partitioning steps
	in median-of-$k$ Quicksort with Insertionsort threshold $w-1$;
	the only difference is that we there have $I_1\eqdist I_2 \eqdist \betaBinomial(n-k;t+1,t+1)$,
	\ie, with $n-k$ instead of $n-k-1$.
	By monotonicity, $\E{A_n}$ is at most the number of partitioning steps in Quicksort
	since also the subproblems sizes are smaller.
	The number of partitioning steps in median-of-$k$ Quicksort with Insertionsort threshold $M$
	is $1/\bigl((M+2)(\harm{k+1}-\harm{t+1})\bigr) n \pm \Oh(1)$, see, \eg,
	\cite[p.\,327]{Hennequin1989}.
	Setting $M=w-1$ yields the claim.
\end{proof}

\section{Conclusion}
\label{sec:conclusion}

In this article, we presented median-of-$k$ jumplists and analyzed their 
efficiency in terms of the expected number of comparisons (for searches) 
and rebalanced elements (for updates). 
The precise analysis of insertion and deletion costs is also novel for
the original version of jumplists ($k=1$).

Our analysis shows that a search profits from sampling;
in particular going from $k=1$ to $k=3$ entails significant savings:
$\frac{12}7 \ln n \approx 1.714\ln n$ instead of $2 \ln n$ comparisons
on average.
As for median-of-$k$ Quicksort, we see diminishing returns for much larger~$k$.
For jumplists, also the cleanup after insertions and deletions gets more expensive;
the effort grows linearly with $k$. Very large $k$ will thus be harmful.

The efficiency of insertion and deletion depends on
both the time for search and the time for cleanup, 
so it is natural to ask for optimal $k$.
Since the cost units are rather different 
(comparisons vs.\ rebalanced elements) we need a weighing factor.
Depending on the relative weight $\xi\in[0,1]$ of comparisons,
we can compute optimal~$k$, see \wref{fig:optimal-k}.
In the realistic range, we should try $k=1$, $3$, or $5$,
unless we do many more searches than~updates.

\begin{figure}
	\plaincenter{\adjustbox{max width=.9\linewidth}{\includegraphics[scale=.55]{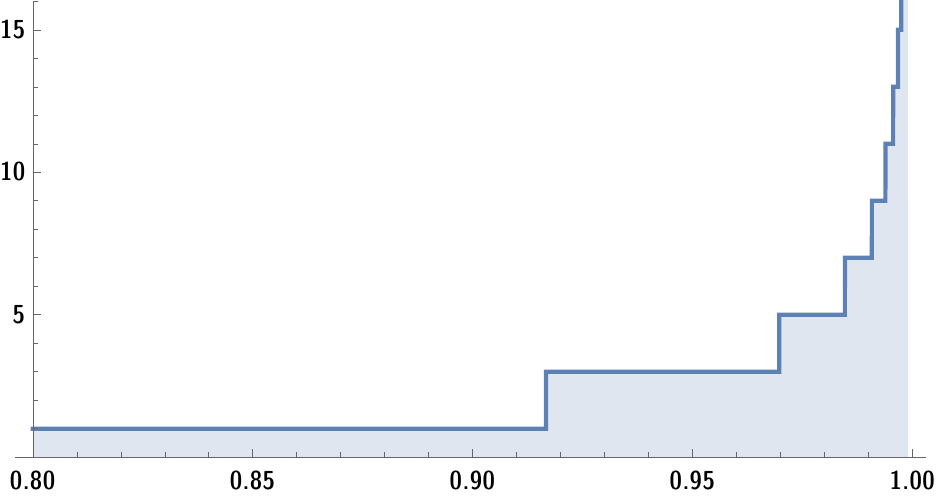}}}%
	\caption{%
		The $k$ that minimizes the leading-term coefficient of
		total costs of insertion/deletion, 
		if one comparison costs $\xi\in[0,1]$ and each rebalanced element costs $1-\xi$,
		\ie, $\arg \min_{k} \xi \cdot \frac1{H(t)} + (1-\xi) \cdot \frac k{H(t)}$
		as a function of $\xi$.%
	}%
	\label{fig:optimal-k}
\end{figure}

We conducted a small running time study based on a proof-of-concept implementation~\cite{codeWild2016} 
in Java
that confirms our analytical findings:
Sampling leads to some savings for searches, 
but slows down insertions and deletions significantly. 
Comparing running times with that of Java's \texttt{TreeMap} (a red-black tree implementation) 
shows that our data structure is only partially competitive: 
for iterating over all elements, jumplists are about 50\% faster, 
but searches are between 20\% and 100\% slower (depending on the choice for $w$) 
and for insertions/deletions \texttt{TreeMap}s are 5 to 10 times faster. 
However, \texttt{TreeMap}s use 4 additional words per key 
(without even storing subtree sizes needed for efficient rank-based access),
whereas our jumplists never need more than $\sim 2.\overline 3$ additional words per key
and less than $1.04$ with $w\approx100$.
For $n=10^6$ keys, $w\approx 100$ did not affect searches much ($+25\%$) 
but actually sped up insertions and deletions (roughly by a factor of~2!).

\subsection{Future Work}
Some interesting questions are left open. 
What is the optimal choice for $w$? 
Answering this question requires second-order terms of search, insertion and deletion costs; 
due to the underlying mathematical challenges it is unlikely that those can be computed exactly, 
but an upper bound using analysis results on Quicksort should be possible. 
Other future directions are 
the analysis of branch misses, in particular in the context of an asymmetric sampling strategy, 
and the design of a ``bulk insert'' algorithm that is faster than inserting elements subsequently,
one at a time. 

On modern computers the cache performance of data structures is important for their 
running time efficiency. 
Here, a larger fanout of nodes is beneficial since it reduces the expected number of I/Os. 
For jumplists this can be achieved by using more than one jump pointer in each node. 
The case of two jump pointers per node has been worked out in detail~\cite{Neumann2015},
but the general scheme invites further investigation.

\ifsiam{}{

%
%      _    ____  ____  _____ _   _ ____ _____  __
%     / \  |  _ \|  _ \| ____| \ | |  _ \_ _\ \/ /
%    / _ \ | |_) | |_) |  _| |  \| | | | | | \  / 
%   / ___ \|  __/|  __/| |___| |\  | |_| | | /  \ 
%  /_/   \_\_|   |_|   |_____|_| \_|____/___/_/\_\
%                                                 
%

\clearpage
\appendix

% Reset to onecolumn mode
\onecolumn
\normalsize

\ifsubmission{
	\KOMAoptions{headings=big}
	\addtokomafont{paragraph}{\color{black!65}}
	\RedeclareSectionCommand[
		beforeskip=-1.5\bigskipamount,
		afterskip=1.\bigskipamount,
	]{section}
	\RedeclareSectionCommand[
		beforeskip=-1\bigskipamount,
	%		afterskip=0.75\baselineskip,
		afterskip=1\medskipamount,
	]{subsection}
	\typearea{10}
	\raggedbottom
}{}

\numberwithin{algorithm}{section}
\numberwithin{table}{section}
\numberwithin{figure}{section}
\numberwithin{equation}{section}

\part*{Appendix}
\pdfbookmark[0]{Appendix}{}
\ifsiam{}{%
	\manualmark%
	\markleft{\mytitle}%
%	\automark*[section]{}%
	\markright{Appendix}
}
%\ifdraft{
%	\renewcommand\thesection{\fbox{\Alph{section}}}
%}{}

% !TeX root = jumplists
\section{Index of Notation}
\label{app:notations}

In this appendix, we collect the notations used in this work.
%Some might be standard, but I prefer 
%including a few more to misunderstandings caused by omissions.

\newcommand\ilmenau[2][]{%
%	\ifdraft{%
%		{\footnotesize\linebreak[3]\colorbox{brown!10}{%
%			(Team Ilmenau calls \ifthenelse{\equal{#1}{}}{this}{#1} ``#2'')%
%		}}%
%	}{%
%	}%
}

\newlength\notationwidth
\setlength\notationwidth{9em}

\subsection{Generic Mathematical Notation}
\begin{notations}[\notationwidth]
\notation{$\N$, $\N_0$, $\Z$, $\Q$, $\R$}
	natural numbers $\N = \{1,2,3,\ldots\}$, 
	$\N_0 = \N \cup \{0\}$,
	integers $\Z = \{\ldots,-2,-1,0,1,2,\ldots\}$,
	rational numbers $\Q$,
	real numbers $\R$.
\notation{$\R_{>1}$, $\N_{\ge3}$ etc.}
	restricted sets $X_\mathrm{pred} = \{x\in X : x \text{ fulfills } \mathrm{pred} \}$.
\notation{$0.\overline 3$}
	repeating decimal; $0.\overline3 = 0.333\ldots = \frac13$; \\
	numerals under the line form the repeated part of 
	the decimal number.
%\notation{$\Re z$, $\Im z$}
%	real and imaginary part of $z\in\C$.
\notation{$\ln(n)$, $\ld(n)$}
	natural and binary logarithm; $\ln(n) = \log_e(n)$, $\ld(n) = \log_2(n)$.
%\notation{$\vect x$}
%	to emphasize that $\vect x$ is a vector, it is written in \textbf{bold;}\\
%	components of the vector are not written in bold: $\vect x = (x_1,\ldots,x_d)$;\\
%	unless stated otherwise, all vectors are column vectors.
%\notation{$\lim_{x\to a^+}$, $\lim_{x\to a^-}$}
%	directed limits; 
%	$\lim_{x\to a^+ f(x)}$ is $\lim_{n\to\infty} f(a + \epsilon_n)$ and 
%	$\lim_{x\to a^- f(x)}$ is $\lim_{n\to\infty} f(a - \epsilon_n)$
%	for a strictly positive sequence $(\epsilon_n)_{n\in\N}$ 
%	with $\lim_{n\to\infty} \epsilon_n = 0$.
\notation{$X$}
	to emphasize that $X$ is a random variable it is Capitalized.
\notation{$[a,b)$}
	real intervals, the end points with round parentheses are excluded, 
	those with square brackets are included.
\notation{$[m..n]$, $[n]$}
	integer intervals, $[m..n] = \{m,m+1,\ldots,n\}$;
	$[n] = [1..n]$.
\notation{$[\text{stmt}]$, $[x=y]$}
	Iverson bracket, $[\text{stmt}] = 1$ if stmt is true, $[\text{stmt}] = 0$ otherwise.
%\notation{$\|\vect x\|_p$}
%	$p$-norm;
%	for $\vect x\in\R^d$ and $p\in\R_{\ge1}$ we have 
%	$\|\vect x\|_p = \bigl(\sum_{r=1}^{d} |x_r|\bigr)^{1/p}$.
%\notation{$\|\vect x\|_\infty$}
%	$\infty$-norm or maximum-norm; 
%	for $\vect x\in\R^d$ we have $\|\vect x\|_\infty = \max_{r=1,\ldots,d} |x_r|$.
%\notation{$\vect x+1$, $2^{\vect x}$, $f(\vect x)$}
%	element-wise application on vectors; $(x_1,\ldots,x_d) + 1  = (x_1+1,\ldots,x_d+1)$ and 
%	$2^{\vect x} = (2^{x_1},\ldots,2^{x_d})$; 
%	for any function $f:\C\to\C$ write $f(\vect x) = (f(x_1),\ldots,f(x_d))$ etc.
%\notation{$\total{\vect x}$}
%	``total'' of a vector; for $\vect x = (x_1,\ldots,x_d)$, 
%	we have $\total{\vect x} = \sum_{i=1}^d x_i$.
%\notation{$\vect x^T$, $\vect x^T \vect y$}
%	``transpose'' of vector/matrix $\vect x$; 
%	for $\vect x, \vect y \in \R^n$, we write
%	$\vect x^T \vect y = \sum_{i=1}^n x_i y_i$.
%\notation{$\openball[\epsilon]{z}$, $\closedball[\epsilon]{z}$}
%	open resp.\ closed $\epsilon$-neighborhood of $z$ (in $\C$ if not stated otherwise), 
%	also called $\epsilon$-\underline ball around $z$;\\ 
%	$\openball[\epsilon]{z} \ce \{ z'\in\C : |z'-z| < \epsilon \}$,
%	$\closedball[\epsilon]{z} \ce \{ z'\in\C : |z'-z| \le \epsilon \}$.
%\notation{$\characteristicvector[s]{\mathcal I}$}
%	characteristic vector of a subset $\mathcal I\subseteq [s]$, 
%	see \wpeqref{eq:def-characteristic-vector-subset};
%	if $s$ is clear from the context we write $\characteristicvector{\mathcal I}$ only.
\notation{$\harm n$}
	$n$th harmonic number; $\harm n = \sum_{i=1}^n 1/i$.
\notation{$\Oh(f(n))$, $\pm\Oh(f(n))$, $\Omega$, $\Theta$, $\sim$}
	asymptotic notation as defined, \eg, by \cite[Section A.2]{Flajolet2009};
	$f=g\pm\Oh(h)$ is equivalent to $|f-g| \in \Oh(|h|)$.
\notation{$x \pm y$}
	$x$ with absolute error $|y|$; formally the interval $x \pm y = [x-|y|,x+|y|]$;
	as with $\Oh$-terms, we use ``one-way equalities'': $z=x\pm y$ instead of $z \in x \pm y$.
%\notation{$\Delta_d$}
%	$(d-1)$-dimensional standard open simplex, 
%	$\Delta_d
%		\wwrel\ce 
%			\bigl\{
%				\vect x \in \R^{d-1}
%				\wrel: 
%				\vect x > 0 \, 
%				\rel\wedge 
%				\total{\vect x} < 1
%			\bigr\} 
%		\wwrel\subseteq (0,1)^{d-1}
%	$
%\notation{$\blacktriangle_d$}
%	$(d-1)$-dimensional standard closed simplex, 
%	$\blacktriangle_d
%		\wwrel\ce 
%			\bigl\{
%				\vect x \in \R^{d}
%				\wrel: 
%				\vect x \ge 0 \, 
%				\rel\wedge 
%				\total{\vect x} = 1
%			\bigr\} 
%		\wwrel\subseteq [0,1]^{d}
%	$
%\notation{$\vect x \propto \vect y$}
%	$\vect x$ is proportional to $\vect y$, $\exists \lambda>0 : \vect x = \lambda \cdot \vect y$.
\notation{$\Gamma(z)$}
	the gamma function, $\Gamma(z) = \int_0^\infty t^{z-1}e^{-t} \, dt$.
\notation{$\psi(z)$}
	the digamma function, $\psi(z) = \frac d{dz} \ln(\Gamma(z))$.
\notation{$\BetaFun(\alpha,\beta)$}
	the beta function, $\BetaFun(\alpha,\beta) = \int_0^1 z^{\alpha-1}(1-z)^{\beta-1}\, dz$
%\notation{$\BetaFun(\alpha_1,\ldots,\alpha_d)$}
%	$d$-dimensional beta function
%	$\BetaFun(\alpha_1,\ldots,\alpha_d) = \prod_{i=1}^d \Gamma(\alpha_i)/ \Gamma(\alpha+\beta)$.
%\notation{$I_{a,b}(\alpha,\beta)$}
%	incomplete regularized beta function;
%	$I_{a,b}(\alpha,\beta) = \int_a^b z^{\alpha-1}(1-z)^{\beta-1} \,dz \,\big/\, \BetaFun(\alpha,\beta)$.
\notation{$a^{\underline b}$, $a^{\overline b}$}
	factorial powers notation of Graham et al.~\cite{ConcreteMathematics}; 
	``$a$ to the $b$ falling resp.\ rising.''
\notation{$h(x)$}
	the binary base-$e$ entropy function
	$h(x) = -x\ln(x) - (1-x) \ln(1-x)$.
\end{notations}

\subsection{Stochastics-related Notation}
\begin{notations}[\notationwidth]
\notation{$\Prob{E}$, $\Prob{X=x}$}
	probability of an event $E$ resp.\ probability for random variable $X$ to
	attain value $x$.
\notation{$\E{X}$}
	expected value of $X$.
%	; I write $\E{X\given Y}$ for the conditional expectation
%	of $X$ given $Y$, and $\Eover X{f(X)}$ to emphasize that expectation is taken 
%	\wrt random variable~$X$.
\notation{$X\eqdist Y$}
	equality in distribution; $X$ and $Y$ have the same distribution.
\notation{$\indicatornobraces{E}$, $\indicator{X\le 5}$}
	indicator variable for event $E$, \ie, $\indicatornobraces{E}$ is $1$ if $E$
	occurs and $0$ otherwise;
	$\{X\le 5\}$ denotes the event induced by the expression $X \le 5$.
\notation{$\bernoulli(p)$}
	Bernoulli distributed random variable;
	$p\in[0,1]$.
\notation{$\uniform(a,b)$}
	uniformly in $(a,b)\subset\R$ distributed random variable. 
\notation{$\uniform[a..b]$}
	discrete uniformly in $[a..b]\subset\Z$ distributed random variable. 
%\notation{$\distFromWeights{\vect p}$}
%	discrete random variable with weights $\vect p$; for $\vect p\in[0,1]^d$,
%	for $I\eqdist \distFromWeights{\vect p}$, we have $I \in [1..d]$ and
%	$\Prob{I=i} = p_i$ for $i\in[d]$ and $0$ otherwise.
\notation{$\betadist(\alpha,\beta)$}
	beta distributed random variable with shape parameters $\alpha\in\R_{>0}$ and $\beta\in\R_{>0}$.
%	$X\eqdist\betadist(\alpha,\beta)$ is equivalent to $(X,1-X)\eqdist\dirichlet(\alpha,\beta)$.
%\notation{$\dirichlet(\vect \alpha)$}
%	Dirichlet distributed random variable;
%	$\vect \alpha \in \R_{>0}^d$.
\notation{$\binomial(n,p)$}
	binomial distributed random variable with $n\in\N_0$ trials and success probability $p\in[0,1]$;
	$\binomial(1,p)\eqdist\bernoulli(p)$. 
	$X\eqdist \binomial(n,p)$ is equivalent to $(X,n-X)\eqdist\multinomial(n;p,1-p)$.
%\notation{$\multinomial(n,\vect p)$}
%	multinomially distributed random variable; 
%	$n\in\N_0$ and $\vect p \in [0,1]^d$ with $\total{\vect p} = 1$.
%\notation{$\hypergeometric(k,r,n)$}
%	hypergeometrically distributed random variable;
%	$n\in\N$, $k,r,\in\{1,\ldots,n\}$.  
%\notation{$\gammadist(k,\theta)$, $\gammadist(k)$}
%	Gamma distributed random variable with 
%	shape parameter $k\in\R_{>0}$ and scale parameter $\theta\in\R_{>0}$;
%	$\gammadist(k) = \gammadist(k,1)$;
%	see \wref{sec:gamma-distribution}.
\notation{$\betaBinomial(n,\alpha,\beta)$}
	beta-binomial distributed random variable;
	$n\in\N_0$, $\alpha,\beta\in\R_{>0}$;
	$X\eqdist \betaBinomial(n,\alpha,\beta)$ is equivalent to
	$(X,n-X) \eqdist \dirichletMultinomial(n;\alpha,\beta)$.
%\notation{$\dirichletMultinomial(n,\vect \sigma)$}
%	Dirichlet-multinomial distributed random variable;
%	$n\in\N_0$, $\vect\sigma\in\R_{>0}^s$.
%\notation{$X_{(i)}$}
%	$i$th order statistic of a set of random variables $X_1,\ldots,X_n$,\\
%	\ie, the $i$th smallest element of $X_1,\ldots,X_n$. 
%\notation{$\entropy[f](\vect p)$}
%	generalized entropy-like function; for $\vect p\in[0,1]^s$ we have 
%	$\entropy[f](\vect p) = \sum_{r=1}^s p_r\cdot f(p_r)$
%\notation{$\entropy(\vect p)$, $\entropy[\ld](\vect p)$, $\entropy[\ln](\vect p)$}
%	Shannon entropy of information theory; 
%	$\entropy[\ld](p_1,\ldots,p_d) = \sum_{r=1}^d p_r \ld (1/p_r)$;
%	similarly $\entropy[\ln]$ is the base-$e$ entropy.
%	$\entropy[\ln](p_1,\ldots,p_d) = \sum_{r=1}^d p_r \ln (1/p_r)$;\\
%	I write $\entropy$ for $\entropy[\ld]$.
%\notation{$\dirichletExpectation{f(\vect X)}{\vect\alpha}$}
%	items of Dirichlet-calculus;
%	$\dirichletExpectation{f(\vect X)}{\vect\alpha} = \E{f(\vect X)}$ 
%	with $X\eqdist\dirichlet(\vect \alpha)$;
%	see \wref{sec:dirichlet-calculus}.
%\notation{stochastic vector}
%	A vector $\vect p$ is called stochastic if $0\le \vect p\le 1$ and $\total{\vect p} = 1$.
%\notation{\whp (event)}
%	Let $E=E(n)$ be an event that depends on a parameter $n$. 
%	I say ``$E$ occurs \whp'' if $\Prob{E(n)} = 1 \pm \Oh(n^{-c})$ as $n\to\infty$ for \emph{any} constant $c$.
%\notation{\whp (bound)}
%	Let $X_1,X_2,\ldots$ be a sequence of real random variables.
%	I say ``$X_n = \Oh(g(n))$ \whp'' if for every constant $c$ there is a constant $d$ so that
%	$\Prob{|X_n| \le d |g(n)|} = 1 \pm \Oh(n^{-c})$ as $n\to\infty$.
\end{notations}

\subsection{Notation for Jumplists and Analysis}

\begin{notations}[\notationwidth]
\notation{$k$, $t$}
	sample size $k = 2t+1$, $t\in\N_{\ge0}$; 
	jump pointers are chosen as median of $k$ elements.
\notation{$w$}
	leaf size; (sub)lists with $n<w$ (equivalently: $m\le w$)
	do not use jump pointers.
\notation{$n$}
	number of keys stored; the input size.
\notation{$m$, $m(v)$}
	the number of nodes; $m=n+1$ (the header does not store a key).
\notation{$x_1,\ldots,x_n$}
	the stored keys; $x_1<\cdots<x_n$.
\notation{$v_0,v_1,\ldots,v_n$}
	the $m=n+1$ nodes of a jumplist on $n$ keys, in the order of the backbone,
	\ie, $v_i.\id{key} = x_i$ and $v_{i-1}.\id{next} = v_i$, $i=1,\ldots,n$.
\notation{$\mathcal J_n$}
	random jumplist on $n$ keys $\{1,\ldots,n\}$; obtained from \proc{Rebalance}
	on a list with keys $\{1,\ldots,n\}$.
\notation{sublist of node $v_i$}
	the sublist that starts at $v_i$ (inclusive) and extends up to (excluding) the
	first node targeted by a jump pointer of a node $v_j$ with $j<i$ or up to (including) the
	end of the whole list if not such pointer exists.
\notation{$\mathcal J_1$, $\mathcal J_1(v)$}
	the next-sublist (of a given node $v$); the sublist of $v.\mathit{next}$;
	(only defined for jump nodes).
\notation{$\mathcal J_2$, $\mathcal J_2(v)$}
	the jump-sublist (of a given node $v$); the sublist of $v.\mathit{jump}$.	
	(only defined for jump nodes).
\notation{$J_1$, $J_2$}
	(random) sublist sizes;
	$J_r = I_r + t+1 \in [t+1..m-t-2]$ is the number of nodes in $\mathcal J_r$, $r\in\{1,2\}$;
	$J_1+J_2 = m-1$;
\notation{$I_1$, $I_2$}
	$I_1\eqdist I_2 \eqdist \betaBinomial(m-k-2,t+1,t+1)$.
%\notation{{$\vect D\in[0,1]^{s}$}}
%	(random) spacings of the unit interval $(0,1)$ induced by the pivots $P_1\,\ldots,P_{s-1}$, 
%	\ie, $D_i = P_i - P_{i-1}$ for $1\le i \le s$;
%	$\vect D \eqdist \dirichlet(\vect \sigma) \eqdist \dirichlet(\vect t + 1)$.
\end{notations}

\section{Comparison of Jumplist Definitions}

\label{app:differences-definitions}

Our definition of jumplists differs in some details from the original version.
We list the differences here, and discuss why we think that our modifications are appropriate.

\paragraph{Symmetry}

In the original version of the jumplist, the jump pointer is allowed to target any node 
from the sublist, except the header itself. 
Thus there are $m-1$ possible choices. 
In this setting, the size of the next-sublist can attain any value between $0$ and $m-2$, 
whereas the size of the jump-sublist is between $1$ and $m-1$.

%This asymmetry between the subproblems felt ``wrong'' to us, and is in fact easily avoided:
We disallow the direct successor of the head as possible target. 
This modification restores symmetry between next- and jump-sublist: both must be non-empty and 
contain at most $m-2$ nodes and their sizes have the same distribution.
Moreover, forbidding the direct successor as jump target is also a natural
requirement since such a degenerate ``shortcut'' is useless in searches. 

\paragraph{Small Sublists}

The original jumplists only have one type of nodes which corresponds to our jump node. 
In the case $m=1$, Brönnimann, Cazals, and Durand resort to assigning an 
``exceptional pointer'' to the direct successor;
note that this node actually lies outside (one behind) of the current sublist. 
These pointers are of no use, as they are never followed during (jump-and-walk) search. 

In implementations with heap-allocated memory for each node,
it is often not a problem to have different node types (and sizes),
and it potentially allows to save memory.
We thus introduced the plain node without jump pointer, 
used whenever the sublist has at most $w$ nodes.
$w\ge 2$ is required if we want to avoid useless jump pointers that point to the direct successor.

This also allows us enforce that every node has at most one incoming jump pointer;
this is another natural requirement from the perspective of a search
starting at the header:
shortcuts with the same target are redundant.
The parameter $w$ allows us to trade space for time.

\paragraph{Sentinel vs.\ Circularly closed}

The original jumplist implementation has a circularly closed backbone, \ie, 
the next pointer of the last node in the list points to the overall header again,
avoiding special treatment for an empty list.
Since the backbone is sorted, we can instead 
add a \emph{sentinel} node with key $+\infty$ at the end of the list, 
so we can omit any explicit boundary checks during searches.

% !TeX root = jumplists
\section{Algorithms}
\label{app:algorithms}

In this appendix, we give the more details the insertion and deletion algorithms 
for randomized median-of-$k$ jumplists. 

We describe the procedures in prose and an intuitive graphic syntax,
as well as in detailed pseudocode; see \wref{sec:pseudocode} for the latter.
We also point out that our proof-of-concept implementation in Java
is available online for interested readers~\cite{codeWild2016}.

As a simple example to introduce the graphical syntax, 
here is the transformation from 
jumplists to dangling-min BSTs pictorially:
\medskip

\noindent\plaincenter{%
\begin{tikzpicture}[
		node distance=1.5cm,
		every node/.style={inner sep=1.5pt},
	]
	\node[anchor=east] (6)  at (-.35,0){$\proc{minBST}\biggl($};
	\node[sn] (0) { };
	\node[sn] (1)  at (1,0){$x_1$};
	\node[sn,anchor=west,xshift=-.8pt] (J1) at (1.east) {\phantom{xxxxxxxxx}}; 
	\node (3)  at (2.7,0){ };
	\node[sn] (2)  at (3.5,0){$x_j$};
	\node[sn,anchor=west,xshift=-.8pt] (J2) at (2.east) {\phantom{xxxxxxxxxxxx}}; 
	\node[anchor=west] at (5.75,0) {$\biggr)$} ;
	\node at (6.55,0) {$=$} ;
	\path[backbone]
		(0) edge (1)
		(J1) edge (2);
	\draw[jumppointer]
		(0) to[out=45,looseness=.5,in=120] (2);
	
	\draw[thick, decorate, decoration={ brace, mirror, raise=0.4cm}] (1.180) -- 
		node[below=.55cm] {$\mathcal{J}_1$} (J1.360);  
	\draw[thick, decorate, decoration={ brace, mirror, raise=0.4cm}] (2.180) -- 
		node[below=.55cm] {$\mathcal{J}_2$} (J2.360);
	
	\begin{scope}[shift={(8.5,.5)}]
		\node[minBST internal] (0) {$\vphantom| x_j$};
		\node[minBST min] (m0) at ($(0)+(195:1.75em)$) {$x_1$};
		\draw[minBST min edge] (0) -- (m0) ;
		\node (2)  [ below left of=0, yshift = -0.2cm] {$\proc{minBST}(\mathcal{J}_1)$};
		\node (3) [below right  of=0, yshift = -0.2cm]{$\proc{minBST}(\mathcal{J}_2)$};
		\draw[minBST left edge] (0) -- (2);
		\draw[minBST right edge] (0) -- (3);
	\end{scope}
	
		\begin{scope}[shift={(0,1.4)}]
			\node[sn] (h1) at (3.4,0){ };
			\node[sn,anchor=west,xshift=10pt] (L) at (h1.east) {$\;x_1 \;\ldots\;x_n\;$};
			\node[anchor=east]  at (3.1,0){$\proc{minBST}\biggl($};
			\node[anchor=west] at (5.75,0) {$\biggr)$} ;
			\draw[backbone] (h1) -- (L) ;
			\node at (6.55,0) {$=$} ;
			\node[minBST leaf] at (8,0) {$\;x_1,\ldots,x_n\;$} ;
		\end{scope}
	
%	\draw[<-,thin] (h1) -- ++(40:3em) node[above] {\smaller plain node} ;
	\node[anchor=west] at (10.5,0) {$(m>w)$} ;
	\node[anchor=west] at (10.5,1.4) {$(m\le w)$} ;
\end{tikzpicture}%
}
\smallskip

\noindent
The first equation defines \proc{minBST} on small jumplists ($m\le w$);
it shows a header without jump pointer, \ie, a plain node.
The second equation defines \proc{minBST} on larger jumplists.
Whenever variables appear on the left side, 
they are understood as formal placeholders of a pattern to be matched against
the actual input. 
This mimics the corresponding feature of many functional programming languages
that allows to define a function case by case in this syntax.
The parts that match the variables are then used on the right-hand side.

\paragraph{Graphical syntax conventions}
We now proceed to the description of the insertion and deletion procedures.
%The left-hand side of an equation represents the algorithm in form of a function and its argument. 
%The right hand side sketches the recursive processing.
We use the following conventions:
	The input of the algorithms, the ``old'' jumplist, 
	is drawn as rectangle (or abbreviated by $\mathcal{J}$).
	A sequence of (output) leaf nodes is depicted by a rectangle with rounded corners.
	The position of insertion resp.\ deletion is marked in red.
	If the algorithm makes a random choice,
	each outcome is multiplied with its probability, and all outcomes are added up.

%\notetoself{
%	nice itemize
%	\begin{itemize}
%	\item 
%		The input of the algorithms, the ``old'' jumplist, 
%		is drawn as rectangle (or named $\mathcal{J}$).
%	\item 
%		A sequence of (output) leaf nodes is depicted by a rectangle with rounded corners.
%	\item 
%		The position of insertion resp.\ deletion is marked in red.
%	\item 
%		If the algorithm makes a random choice,
%		each outcome is multiplied with its probability, and all outcomes are added up.
%	\end{itemize}
%}
\subsection{Rebalance}

Algorithm \proc{Rebalance} is used if a jumplist needs to be (re)built from scratch. 
It only uses the backbone of the argument, any existing jump pointers are ignored.
In the base case, \ie, if the argument contains $m\leq w$ nodes, 
a linked list of plain nodes with the same keys is returned.

\smallskip

\plaincenter{%
\begin{tikzpicture}[
		every node/.style={inner sep=1.5pt,anchor=west},
	]
	\node (head) {$\proc{Reb}\biggl($};
	\node[sn] (J) at (head.east) { $\;v_0 \;\ldots\; v_n\;$};
	\node (headend) at (J.east) {$\biggr)$} ;
	\path (headend.east) --node[xshift=-.2em,anchor=center] {$=$} ++(3em,0) 
		node[sn,rounded corners] (leaf) {$\;v_0 \;\ldots\; v_n\;$ } ;
	\node[xshift=2em] at (leaf.east) {$(m\le w)$} ;
\end{tikzpicture}%
}

\noindent
If \Li contains $m>w$ nodes, $v_0$ must become a jump node and we have to draw 
a jump target from the sample range. 
Conceptually, a sample of $k$ nodes is drawn and the median \wrt the keys is chosen.
The same distribution can actually be achieved without explicitly drawing samples 
using a random variable $J \eqdist \betaBinomial(m-2-k,t+1,t+1)+t+2$ 
(see \wref{sec:median-of-k-jumplists-def}).
Then the node $v_J$ is the jump target.
After the jump pointer of $v_0$ has been initialized, 
the resulting next- and jump-sublist are rebalanced recursively.

\plaincenter{%
\begin{tikzpicture}[
		every node/.style={inner sep=1.5pt,anchor=west},
	]
	\node (head) {$\proc{Reb}\biggl($};
	\node[sn] (J) at (head.east) { $\;v_0 \;\ldots\; v_n\;$};
	\node (headend) at (J.east) {$\biggr)$} ;

	\path (headend.east) --node[xshift=-.2em,anchor=center] {$=$} ++(3em,0) 
			node[anchor=west,sn] (v0) {$v_0$};
	\node (3)  [right of = v0] {$\proc{Reb} \biggl($};
	\node[sn] (J1) at (3.east) {\phantom{xxxxxxx}}; 
	\node (4)  at (J1.east){$\biggl)$ };
	
	\node (5) [xshift=1em] at (4.east) {$\proc{Reb} \biggl($};
	\node[sn] (6) at (5.east){$v_\mathcal{J}$};
	\node[sn,xshift=-.8pt] (J2) at (6.east) {\phantom{xxxxxxxxx}}; 
	\node (7) [color=nodecolor,anchor=east]  at (J2.east) {$v_n\,$};
	\node (8) at (J2.east) {$\biggr)$} ;
	
	\path[backbone,on layer=background]
		(v0) edge[shorten >= -2pt] (3)
		(4) edge[shorten >=-2pt,shorten <=-4pt] (5);
	\draw[jumppointer]
		(v0) to[out=45,looseness=.4,in=90] (6);
	\node[xshift=2em] at (8.east) {$(m> w)$} ;
\end{tikzpicture}%
}

\subsection{Insert}

\proc{Insert} in jumplists consists of three phases found in many tree-based
dictionaries: 
(unsuccessful) search, insertion, and cleanup.
Unless $x$ is already present,
the search ends at the node with the largest key (strictly) smaller than $x$.
There we insert a new node with key $x$ into the backbone.

The new node however does not have a jump pointer yet. 
Furthermore, the new node might need to be considered as potential jump target
of its predecessors in the backbone.
Thus, for all the nodes that have the new node in their sublist, 
we need to restore the pointer distribution. 
This is carried out by \proc{RestoreAfterInsert}.

Let $m$ be the number of nodes after the insertion, \ie, including the new node.
If $m \le w$, the new node remains a plain node within a list of plain nodes,
and no cleanup is necessary.
If $m=w+1$ due to the insertion, $v_0$, which was a plain node before, 
now has to become a jump node. 
In this case, \proc{Rebalance} is called on \Li and the insertion terminates.
\smallskip

	\plaincenter{%
	\begin{tikzpicture}[
			every node/.style={inner sep=1.5pt,anchor=west},
		]
		\node (head) {$\proc{RestIns}\biggl($};
		\node[new,xshift=1em] (new) at (head.east) {};
		\node[sn] (J) at (head.east) { \phantom{xxxxxxxxxx}};
		\node (headend) at (J.east) {$\biggr)$} ;
		
		\path (headend.east) --node[anchor=center] {$=$} ++(3em,0) 
			node (2) {$\proc{Reb}\biggr($} ;
		\node[new,xshift=1em] (new2) at (2.east) {};
		\node[sn] (Jp) at (2.east) {\phantom{xxxxxxxxxx} } ;
		\node[anchor=west, ] (2) at (Jp.east) {$\biggr)$} ;
		\node[xshift=2em] at (2.east) {$(m\le w)$} ;
	\end{tikzpicture}%
	}

\noindent
If $m > w+1$, we first restore the pointer distribution of $v_0$. 
Due to the insertion of a new node, the sample range now contains an additional node $u$.
Note that $u$ is not necessarily the newly inserted node; 
if the new key is the first or second smallest in \Li, 
$u$ is the former second node of \Li.

If we, conceptually, wanted to draw pointers for \Li anew,
there are two possibilities: 
either $u$ is part of the sample, or $u$ is not part of the sample.
The probability for the latter case is
\begin{align}
%		p
%	&\wwrel=
		\binom{n-1}k \bigg/ \binom nk
	\wwrel=
		\frac{(n-1)! k! (n-k)!}{k!(n-k-1)! n!}
	\wwrel=
		\frac{n-k}{n}
	\wwrel=
		1 - \frac kn,
\end{align}
since the overall number of $k$-samples from $n$ items is $\binom nk$ and
if we forbid, say, item $n$, we have $\binom {n-1}k$ choices left.

Let us denote by $p = k/n$ the counter probability, \ie, the probability
that $u$ is part of the sample.
In that case, we have to rebalance all of \Li.
Conditional on the event that $u$ is not in the sample,
the existing jump pointer of $v_0$ has the correct distribution:
it has been chosen as the median of a random sample not containing $u$.

In the algorithm, we thus rebalance \Li with probability $p$, 
where we draw the jump pointer of $v_0$ conditional on $u$ being part of the sample.
Otherwise, $v_0$'s jump pointer can be kept, and we continue recursively
in the uniquely determined sublist that contains the inserted node
\parenthesissign that is, unless $v_0$ does not have a jump pointer yet, 
since $v_0$ is the newly inserted node.
In that case, we simply \emph{steal} the jump pointer of its direct successor, $v_1$,
which has the correct conditional distribution.
Now $v_1$ does not have a jump pointer, and we treat this case recursively, as if $v_1$
was the newly inserted node.

\noindent
\plaincenter{%
\adjustbox{max width=\linewidth}{\begin{tikzpicture}[
		node distance=1cm,
		every node/.style={inner sep=1.5pt,anchor=west},
	]
	\node (head) at (-1.5,0) {$\proc{RestIns}(\Li)$};
	\path (head.east) --node[anchor=center] {$=$} ++(3em,0) 
			node (p) {$p \cdot {}$} ;
			
	\node[sn] (v0) at (p.east) {$v_0$};
	\node (3)  [right of = v0] {$\proc{Reb} \biggl($};
	\node[sn] (J1) at (3.east) {\phantom{xxxxxxx}}; 
	\node (4)  at (J1.east){$\biggl)$ };
	
	\node (5) [xshift=1em] at (4.east) {$\proc{Reb} \biggl($};
	\node[sn] (6) at (5.east){$v_{J}$};
	\node[sn,xshift=-.8pt] (J2) at (6.east) {\phantom{xxxxxxxxx}}; 
	\node (7) [color=nodecolor,anchor=east]  at (J2.east) {$v_n\,$};
	\node at (J2.east) {$\biggr)$} ;
	
	\path[backbone,on layer=background]
		(v0) edge[shorten >=-2pt] (3)
		(4) edge[shorten >=-2pt,shorten <=-4pt] (5);
	\draw[jumppointer]
		(v0) to[out=45,looseness=.4,in=90] (6);

	\begin{scope}[shift={(-.3,-2.75)}]
		\node (p1)  {${}+(1-p)\cdot {}$};
		
		% middle case
		\node[sn,xshift=.8em] (n2) at (p1.east) {$v_0$};
		\node (n3)  [right of = n2, xshift=.7em] {$\proc{RestIns} \biggl($};
		\node (n4) [new,xshift=1.2em] at (n3.east) {};
		\node[sn] (nJ1) at (n3.east) {\phantom{xxxxxxx}}; 
		\node (n5) at (nJ1.east){$\biggl)$ };
		\node[sn] (n6) [right of = n5, xshift=-1em]  {$v_j$};
		\node[sn,anchor=west,xshift=-.6pt] (nJ2) at (n6.east) {\phantom{xxxxxx}};

		%upper case
		\node[sn,yshift=8.5ex] (a2) at (n2.west) {$x$};
		\node (a3)  [right of = a2, xshift=.7em] {$\proc{RestIns} \biggl($};
		\node  (a4) [new,xshift=0em] at (a3.east) {};
		\node[sn] (aJ1) at (a3.east) {\phantom{xxxxxxx}}; 
		\node (a5) at (aJ1.east){$\biggl)$};
		\node[sn] (a6) [right of = a5, xshift=-1em]  {$v_j$};
		\node[sn,xshift=-.8pt] (aJ2) at (a6.east) {\phantom{xxxxxx}};

		% lower case
		\node[sn,yshift=-8.5ex] (b2) at (n2.west) {$v_0$};
		\node[sn,xshift=.7em] (bJ1) at (b2.east) {\phantom{xxxxxxx}}; 
		\node[xshift=.7em] (b3) at (bJ1.east) {$\proc{restIns} \biggl($};
		\node[sn] (b4) at (b3.east) {$v_j$};	
		\node[new,xshift=1em] (b5)  at (b4.east) {};
		\node[sn,anchor=west,xshift=-.8pt] (bJ2) at (b4.east) {\phantom{xxxxxx}}; 
		\node (b6) at (bJ2.east){$\biggl)$};
		
		\path[backbone,on layer=background]
			(n2) edge[shorten >=-2pt] (n3)
			(n5) edge[shorten <=-4pt] (n6)
			(a2) edge[shorten >=-2pt] (a3)
			(a5) edge[shorten <=-4pt] (a6)
			(b2) edge (bJ1)
			(bJ1) edge[shorten >=-2pt] (b3);
		\draw[jumppointer] (n2) to[out=45,looseness=.4,in=120] (n6);
		\draw[jumppointer] (a2) to[out=45,looseness=.4,in=120] (a6);
		\draw[jumppointer] (b2) to[out=45,looseness=.45,in=90] (b4);
	
		\draw[semithick,decorate, decoration={ brace}] 
			($(b2.south west)+(-.5em,-1ex)$) -- ($(a2.north west)+(-.5em,1ex)$);	
	\end{scope}
		
	\begin{scope}[shift={(8,-2.75)}, node distance = 0.8cm]
		% middle case
		\node (cm1) {$\Li={}$};
		\node (cm2) [sn] at (cm1.east) {$v_0$};
		\node (cm3) [new,xshift=2.2em] at (cm2.east) {};
		\node[sn,xshift=1em] (cmJ1) at (cm2.east) {\phantom{xxxxxxx}};
		\node[sn,xshift=1em] (cm4) at (cmJ1.east) {$v_j$};
		\node[sn,anchor=west,xshift=-0.8pt] (cmJ2) at (cm4.east) {\phantom{xxxxxx}};
		
		%upper case
		\node[yshift=8.5ex] (ca1) at (cm1.west) {$\Li={}$};
		\node[sn,fill=red] (ca2)  at (ca1.east) {$x$};
		\node[sn,xshift=.7em,minimum width=4mm] (ca3) at (ca2.east) {$v_0$};
		\node[sn,xshift=.7em] (caJ1) at (ca3.east) {\phantom{xxxx}};
		\node[sn,xshift=.7em] (ca4) at (caJ1.east) {$v_j$};
		\node[sn,anchor=west,xshift=-0.8pt] (caJ2) at (ca4.east) {\phantom{xxxxxx}};
		
		%lower case
		\node[yshift=-8.5ex] (cb1) at (cm1.west) {$\Li={}$};
		\node[sn] (cb2) at (cb1.east) {$v_0$};
		\node[sn,xshift=1em] (cbJ1) at (cb2.east) {\phantom{xxxxxxx}};
		\node[sn,xshift=1em] (cb3) at (cbJ1.east) {$v_j$};
		\node[new,xshift=1em] (cb4) at (cb3.east) {};
		\node[sn,anchor=west,xshift=-0.8pt] (cbJ2) at (cb3.east) {\phantom{xxxxxx}};

		\path[backbone,on layer=background]
			(cm2) edge (cmJ1.west)
			(cmJ1.east) edge (cm4)
			(ca2) edge (ca3)
			(ca3) edge (caJ1)
			(caJ1) edge (ca4)
			(cb2) edge (cbJ1.west)
			(cbJ1.east) edge (cb3);
		\draw[jumppointer] (cm2) to[out=45,looseness=.55,in=120] (cm4);
		\draw[jumppointer] (ca3) to[out=50,looseness=.7,in=120] (ca4);	
		\draw[jumppointer] (cb2) to[out=45,looseness=.5,in=120] (cb3);	
	\end{scope}
\end{tikzpicture}%
}}

\vspace{-2ex}

\subsection{Delete}

\proc{Delete} has the same three phases as \proc{Insert}:
first a (successful) search finds the node to be deleted,
then we actually remove it from the backbone.
Finally, \proc{RestoreAfterDeletion} performs the cleanup: 
the pointer distribution for those nodes whose sublists contained the deleted node has to
be restored since their sample range has shrunk.

Let $m$ be the number of nodes after deletion, and let $u$ be the deleted node.
We first assume that $u\ne v_0$; the case of deleting $v_0$ will be addressed later.
If $m \le w-1$, \Li is a list of plain nodes and can remain unaltered.
If $m=w$, the size dropped from $w+1$ to $w$ due to the deletion, so $v_0$ has to be
made a plain node.
\smallskip

\plaincenter{%
\begin{tikzpicture}[
		every node/.style={inner sep=1.5pt,anchor=west},
	]
	\node (head) {$\proc{RestDel}\biggl($};
	\node[new,xshift=1em] (new) at (head.east) {};
	\node[sn] (J) at (head.east) { \phantom{xxxxxxxxxx}};
	\node (headend) at (J.east) {$\biggr)$} ;
	
	\path (headend.east) --node[anchor=center] {$=$} ++(3em,0) 
		node[sn,rounded corners] (leaf) {$\;v_0 \;\ldots\; v_n\;$ } ;
	\node[xshift=2em] at (leaf.east) {$(m\le w)$} ;
\end{tikzpicture}%
}

\noindent
Otherwise ($m>w$), $v_0$ is a jump node whose sublist contained $u$.
There are two possible cases:
either the sample drawn to choose $v_0.\id{jump}$ contained $u$, or not.
In the latter case, the deletion of $u$ does not affect the choice for 
$v_0.\id{jump}$ at all,
and we recursively cleanup the uniquely determined sublist that formerly contained $u$.
If $u$ was indeed part of the sample, we have to rebalance~\Li.

It remains to determine the probability $p$ that $u$ was in the sample that led to the
choice of $v_0.\id{jump}$.
Unlike for insertion, $p$ now depends on these two nodes.
Let $J_1$ resp.\ $J_2$ be the sizes of the next- resp.\ jump-sublist 
\emph{before} deletion;
recall that we store $J_1$ in $v_0.\id{nsize}$.
Then $p$ is given by the following expression:
\begin{align}
	p \wwrel= 
	\begin{dcases*}
		1, 		& if $u=v_0.\id{jump}$;\\
		\frac{t}{J_1-1},	& if $u$ was in next-sublist (where $\frac00\ce1$ in case $t=J_1-1=0$); \\
		\frac{t}{J_2-1},	& if $u$ was in jump-sublist.
	\end{dcases*}
\end{align}
The correctness is best seen in a case-by-case argument, which we give below.
But before we do that, we have to consider the case that the deleted node is $u=v_0$.
Then $v_1$ has become the new header,
but its jump pointer now has the wrong distribution since $v_0$'s jump pointer
no longer delimits its sample range.
But observe that $v_0$'s (old) sample range was exactly 
$v_1$'s new sample range plus $v_2$.
Accordingly we only have to rebalance in case $v_2$ was part of the
sample to select $v_0.\id{jump}$, which happens with probability $p=\frac{t}{J_1-1}$.
Otherwise, we can conceptually impose $v_0$'s jump pointer on $v_1$, 
which is easily implemented by swapping their keys,
and continue the cleanup recursively in the next-sublist, as if $v_1$ had been deleted.

Overall, the following situations can occur upon deletion:
\begin{enumerate}
\item If the jump pointer of $v_0$ targeted the deleted node, 
	the whole list is re-built with probability $1$.
\item If no sampling is used, \ie, $k=1$, 
	the list only needs to be reconstructed in the following two cases:
	\begin{enumerate}
	\item If the deleted node had rank $0$ and next-size $1$, 
		we cannot impose the jump pointer of the deleted node onto $v_0$ 
		as the target is not valid. 
		Thus the list is reconstructed.
	\item If the deleted node had rank $1$ and $v_0$ had next-size $1$, 
		the only node in the next-sublist has been deleted. 
		This results in an invalid pointer configuration, 
		therefore the list is reconstructed.
	\end{enumerate}
\item If the deleted node was contained in the next-sublist of $v_0$, 
	\ie, $r < J_1+1$, it was part of the sample with probability $\frac{t}{J_1-1}$
\item If the deleted node was contained in the jump-sublist of $v_0$,
	it was part of the sample with probability $\frac{t}{m-1-J_1}$.
\end{enumerate}

To conclude, depending on the outcome of the coin flip,
the algorithm either rebalances the current sublist (with probability $p$)
(as given above) and terminates, 
or it reuses the topmost old jump pointer and continues recursively.

\noindent\plaincenter{%
\adjustbox{max width=\linewidth}{%
\begin{tikzpicture}[
		node distance=1cm,
		every node/.style={inner sep=1.5pt,anchor=west},
	]
	\node (head) at (-1.5,0) {$\proc{RestDel}(\Li)$};
	\path (head.east) --node[anchor=center] {$=$} ++(3em,0) 
			node (p) {$p \cdot {}$} ;
			
	\node[sn] (v0) at (p.east) {$v_0$};
	\node (3)  [right of = v0] {$\proc{Reb} \biggl($};
	\node[sn] (J1) at (3.east) {\phantom{xxxxxxx}}; 
	\node (4)  at (J1.east){$\biggl)$ };
	
	\node (5) [xshift=1em] at (4.east) {$\proc{Reb} \biggl($};
	\node[sn] (6) at (5.east){$v_{J}$};
	\node[sn,xshift=-.8pt] (J2) at (6.east) {\phantom{xxxxxxxxx}}; 
	\node (7) [color=nodecolor,anchor=east]  at (J2.east) {$v_n\,$};
	\node at (J2.east) {$\biggr)$} ;
	
	\path[backbone,on layer=background]
		(v0) edge[shorten >=-2pt] (3)
		(4) edge[shorten >=-2pt,shorten <=-4pt] (5);
	\draw[jumppointer]
		(v0) to[out=45,looseness=.4,in=90] (6);

	\begin{scope}[shift={(-.3,-2.75)}]
		\node (p1)  {${}+(1-p)\cdot {}$};
		
		% middle case
		\node[sn,xshift=.8em] (n2) at (p1.east) {$v_0$};
		\node (n3)  [right of = n2, xshift=.7em] {$\proc{RestDel} \biggl($};
		\node (n4) [new,xshift=1.2em] at (n3.east) {};
		\node[sn] (nJ1) at (n3.east) {\phantom{xxxxxxx}}; 
		\node (n5) at (nJ1.east){$\biggl)$ };
		\node[sn] (n6) [right of = n5, xshift=-1em]  {$v_j$};
		\node[sn,anchor=west,xshift=-.6pt] (nJ2) at (n6.east) {\phantom{xxxxxx}};

		%upper case
		\node[sn,yshift=8.5ex] (a2) at (n2.west) {$v_1$};
		\node (a3)  [right of = a2, xshift=.7em] {$\proc{RestDel} \biggl($};
		\node  (a4) [new,xshift=0em] at (a3.east) {};
		\node[sn] (aJ1) at (a3.east) {\phantom{xxxxxxx}}; 
		\node (a5) at (aJ1.east){$\biggl)$};
		\node[sn] (a6) [right of = a5, xshift=-1em]  {$v_j$};
		\node[sn,xshift=-.8pt] (aJ2) at (a6.east) {\phantom{xxxxxx}};

		% lower case
		\node[sn,yshift=-8.5ex] (b2) at (n2.west) {$v_0$};
		\node[sn,xshift=.7em] (bJ1) at (b2.east) {\phantom{xxxxxxx}}; 
		\node[xshift=.7em] (b3) at (bJ1.east) {$\proc{restDel} \biggl($};
		\node[sn] (b4) at (b3.east) {$v_j$};	
		\node[new,xshift=1em] (b5)  at (b4.east) {};
		\node[sn,anchor=west,xshift=-.8pt] (bJ2) at (b4.east) {\phantom{xxxxxx}}; 
		\node (b6) at (bJ2.east){$\biggl)$};
		
		\path[backbone,on layer=background]
			(n2) edge[shorten >=-2pt] (n3)
			(n5) edge[shorten <=-4pt] (n6)
			(a2) edge[shorten >=-2pt] (a3)
			(a5) edge[shorten <=-4pt] (a6)
			(b2) edge (bJ1)
			(bJ1) edge[shorten >=-2pt] (b3);
		\draw[jumppointer] (n2) to[out=45,looseness=.4,in=120] (n6);
		\draw[jumppointer] (a2) to[out=45,looseness=.4,in=120] (a6);
		\draw[jumppointer] (b2) to[out=45,looseness=.45,in=90] (b4);
	
		\draw[semithick,decorate, decoration={brace,amplitude=5pt}] 
			($(b2.south west)+(-.4em,-1ex)$) -- ($(a2.north west)+(-.4em,1ex)$);	
	\end{scope}
		
	\begin{scope}[shift={(8,-2.75)}, node distance = 0.8cm]
		% middle case
		\node (cm1) {$\Li={}$};
		\node (cm2) [sn] at (cm1.east) {$v_0$};
		\node (cm3) [new,xshift=2.2em] at (cm2.east) {};
		\node[sn,xshift=1em] (cmJ1) at (cm2.east) {\phantom{xxxxxxx}};
		\node[sn,xshift=1em] (cm4) at (cmJ1.east) {$v_j$};
		\node[sn,anchor=west,xshift=-0.8pt] (cmJ2) at (cm4.east) {\phantom{xxxxxx}};
		
		%upper case
		\node[yshift=8.5ex] (ca1) at (cm1.west) {$\Li={}$};
		\node[sn,fill=red] (ca2)  at (ca1.east) {$v_0$};
		\node[sn,xshift=.7em,minimum width=4mm] (ca3) at (ca2.east) {$v_1$};
		\node[sn,xshift=.7em] (caJ1) at (ca3.east) {\phantom{xxxx}};
		\node[sn,xshift=.7em] (ca4) at (caJ1.east) {$v_j$};
		\node[sn,anchor=west,xshift=-0.8pt] (caJ2) at (ca4.east) {\phantom{xxxxxx}};
		
		%lower case
		\node[yshift=-8.5ex] (cb1) at (cm1.west) {$\Li={}$};
		\node[sn] (cb2) at (cb1.east) {$v_0$};
		\node[sn,xshift=1em] (cbJ1) at (cb2.east) {\phantom{xxxxxxx}};
		\node[sn,xshift=1em] (cb3) at (cbJ1.east) {$v_j$};
		\node[new,xshift=1em] (cb4) at (cb3.east) {};
		\node[sn,anchor=west,xshift=-0.8pt] (cbJ2) at (cb3.east) {\phantom{xxxxxx}};

		\path[backbone,on layer=background]
			(cm2) edge (cmJ1.west)
			(cmJ1.east) edge (cm4)
			(ca2) edge (ca3)
			(ca3) edge (caJ1)
			(caJ1) edge (ca4)
			(cb2) edge (cbJ1.west)
			(cbJ1.east) edge (cb3);
		\draw[jumppointer] (cm2) to[out=45,looseness=.55,in=120] (cm4);
		\draw[jumppointer] (ca2) to[out=50,looseness=.5,in=120] (ca4);	
		\draw[jumppointer] (cb2) to[out=45,looseness=.5,in=120] (cb3);	
	\end{scope}
\end{tikzpicture}%
}}
%
%\notetoself{
%	Correctness:
%	
%	If list was created anew, we'd have the two possibilities that the new element in the pool
%	is part of the sample, or it is not.
%	Conditional on that it is not, the existing sample already has the correct (conditional) 
%	distribution, and we can keep/steal the existing pointer.
%	Otherwise, we have to draw a new one, conditional on the new element being part of the sample.
%	
%	
%	For delete, this conditioning wrt the resulting is not helpful: After the deletion, the
%	element must not be part of any sample, there is no choice.
%	Here, we ask for the given input list, whether the given to-be-deleted element is (currently!)
%	part of the sample. This concerns the \emph{given} list including its jump pointers, and hence
%	the probability for the deletee not to be in the sample depends on the current pointer 
%	configuration and the rank of the deletee.
%}
%
%

\subsection{Pseudocode}
\label{sec:pseudocode}

We give full pseudocode for all basic operations on median-of-$k$ jumplists with
leaf size $w$ in this section.

We first list the four procedures \proc{Contains}, \proc{Insert}, \proc{Delete}
and \proc{RankSelect} that constitute the public interface of the data structure;
the other procedures can be thought of as low-level procedures typically
hidden from the user of the data structure.

We assume that jumplists are represented using the following records/objects.
\begin{codebox}
\Procname{Used Objects/Structs}
\li	$\mathrm{JumpList}(\id{head},\id{size})$
\li	$\mathrm{PlainNode}(\id{key},\id{next})$
\li	$\mathrm{JumpNode}(\id{key},\id{next},\id{jump},\id{nsize})$
\end{codebox}
References/pointers to nodes can refer to a PlainNode or to a JumpNode, 
and we assume there is an efficient method to check which type a particular instance has.
If $\id{node}$ is a reference to a PlainNode, we write $\id{node}.\id{key}$ and $\id{node}.\id{next}$
for the key-value and next-pointer fields of the referenced PlainNode;
similarly for the other types.

\begin{codebox}
	\Procname{$\proc{Contains}(\id{jumpList},x)$}
	\zi	\Comment Returns whether $x$ is present in \id{jumpList}
		and how many elements $<x$ it stores.
	\li	$(\id{node},r) \gets \proc{SpineSearch}(\id{jumpList}.\id{head},x)$
	\li	$\id{candidate} \gets \id{node}.\id{next}$ \label{lin:contains-null-check}
	\li	\Return $\bigl(\id{candidate}.\id{key}\isequal x,\, r \bigr)$
\end{codebox}

\begin{codebox}
	\Procname{$\proc{Insert}(\id{jumpList},x)$}
	\zi	\Comment Insert $x$ into \id{jumpList}; does nothing if $x$ is already present.
	\li	$(\id{node},r) \gets \proc{SpineSearch}(\id{jumpList}.\id{head},x)$
%	\li	$\id{candidate} \gets $
	\li	\If $\id{node}.\id{next}.\id{key} \ne x$ \;\Comment $x$ not yet present
			\label{lin:insert-null-check}
		\Then
%	\li		$\id{oldNext} \gets $
%	\li		$\id{newNode} \gets \mathrm{PlainNode}(x,\id{node}.\id{next}) $ 
	\li		$\id{node}.\id{next} \gets \New \mathrm{PlainNode}(x,\id{node}.\id{next})$
			\; \Comment Add new node in backbone.
	\li		$n \gets \id{jumpList}.\id{size} + 1$; \;
			$\id{jumpList}.\id{size} \gets n$
	\li		$\id{jumpList}.\id{head} \gets 
				\proc{RestoreAfterInsert}(\id{jumpList}.\id{head},n+1,r+1)$
		\EndIf
\end{codebox}

\begin{codebox}
	\Procname{$\proc{Delete}(\id{jumpList},x)$}
	\zi	\Comment Removes $x$ from \id{jumpList}; does nothing if $x$ is not present.
	\li	$(\id{node},r) \gets \proc{SpineSearch}(\id{jumpList}.\id{head},x)$
%	\li	$\id{candidate} \gets $
	\li	\If $\id{node}.\id{next}.\id{key} \isequal x$ \;\Comment $x$ is present
			\label{lin:delete-null-check}
		\Then
	\li		$\id{delNode} \gets \id{node}.\id{next}$
	\li		$\id{node}.\id{next} \gets \id{delNode}.\id{next}$
			\; \Comment Remove \id{delNode} from backbone.
	\li		$n \gets \id{jumpList}.\id{size} - 1$; \;
			$\id{jumpList}.\id{size} \gets n$
	\li		$\id{jumpList}.\id{head} \gets 
				\proc{RestoreAfterDelete}(\id{jumpList}.\id{head},n+1,r+1,\id{delNode})$
		\EndIf
\end{codebox}

\begin{codebox}
	\Procname{$\proc{RankSelect}(\id{jumpList},\id{rank})$}
	\zi \Comment Returns the element with (zero-based) rank $\id{rank}$, 
		\ie, the $(\id{rank}+1)$st smallest element.
	\li $\id{head} \gets \id{jumpList}.\id{head}$; \; $r\gets \id{rank}+1$
	\li \Repeat
	\li		\If $r > \id{head}.\id{nsize}$
			\Then
	\li			$r \gets r - (\id{head}.\id{nsize} + 1)$; \; 
				$\id{head} \gets \id{head}.\id{jump}$
	\li		\Else
	\li			$r \gets r-1$; \; 
				$\id{head}\gets \id{head}.\id{next}$
			\EndIf
	\li		\If $r \isequal 0$ \kw{then} \Return $\id{head.\id{key}}$ \kw{end if}
	\li \Until $\id{head}$ is PlainNode
	\li \Repeat
	\li			$r \gets r-1$; \; 
				$\id{head}\gets \id{head}.\id{next}$
	\li	\Until $r\isequal 0$
	\li	\Return \id{head}
\end{codebox}

The above methods make use of the following internal procedures.
We give a spine search implementation
that is augmented to determine also the rank of the found element.
Using the rank makes the procedures to restore the distribution after insertions or deletions 
a bit more convenient to state, and also avoids re-doing key comparisons there.

The given implementation of \proc{SpineSearch}, \proc{Contains}, \proc{Insert}, and \proc{Delete} 
assume a sentinel node $\id{tail}$ at the end
of the linked list that has $\id{tail}.\id{key} = +\infty$, \ie, a value larger than any actual key
value; we do however not count $\id{tail}$ towards the $m$ nodes of a jumplist since $\id{tail}$
can be shared across all instances of jumplists. 
The sentinel may never be the target of any jump pointer.
We could avoid the need for the sentinel at the expense of a null-check
of the next pointer, before comparing the successor's key
(\wref{lin:spine-search-null-check} in \proc{SpineSearch},
\wref{lin:contains-null-check} in \proc{Contains},
\wref{lin:insert-null-check} in \proc{Insert}, and
\wref{lin:delete-null-check} in \proc{Delete}).
Since using the sentinel is a bit more efficient and gives more readable code, 
we stick to this assumption.

\begin{codebox}
	\Procname{$\proc{SpineSearch}(\id{head}, x)$}
	\zi \Comment Returns last node with key $<x$ and its zero-based rank,
	\zi	\Comment \ie, the number of nodes with key $<x$
	\li $\id{rank} \gets 0$; \; $\id{steppedOver} \gets0$; \;
		$\id{lastJumpedTo} \gets \id{head}$
	\li \Repeat \qquad\Comment BST-style search
	\li		\If $\id{head}.\id{jump}.\id{key} < x$
			\Then
	\li			$\id{rank} \gets \id{rank} + \id{head}.\id{nsize} + 1 + \id{steppedOver}$
	\li			$\id{head} \gets \id{head}.\id{jump}$
	\li			$\id{steppedOver} \gets 0$; $\id{lastJumpedTo} \gets \id{head}$
	\li		\Else
	\li			$\id{head}\gets \id{head}.\id{next}$; \;
				$\id{steppedOver} \gets \id{steppedOver}+1$
			\EndIf
	\li \Until $\id{head}$ is PlainNode
	\li $\id{head}\gets\id{lastJumpedTo}$
	\li	\While $\id{head}.\id{next}.\id{key} < x$ \;\Comment Linear search from \id{lastJumpedTo}
			\label{lin:spine-search-null-check}
		\Do
	\li		$\id{head} \gets \id{head}.\id{next}$; \;
			$\id{rank} \gets \id{rank}+1$
		\EndWhile
	\li	\Return $(\id{head},\id{rank})$
\end{codebox}

\begin{codebox}
	\Procname{$\proc{Rebalance}(\id{head},m)$}
	\zi \Comment Draws jump pointers in for $m$ nodes starting with \id{head} (inclusive)
	\zi \Comment according to the randomized jumplist distribution.
	\zi \Comment Returns new first (possibly still \id{head}) and last node of the sublist.
	\li	\If $m \le w$
		\Then
	\li		Replace \id{head} and its $m-1$ successors by $m$ linked PlainNodes.
	\li		\Return $(\text{new head},\text{new end})$
	\li \Else
	\li		$S \gets$ random $k$-element subset of $[2..m-1]$
	\li		$\id{jumpIndex} \gets \proc{Median}(S)$ \; \Comment $S_{(t_1+1)}$ in general
	\li		\Return $\proc{SetJumpAndRebalance}(\id{head},m,\id{jumpIndex})$
		\EndIf
\end{codebox}

\begin{codebox}
	\Procname{$\proc{SetJumpAndRebalance}(\id{head},m,j)$}
	\zi \Comment Rebalances the sublist starting at \id{head} containing $m$ nodes,
	\zi \Comment where we fix the topmost jump pointer to point to the element of rank $j$.
	\zi \Comment Returns new first (possibly still \id{head}) and last node of the sublist.
	\li	$(\id{nextStart},\id{nextEnd}) \gets \proc{Rebalance}(\id{head}.\id{next},j-1)$
	\li	$(\id{jumpStart},\id{jumpEnd}) \gets \proc{Rebalance}(\id{nextEnd}.\id{next},m-j)$
	\li	$\id{nextEnd}.\id{next} \gets \id{jumpStart}$
	\li \Return $\bigl(\New \mathrm{JumpNode}(\id{head}.\id{key}, \id{nextStart},\id{jumpStart},j-1 ), 
			\id{jumpEnd} \bigr)$
\end{codebox}

\begin{codebox}
	\Procname{$\proc{RestoreAfterInsert}(\id{head},m,r)$}
	\zi \Comment Restore distribution in sublist with header \id{head} and of size $m$ 
		after an insertion at position $r$.
	\zi \Comment $m$ is the number of nodes in the sublist, including the new element
	\zi \Comment Returns the new head of the sublist (possibly still \id{head}).
	\li	\If $m\le w+1$ \;\Comment Base case
		\Then
%	\li		\If $n \le w$ \kw{then} \Return \id{head} \kw{end if} \;\Comment Sublist is still a leaf''.
	\li		\If $m \isequal w+1$ \;\Comment We need a new JumpNode, so rebalance.
			\Then
	\li			$(\id{head},\id{end}) \gets \proc{Rebalance}(\id{head},m)$
			\EndIf
	\li	\Else
	\li		$\id{newElementInSample}\gets \proc{CoinFlip}\bigl(\frac{k}{m-2}\bigr)$
			\;\Comment \id{true} with probability $\frac{k}{m-2}$
	\li		\If \id{newElementInSample} \;\Comment Rebalance conditional on new index being in sample.
			\Then
	\li			$\id{newIndex} \gets \max\{r,2\}$
	\li			$S \gets$ random $(k-1)$-element subset of $[2..m-1]\setminus \{\id{newIndex}\}$
	\li			$\id{jumpIndex} \gets \proc{Median}(\id{newIndex} \cup S)$ 
					\; \Comment $S_{(t_1+1)}$ in general
	\li			$(\id{head},\id{end}) \gets \proc{SetJumpAndRebalance}(\id{head},m,\id{jumpIndex})$
	\li		\Else \;\Comment topmost jump pointer can be kept
	\li			\If $r \isequal 0$ \;\Comment new node is head of sublist, so steal successor's jump.
				\Then
	\zi				\Comment Swap roles of the two nodes.
	\li				$\id{succ} \gets \id{head}.\id{next}$
%	\li				Swap $\id{head}.\id{key}$ and $\id{succ}.\id{key}$; \;
%					Swap $\id{head}.\id{next}$ and $\id{succ}.\id{next}$
	\li				Swap $\id{key}$ and $\id{next}$ fields of \id{head} and \id{succ}.
	\li				$\id{head} \gets \id{succ}$
				\EndIf
	\li				$J_1 \gets \id{head}.\id{nsize}$
	\li			\If $r \le J_1 + 1$ \;\Comment New element is in next-sublist.
				\Then
	\li				$J_1 \gets J_1 + 1$; \; $\id{head}.\id{nsize} \gets J_1$
	\li				$\id{succ} \gets \proc{RestoreAfterInsert}(\id{head}.\id{next},J_1,\max\{0,r-1\})$
	\li				$\id{head}.\id{next} \gets \id{succ}$
	\li			\Else \;\Comment New element is in jump-sublist.
	\li				$J_2 \gets m-1-J_1$
	\li				$\id{jumpHead} \gets 
						\proc{RestoreAfterInsert}(\id{head}.\id{jump},J_2,r-1-J_1)$
	\li				\If $\id{jumpHead} \ne \id{head}.\id{jump}$ \;\Comment Have to reconnect backbone
					\Then
	\li					$\id{head}.\id{jump} \gets \id{jumpHead}$
	\li					$(\id{lastInNext},rr) \gets 
							\proc{SpineSearch}(\id{head},\id{jumpHead}.\id{key})$
	\li					$\id{lastInNext}.\id{next} \gets \id{jumpHead}$
					\EndIf 
				\EndIf
			\EndIf
		\EndIf
	\li	\Return \id{head}
\end{codebox}

\begin{codebox}
	\Procname{$\proc{RestoreAfterDelete}(\id{head},m,r,\id{delNode})$}
	\zi \Comment Restore distribution in sublist with header \id{head} and size $m$ 
		after a deletion at position $r$.
	\zi \Comment $m$ is the number of nodes in the sublist, 
		excluding the just deleted element $\id{delNode}$.
	\zi \Comment Returns the new head of the sublist (possibly still \id{head}).
	\li	\If $m\le w$
		\Then
	\li		\If $m \isequal w$ \;\Comment head is a JumpNode, must become PlainNode
			\Then
	\li			$\id{head} \gets \New \mathrm{PlainNode}(\id{head}.\id{key},\id{head}.\id{next})$
			\EndIf
	\li	\Else
	\li		$J_1 \gets 
				\begin{dcases*}
					\id{delNode}.\id{nsize}, & \kw{if} $r\isequal 0$;\\
					\id{head}.\id{nsize}, & \kw{else}.
				\end{dcases*}
			$ 
			\\[-.75\baselineskip]
	\li		$\like[l]{I_1}{p}\gets 
				\begin{dcases*}
					1\,, & \kw{if} $r \isequal J_1+1$; \;\Comment deleted node was jump target \\
					[J_1 \isequal 1 \land r\le 1], & \kw{else if} $k\isequal 1$; 
							\;\Comment special case to avoid $\frac00$ \\
					\frac{t_1}{J_1-1}, & \kw{else if} $r < J_1+1$;\\
					\frac{t_2}{m-1-J_1}, & \kw{else}.
				\end{dcases*}
			$
			\\[-.75\baselineskip]
	\li		$\id{deletedElementInSample}\gets \proc{CoinFlip}(p)$
			\;\Comment \id{true} with probability $p$
	\li		\If \id{deletedElementInSample} 
				\;\Comment Rebalance sublist.
			\Then
	\li			$(\id{head},\id{end}) \gets \proc{Rebalance}(\id{head},m)$
	\li		\Else \;\Comment Topmost jump pointer can be kept.
	\li			\If $r \isequal 0$ \;\Comment Impose deleted head's pointer onto successor.
				\Then
	\zi				\Comment Swap roles of the two nodes.
	\li				Swap $\id{key}$ and $\id{next}$ fields of \id{head} and \id{delNode}.
	\li				Swap $\id{head}$ and $\id{delNode}$.
				\EndIf
	\li			\If $r < J_1 + 1$ \;\Comment Deletion in next-sublist.
				\Then
	\li				$J_1 \gets J_1 - 1$; \; $\id{head}.\id{nsize} \gets J_1$
	\li				$\id{succ} \gets 
						\proc{RestoreAfterDelete}(\id{head}.\id{next},J_1,\max\{0,r-1\},\id{delNode})$
	\li				$\id{head}.\id{next} \gets \id{succ}$
	\li			\Else \;\Comment Deletion in jump-sublist.
	\li				$J_2 \gets m-1-J_1$
	\li				\rlap{$\id{jumpHead} \gets 
						\proc{RestoreAfterDelete}(\id{head}.\id{jump},J_2,r-1-J_1,\id{delNode})$}
	\li				\If $\id{jumpHead} \ne \id{head}.\id{jump}$ \;\Comment Have to reconnect backbone
					\Then
	\li					$\id{head}.\id{jump} \gets \id{jumpHead}$
	\li					$(\id{lastInNext},rr) \gets 
							\proc{SpineSearch}(\id{head},\id{jumpHead}.\id{key})$
	\li					$\id{lastInNext}.\id{next} \gets \id{jumpHead}$
					\EndIf 
				\EndIf
			\EndIf
		\EndIf
	\li	\Return \id{head}
\end{codebox}

\section{Omitted proofs}
\label{app:expectations}

\begin{proof}[\wref{lem:binomial-negative-factorial-moments}]
For $m\in\{1,2\}$, we compute
\begin{align*}
		\E{ X^{\underline {-m}} }
	&\wwrel=
% 		\sum_{x=0}^n 
% 			x^{\underline{m}}
% 			\cdot \Prob{X=x}
% 	\nonumber\\ &\wwrel=
		\sum_{x=0}^n 
			x^{\underline{-m}}\cdot
			\binom{n}{x} \;
			p^{x} q^{n-x}
	\\ &\wwrel=
		n^{\underline {-m}} \, p^{-m} 
		\sum_{x=0}^{n}
			\binom{n+m}{x+m} \;
			p^{x+m} q^{(n+m)-(x+m)}
	\\ &\wwrel=
		n^{\underline {-m}} \, p^{-m} 
			\sum_{x=m}^{n+m} \binom{n+m}{x} p^{x} q^{(n+m)-x}
	\\ &\wwrel{\relwithtext[r]{[binom.\,thm.]}=}
		n^{\underline {-m}} \, p^{-m} 
		\;  
		\biggl((\underbrace{p+q} _ {=1} )^{n+m}  - 
			\sum_{x=0}^{m-1} \binom{n+m}{x}  p^{x} q^{(n+m)-x}
		 \biggr)
		.
\end{align*}
For the first part of the claim, we set $m=1$ and find that the sum reduces
to $q^{n+1}$;
for the second part of the claim, we use $m=2$ and note that the expression
in the outer parentheses is at most $1$.
\end{proof}

\begin{proof}[\wref{lem:E-ln-D}]
We use the following known integral;
see \cite[\href{https://www.wild-inter.net/publications/html/wild-2016.pdf.html\#pf46}{Eq.\,(2.30)}]{Wild2016}:
\begin{align*}
\numberthis\label{eq:logarithmic-beta-integral}
		\int_0^1 z^{a-1} (1-z)^{b-1} \ln(z) \, dz
%\\	
	\wwrel=
		\BetaFun(a,b) \bigl( \psi(a) - \psi(a+b) \bigr)
		,\qquad (a,b > 0) .
\end{align*}
Here $\psi(z) = \frac d{dz} \ln(\Gamma(z))$ is the digamma function. 
Then we find
\begin{align*}
		\E{\ln(D)}
%		\mkern-80mu
%\\	
	&\wwrel=
		\int_0^1 \ln(x)  \frac{x^{t}(1-x)^{t}}
				{\BetaFun(t+1,t+1)} \, dx
\\	&\wwrel{\eqwithref{eq:logarithmic-beta-integral}}
			\psi(t+1)-\psi(k+1)
\\	&\wwrel=
		\harm{t} - \harm{k} 
		,
\end{align*}
and
\begin{align*}
		\E{D \ln(D)}
%		\mkern-80mu
%\\	
	&\wwrel=
		\int_0^1 x \ln(x)  \frac{x^{t}(1-x)^{t}}
				{\BetaFun(t+1,t+1)} \, dx
\\	&\wwrel=
		\frac{\BetaFun(t+2,t+1)}{\BetaFun(t+1,t+1)} 
			\int_0^1 \ln(x)  \frac{x^{t+1}(1-x)^{t}}
				{\BetaFun(t+2,t+1)} \, dx
\\	&\wwrel{\eqwithref{eq:logarithmic-beta-integral}}
		\frac{t+1}{k+1} 
			\bigl(\psi(t+2)-\psi(k+2)\bigr)
\\	&\wwrel=
		\frac{t+1}{2t+2} 
		\bigl(\harm{t+1} - \harm{k+1}\bigr)
\\	&\wwrel=
		\frac12\bigl(\harm{t+1} - \harm{k+1}\bigr) \;.
\end{align*}
\end{proof}

}

\bibliography{jumplist-references.bib}

\ifdraft{
	\clearpage
	\part*{Notes-to-self}
	\printnotestoself
}{}

\end{document}